\renewcommand\footnotetextcopyrightpermission[1]{}
  \def\ttl@Hy@steplink#1{%
    \Hy@MakeCurrentHrefAuto{#1*}%
    \edef\ttl@Hy@saveanchor{%
      \noexpand\Hy@raisedlink{%
        \noexpand\hyper@anchorstart{\@currentHref}%
        \noexpand\hyper@anchorend
        \def\noexpand\ttl@Hy@SavedCurrentHref{\@currentHref}%
        \noexpand\ttl@Hy@PatchSaveWrite
      }%
    }%
  }%
  \def\ttl@Hy@PatchSaveWrite{%
    \begingroup
      \toks@\expandafter{\ttl@savewrite}%
      \edef\x{\endgroup
        \def\noexpand\ttl@savewrite{%
          \let\noexpand\@currentHref
              \noexpand\ttl@Hy@SavedCurrentHref
          \the\toks@
        }%
      }%
    \x
  }%
  \def\ttl@Hy@refstepcounter#1{%
    \let\ttl@b\Hy@raisedlink
    \def\Hy@raisedlink##1{%
      \def\ttl@Hy@saveanchor{\Hy@raisedlink{##1}}%
    }%
    \refstepcounter{#1}%
    \let\Hy@raisedlink\ttl@b
  }%
\def\ttl@gobblecontents#1#2#3#4{\ignorespaces}%
\definecolor{pos}{rgb}{0,0.4,0.8}
\definecolor{neg}{rgb}{1.0,0.3,0.5}
\newcommand{\customspace}{.2cm}
\newcommand{\customsize}{\tiny}
\algnewcommand{\Inputs}[1]{%
  \State \textbf{Inputs: }
  \Statex \hspace*{\algorithmicindent}\parbox[t]{\linewidth}{\raggedright #1}
}
\algnewcommand{\Output}[1]{%
  \State \textbf{Output: }
  \Statex \hspace*{\algorithmicindent}\parbox[t]{\linewidth}{\raggedright #1}
}
\algnewcommand{\Initialize}[1]{%
  \State \textbf{Initialize:}
  \Statex \hspace*{\algorithmicindent}\parbox[t]{\linewidth}{\raggedright #1}
}
\algnewcommand{\comm}[1]{ {\small\ttfamily\textcolor{blue}{// #1}} }
\begin{document}
\title{QuerySnout: Automating the Discovery of Attribute Inference Attacks against Query-Based Systems}

\author{Ana-Maria Cre\c{t}u}
\authornote{Equal contribution.}
\affiliation{
    \institution{Imperial College London}
    \city{London}
    \country{United Kingdom}
}
\email{a.cretu@imperial.ac.uk}

\author{Florimond Houssiau}
\authornotemark[1]
\affiliation{
    \institution{The Alan Turing Institute} 
    \city{London}
    \country{United Kingdom}
}
\email{fhoussiau@turing.ac.uk}

\author{Antoine Cully}
\affiliation{
    \institution{Imperial College London}
    \city{London}
    \country{United Kingdom}
}
\email{a.cully@imperial.ac.uk}

\author{Yves-Alexandre de Montjoye}
\affiliation{
    \institution{Imperial College London}
    London
    \country{United Kingdom}
}
\email{deMontjoye@imperial.ac.uk}

\date{}

\renewcommand{\shortauthors}{Ana-Maria Creţu, Florimond Houssiau, Antoine Cully, \& Yves-Alexandre de Montjoye} 

\begin{abstract}
Although query-based systems (QBS) have become one of the main solutions to share data anonymously, building QBSes that robustly protect the privacy of individuals contributing to the dataset is a hard problem. 
Theoretical solutions relying on differential privacy guarantees are difficult to implement correctly with reasonable accuracy, while ad-hoc solutions might contain unknown vulnerabilities.
Evaluating the privacy provided by QBSes must thus be done by evaluating the accuracy of a wide range of privacy attacks.
However, existing attacks against QBSes require time and expertise to develop, need to be manually tailored to the specific systems attacked, and are limited in scope.
In this paper, we develop QuerySnout, the first method to automatically discover vulnerabilities in query-based systems.
QuerySnout takes as input a target record and the QBS as a black box, analyzes its behavior on one or more datasets, and outputs a multiset of queries together with a rule to combine answers to them in order to reveal the sensitive attribute of the target record.
QuerySnout uses evolutionary search techniques based on a novel mutation operator to find a multiset of queries susceptible to lead to an attack, and a machine learning classifier to infer the sensitive attribute from answers to the queries selected.
We showcase the versatility of QuerySnout by applying it to two attack scenarios (assuming access to either the private dataset or to a different dataset from the same distribution), three real-world datasets, and a variety of protection mechanisms.
We show the attacks found by QuerySnout to consistently equate or outperform, sometimes by a large margin, the best attacks from the literature. 
We finally show how QuerySnout can be extended to QBSes that require a budget, and apply QuerySnout to a simple QBS based on the Laplace mechanism.
Taken together, our results show how powerful and accurate attacks against QBSes can already be found by an automated system, allowing for highly complex QBSes to be automatically tested ``at the pressing of a button''. We believe this line of research to be crucial to improve the robustness of systems providing privacy-preserving access to personal data in theory and in practice\footnote{Our code is available at \url{https://github.com/computationalprivacy/querysnout}.}\footnote{This is an extended version of our paper  published in the ACM CCS 2022 conference that includes the Appendix.}.
\end{abstract}

\maketitle
\pagestyle{plain}

\section{Introduction}
\label{sec:introduction}

Our ability to collect and store data has exploded in the last decade. Coupled with the development of AI and new computational tools, this data has the potential to drive scientific advancements in healthcare~\cite{topol2019high} and the social sciences~\cite{lazer2009social}, and promises to revolutionize the way businesses and governments function. 

However, most of this data is either personal or linked to individuals in one way or another. This raises serious privacy concerns, and as such this data falls under the scope of data protection laws such as the European Union's General Data Protection Regulation~\cite{eu-gdpr,greenleaf2021global}. Finding solutions to use data for good while preserving our fundamental right to privacy is a timely and crucial question.

Query-based systems (QBS), controlled interfaces through which analysts can query the data, have the potential to enable privacy-preserving anonymous data analysis at scale. As the curator keeps control over the data, they can audit queries sent by analysts and ensure that the answers returned do not reveal individual-level information. Typical queries include histograms, counts, correlations between attributes, and other aggregates over individual records. QBS interfaces can range from an online interface and API~\cite{okeefe2008table,curtis2018openprescribing}, to languages such as SQL~\cite{mcsherry2009privacy,francis2017diffix} and the submission of scripts~\cite{oehmichen2019opal,opensafely}.

It has however long been known that only releasing aggregate information is not sufficient to protect privacy.
As early as 1979, Denning et al.~\cite{denning1979tracker} theorized difference attacks (``trackers'') against databases accessible through user-specified counts.
Researchers have shown that releasing marginals or simple counts can reveal the presence of a target individual in the private dataset~\cite{homer2008resolving,dwork2015robust}.
Famously, Dinur and Nissim proved in 2003 that a database can be reconstructed with good accuracy from a large number of aggregate statistics~\cite{dinur2003revealing}.
This issue is particularly acute for QBSes, where attackers can design queries to infer information about specific people, including by exploiting vulnerabilities or implementation bugs of the system. 

In response to these risks, increasingly sophisticated defense mechanisms have been put in place. These include combining query set size restriction with noise addition mechanisms~\cite{francis2017diffix,okeefe2008table}, the use of unbounded static noise~\cite{francis2017diffix,dwork2006calibrating}, the introduction of limits on the number of queries (e.g., as a privacy budget in Differential Privacy~\cite{mcsherry2009privacy}), and even online evaluation~\cite{nabar2008survey} and rewriting of queries~\cite{uber_flex}. Computer security tools such as access control, logging queries, code verification, and AI-based anomaly detection mechanisms are typically then deployed on top of QBS-specific mechanisms.

While these measures have helped prevent and mitigate privacy risks~\cite{francis2017diffix}, the risk of unknown strong ``zero-day'' attacks has stalled the development and deployment of QBSes. Manually designing and implementing attacks against complex and expressive QBSes is a difficult and painstaking process. It typically requires careful analysis of the system by experts and can take months. Furthermore, existing attacks have only exploited a small subset of the syntax of modern expressive QBSes.
Limiting the risk of existence of strong unmitigated attacks is thus essential to unlock the potential of QBSes and make individual-level data available to researchers and companies while strongly preserving privacy in practice.

\textbf{Contributions.} We here propose QuerySnout, the first method to automatically discover privacy vulnerabilities in query-based systems. We frame the discovery of attacks against a QBS as a black-box optimization problem. Specifically, we formalize an attack as a multiset of queries jointly with a mathematical rule (e.g., a machine learning model) to combine answers to the queries in order to reveal a particular secret. 
Given a threat model and black-box access to a QBS, we optimize the attacks with respect to their performance at inferring the secret.

At a high level, QuerySnout analyzes the query answering behavior of the QBS for patterns that can be used to infer the secret. QuerySnout is fully automated, combining (1) evolutionary search techniques to discover the right set of queries to ask with (2) a machine learning model (``rule'') trained to infer the secret from query answers. To efficiently explore the search space of possible attacks, we design a novel mutation operator tailored to this problem.

We instantiate our approach on \textit{attribute inference attacks}, where the secret is the sensitive attribute of a target record. We study two attack scenarios. The first scenario (AUXILIARY) assumes that an attacker has access to an auxiliary dataset similar to the private dataset, e.g., drawn from the same distribution. This is a common assumption across the broader literature~\cite{pyrgelis2017knock,shokri2017membership}. The second scenario (EXACT-BUT-ONE) assumes that an attacker has access to the entire private dataset protected by the QBS except for the sensitive attribute. Attack models like this one, relying on a very strong attacker, are used in the literature to evaluate privacy protections in the ``worst-case'' scenario~\cite{jagielski2020auditing}. They can help uncover flaws in the system design or implementation and audit systems.

We use QuerySnout to attack two real-world query-based systems, Diffix~\cite{francis2017diffix} and TableBuilder~\cite{okeefe2008table}, and a generic query-based system. We show our attack to be highly successful against all three systems when protecting three real-world datasets, matching or outperforming previous expert-designed attacks~\cite{chipperfield2016,rinott2018, gadotti2019signal}.
A post-hoc analysis of the attacks found by QuerySnout in the AUXILIARY scenario suggests that they exploit the same vulnerability as the manual attacks, but more effectively.
We propose a heuristic to extend our method to budget-based QBSes (such as those guaranteeing differential privacy~\cite{dwork2006calibrating}), and use QuerySnout to attack a simple QBS based on the Laplace mechanism.
We show that our attack achieves near-optimal accuracy for $\varepsilon \in \{5,10\}$.

Finally, we discuss how our method can be extended to other attack models -- e.g., to perform membership inference attacks -- and larger query syntaxes.
Our results suggest that QuerySnout can be used to evaluate the privacy protection offered by QBSes and help improve their design by identifying vulnerabilities.

\section{Background}
\label{sec:background}

We consider a \textit{data curator} entity, such as a business or a government agency, holding data about a set of users denoted by $U \subset \mathcal{U}$, with $\mathcal{U}$ a population of users.
The data consists of values for an ordered set of attributes $\attrset = (a_1, \dots, a_{n})$, e.g., age and nationality.
Each attribute $a \in \attrset$ has a set $\valueset_a$ of acceptable values.
We denote by \textit{individual record} $r^u$ the data available about a user $u$, consisting of the corresponding values for attributes in $\attrset$: $r^u \in \mathcal{V}_{a_1} \times \dots \times \mathcal{V}_{a_{n}}$.
We denote by $r^u_{\attrset'}$ the restriction of a record to a subset of attributes $\attrset' \subset \attrset$ and by $r^u_a$ the restriction to one attribute $a \in \attrset$.
For instance, if the attributes collected are $\attrset = (age, nationality)$, a person's record could be (19, Bulgaria). Given a user set $U$, we call \textit{dataset} the multiset of individual records $D$ for users in $U$. This means that the same record can appear more than once if different users in $U$ have the same values for these attributes.

We assume that the data curator allows data analysts to access information about the dataset $D$ through a query-based system (QBS). The QBS allows the data analyst to retrieve answers to queries about the dataset without directly accessing the individual records~\cite{de2018privacy}. The interface can uses a query language such as SQL or a GUI to define the semantics of queries that can be asked. Formally, we denote by \textit{query space} $\mathcal{Q}$ the set of queries that can be asked to the system.
We denote by $T:\mathcal{D}\times\mathcal{Q}\rightarrow\mathbb{R}$ the function that gives the \textit{true answer} to a query over a dataset.

Formally, we consider a query-based system to be a randomized function $R: \mathcal{D} \times \mathcal{Q} \rightarrow \mathbb{R}$ assigning to a dataset and query pair $(D, q) \in \mathcal{D} \times \mathcal{Q}$ a real-valued random variable $R(D, q)$. The answers provided by the QBS are usually designed to be \textit{similar} to the true answer ($R(D,q) \approx T(D,q)$) but not equal, in order to protect user privacy. We use the more general notion of a random variable because to preserve the privacy of users, query-based systems commonly implement mechanisms for post-processing that involve randomization (e.g. noise addition). When the answer is fixed, $R(D, q)$ is a deterministic random variable.

Multiple building blocks are typically combined to preserve privacy. These include noise addition~\cite{denning1980secure} with a range of distributions (e.g. Laplace~\cite{dwork2006calibrating},  Gaussian~\cite{francis2017diffix}, uniform~\cite{fraser2005proposed}), suppression of answers below a threshold (called \textit{query set size restriction})~\cite{denning1979tracker}, and restrictions on the set of allowed queries $\mathcal{Q} \subset \mathcal{Q}'$ or the number of queries that can be meaningfully answered\footnote{Although this mechanism cannot be expressed formally as $R(D,Q)$, since it requires memory of the number of queries performed. We use this notation for simplicity.}~\cite{dwork2006calibrating}.

\section{Attack model}
\label{sec:attack_model}

We propose a general targeted attack against query-based systems, which we call \textit{automated query discovery attack}. The attack is composed of a search mechanism for a multiset of queries and a rule to combine them. While we focus on attribute inference attacks, our approach can be extended to other attacks, e.g., membership inference attacks.

\subsection{Attacker access to the query-based system}

We assume that the attacker has access to the query-based system protecting the dataset of interest (\textit{target QBS}), in the sense that they can send queries to it. We furthermore assume that the number of queries that can be performed on the target QBS is limited, e.g. to a few hundreds, as queries are typically logged and rate limited.

We also assume that the attacker has black-box access to the QBS software. The QBS software would typically be available freely or for a fee, potentially in compiled form. Alternatively, the attacker might replicate the protection deployed by the target QBS based on public information.
Formally, this means that the attacker can retrieve samples from $R(D', q)$ for any query $q \in \mathcal{Q}$ and dataset $D' \in \mathcal{D}$. Since the attacker chooses $D'$, they also know the true query answer $T(D', q)$ and can leverage this signal to devise attacks.
Note that QBSes are typically initiated with a \textit{seed} for pseudo-random noise generation: we assume that each QBS is initiated with a different seed, and that the attacker does not know the seed used in the target QBS.

\subsection{Attribute inference attack}
\label{subsec:automated-query-discovery-attack}

The attacker's goal is to infer the target user's value for one of the attributes (the \textit{sensitive} attribute) $s = r^u_{a^*},~a^* \in \attrset \setminus \attrset'$. We assume that the attacker knows part of the record of a \textit{target user} $u \in U$ consisting of the values for a subset of attributes $\attrset' \subset \attrset$. For simplicity, we also assume that the attacker knows (1) that the target is in the dataset ($u \in U$), and (2) that the target's record is unique in the dataset, given all known attributes ($\forall v \in U, v \neq u:~r^u_{\attrset'} \neq r^v_{\attrset'}$). 

Formally, the attacker's goal is to devise both a multiset of queries $\left(q_1, \ldots, q_k\right) \in \mathcal{Q}^k$ and a rule $G:\mathbb{R}^k\rightarrow\{0,1\}$ to combine the answers to the queries $R(D, q_1), \ldots, R(D, q_k)$ to retrieve the sensitive attribute:
$$\hat{s} = G(R(D, q_1), \ldots, R(D, q_k)).$$ The attack is considered successful if the predicted value $\hat{s}$ for the sensitive attribute matches the correct value, i.e., $\hat{s} = s$.

We focus on users that are unique in the dataset $D$ to simplify the evaluation of the privacy gain provided by the QBS alone.
An attacker would indeed be able to perfectly infer the sensitive attribute $a^*$ of unique users if the dataset were to be released as-is, allowing us to evaluate the privacy gain provided by the QBS. \footnote{Note that technically, this property is true for a larger class of users, who are so-called \textit{value-unique}~\cite{gadotti2019signal}: all records who share the same known attributes have the same sensitive attribute.
We focus on unique users for simplicity (who are all value-unique), although the method in this paper applies similarly to value-unique users.}

\subsection{Auxiliary knowledge on the target dataset}
\label{subsec:auxiliary-knowledge-dataset}

We consider two specific attack scenarios, with different assumptions on the attacker's knowledge about the private dataset:
\begin{itemize}
    \item \textbf{AUXILIARY}: the attacker has access to a dataset $D'$ similar to the private dataset $D$ (e.g., from the same distribution).
    \item \textbf{EXACT-BUT-ONE}: the attacker has perfect knowledge of the private dataset $D$, except for the target record's sensitive attribute $r^u_{a_n}$. 
\end{itemize}

The latter (EXACT-BUT-ONE) is the typical assumption made to evaluate  ``worst-case'' attacks~\cite{balle2022reconstructing,jagielski2020auditing}, evaluating how much information a very strong attacker would be able to infer. The former (AUXILIARY), on the other hand, is typically used to evaluate privacy risks in practice, i.e., as a more realistic setup~\cite{shokri2017membership,pyrgelis2017knock,stadler2020synthetic}. 

The attacker's knowledge allows them to define a training distribution $\traindist$ and a validation distribution $\valdist$ (that can be identical), from which to sample datasets. Datasets generated from the former are used to train the rule $G$, while datasets from the latter are used to estimate the fitness (test accuracy) of a multiset of queries.

In the AUXILIARY scenario, the attacker generates auxiliary datasets by sampling records uniformly at random without replacement from $D'$ and appending the target record $r^u_{\attrset'}$ with a random value for the sensitive attribute $r^u_{a_n}$, which defines $\attdist$. In practice, the attacker divides $D'$ in a \textit{training} and a \textit{validation} dataset, which define $\traindist$ and $\valdist$.

In the EXACT-BUT-ONE scenario, auxiliary datasets are obtained by choosing a random value for the target user's sensitive attribute, which defines $\attdist$. In this setup, the training and validation distributions are identical.

Note that in both scenarios, we randomize the target user's sensitive attribute. This breaks possible correlations with known attributes, which could be used to infer the value of the sensitive attribute even if the target user does not contribute their data. Randomization implies that the baseline success rate of an attack without access to data is 50\% when the sensitive attribute is binary.

\section{Attack methodology}
\label{sec:attack_methodology}
\subsection{Overview of QuerySnout}

In this section, we present QuerySnout, our method for automating the discovery of attribute inference attacks against query-based systems. The goal of QuerySnout is to find both a multiset of $m$ queries $q_1, \ldots, q_m \in \mathcal{Q}$ and a rule $G$ to combine the answers to these queries to retrieve $r^u_{a_n}$, the target record's value for the sensitive attribute $a_n$. We here assume, without loss of generality, that the last attribute $a^* = a_n$ is sensitive and that the remaining attributes are known auxiliary information about the target $\attrset'=(a_1, \ldots, a_{n-1})$. We frame the discovery of attacks against a QBS as an optimization problem over a search space of solutions consisting of all multisets of $m$ queries of a specific structure (Sec.~\ref{subsec:attack-search-space}).
QuerySnout optimizes solutions in this space with regards to the accuracy of the attribute inference attack obtained by applying an automatically trained rule $G$ on the answers to this multiset of queries (Sec.~\ref{subsec:automated-learning-rule}).
To optimize for attacks, QuerySnout uses an evolutionary algorithm based on a novel mutation algorithm (Sec.~\ref{subsec:search-space-exploration}). 

\subsection{Search space of solutions}\label{subsec:attack-search-space}

\begin{figure*}[htbp!]
\centering
\includegraphics[width=\linewidth]{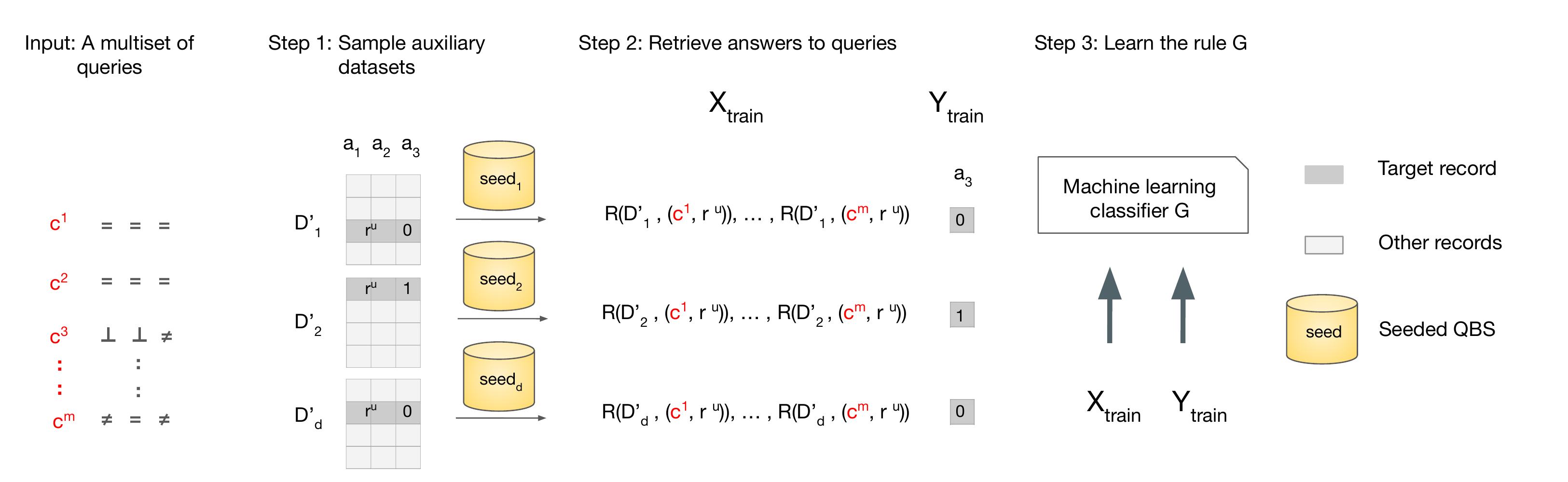}
\caption{\textbf{Automated learning of a rule $G$.} We illustrate our approach to automatically combine the answers to a multiset of queries, via a rule $G$ learned using a machine learning model. In our example, partial records $r^u$ consist of values for attributes $a_1$ and $a_2$, while $a_3$ is the sensitive attribute. The attacker samples $d$ auxiliary datasets (randomizing the value of $a_3$ for the target record), then retrieves $d\times m$ answers to the queries (in red) from (differently seeded) QBSes protecting the datasets. The answers together with the target's sensitive attributes from the corresponding datasets are used to train a classifier $G$.
}
\label{fig:learning-the-rule}
\end{figure*}

Given a restricted \textit{query search space} $\mathcal{Q}_s \subset \mathcal{Q}$, we define a \textit{solution} as an unordered list of $m$ queries $q_1 \ldots, q_m \in \mathcal{Q}_s$.
Formally, solutions are multisets of queries, meaning that the same queries can be repeated multiple times.
We denote by $\mathcal{S}^m$ the set of all \textit{solutions} that is explored by the evolutionary algorithm. We describe it formally as follows:
\begin{gather}
\begin{aligned}
\mathcal{S}^m := &\Big\{
    \{(q^1, m_1), \ldots, (q^k, m_k) \}:q^1,\dots, q^k \in \mathcal{Q}_s^n,\\ 
    & i \neq j \implies q_i \neq q_j, \forall i, m_i \in \mathbb{N}, ~\sum_{i=1}^k m_i = m \Big\}
\end{aligned}
\end{gather}
where $m_i$ denotes the multiplicity of the $i$-th unique query.

Note that we here assume that the order of the queries does not impact the way the QBS answers them. All the systems we consider satisfy this property. In the discussion, we explain how our method could be adapted to systems for which query order matters.

\textbf{Query search space.} For general query syntaxes (e.g., SQL), the query space $\mathcal{Q}$ is extremely large, even possibly unbounded.
In this work, we restrict -- as a starting point -- the search to a simple, yet still very large \textit{query space} $\mathcal{Q}_s \subset \mathcal{Q}$ consisting of counting queries that select records via a conjunction of up to $n$ simple conditions, with no more than one condition per attribute. 

Given a set of condition operators $\conditionset_s$ (such as ``equal to'' or ``different from''), we write $a_i \; c_i \; v_i$ to denote a condition on the $i$-th attribute of operator $c_i \in \conditionset_s$, with value $v_i \in \valueset_{a_i}$ belonging to set of acceptable values for the $i$-th attribute. To simplify the search and exploit the information available to the attacker about the target record, we restrict conditions relating to a known attribute to use the target record's value $v_i = r^u_{a_i}, i \in \attrset'$.
A condition on the unknown, sensitive attribute may use any of the acceptable values $v_n \in \valueset_{a_n}$. Formally, the query search space we consider, $Q_s$, consists of queries of the form:
\begin{gather}
\begin{aligned}
    \text{SELECT} & \; \text{COUNT(*)} \\
    \text{WHERE} & \; a_{1} \; c_1 \; r^u_{a_1} \; \text{AND} \; \ldots \;  \text{AND} \; a_{n-1} \; c_{n-1} \; r^u_{n-1} \\ \; & \text{AND} \; a_n \; c_n \; v_n 
\end{aligned}
\end{gather}

In this work, we set the set of condition operators to $\conditionset_s = \{ \neq, =, \perp \}$, meaning that conditions can be of operator $=$ (equal to), $\neq$ (different from) or $\perp$ indicating that there is no condition on the $i$-th attribute. In other words, a condition relating to the $i$-th attribute can be: (1) $a_i = v_i$, (2)  $a_i \neq v_i$ or (3) no condition, which we write as $a_i \perp v_{i}$ as a convention (the value is ignored); with $v_i \in \mathcal{V}_{a_i}$.

Finally, we assume -- for simplicity -- the sensitive attribute to be binary: $\valueset_{a_n} = \{ 0, 1\}$. Under this assumption, the conditions $a_n = 0$ and $a_n \neq 1$ are equivalent, and there is a one-to-one mapping between the query search space $\mathcal{Q}_s$ and $\conditionset_s^{~n}$, as one choice of condition operators $(c_1, \ldots, c_n) \in \conditionset_s^n$ corresponds to exactly one query with operator (2) and conditions $a_1 \; c_1 \; r^u_{a_1}, \,\, a_2 \; c_2 \; r^u_{a_2}, \,\, \ldots, \,\, a_{n-1} \; c_{n-1} \; r^u_{a_{n-1}}$ and $a_n \; c_n \; 0$, respectively.
The search algorithm can thus represent queries as strings of $k$ operators.

As an example, let attributes $a_1$, $a_2$ and $a_3$ denote a person's $age$, $nationality$ and $diagnosis$ for a disease. The query with condition operators $(c_1, c_2, c_3) = (=, \perp, \neq)$ applied to a target record of known age and nationality $(v_1, v_2, v_3)= (19, Bulgaria, 0)$ yields the query: ``How many people in the database have an age of 19 and a positive diagnosis?'' Note that there is no condition on the $nationality$ attribute and that the $\neq$ operator on the sensitive attribute is equivalent to selecting users with a positive diagnosis. 

We restrict the query space for two main reasons. 
First, a large number of attacks from previous work can be performed using only queries expressed this way, and we can thus compare our results with manual attacks.
For instance, averaging attacks~\cite{denning1980secure}, difference attacks~\cite{denning1979tracker} and the differential noise-exploitation attack against Diffix~\cite{gadotti2019signal} can all be written with queries in $\mathcal{Q}_s$.
Second, the problem is already computationally 
challenging, as the size of the attack search space is very large. Indeed, when the sensitive attribute is binary, the cardinality of the query space is $|\mathcal{Q}_s| = |\conditionset^n|=3^n$, which increases exponentially with the number of known attributes $n$. The size of the attack search space $\mathcal{S}^m$ is therefore equal to the number of multisubsets with $m$ elements\footnote{Note that the size of $\mathcal{S}^m$ is \textit{not} $\left|\mathcal{Q}_s\right|^m$, as the order does not matter.} from a set of size $3^n$~\cite{feller1968probability}:
$\binom{3^n+m-1}{m} \approx \frac{3^{n (m-1)}}{m!}$,
where the approximation holds if $m \ll 3^n$. For typical values such as those used in this paper, e.g., $m=100$ and $n=6$, an exact computation yields an extremely large attack search space size of $\approx 1.33 \times 10^{131}$. This emphasizes the importance of being able to search for solutions efficiently.

\subsection{Automated learning of a rule $G$}
\label{subsec:automated-learning-rule}

\begin{figure}[htbp!]
\centering
\includegraphics[width=\linewidth]{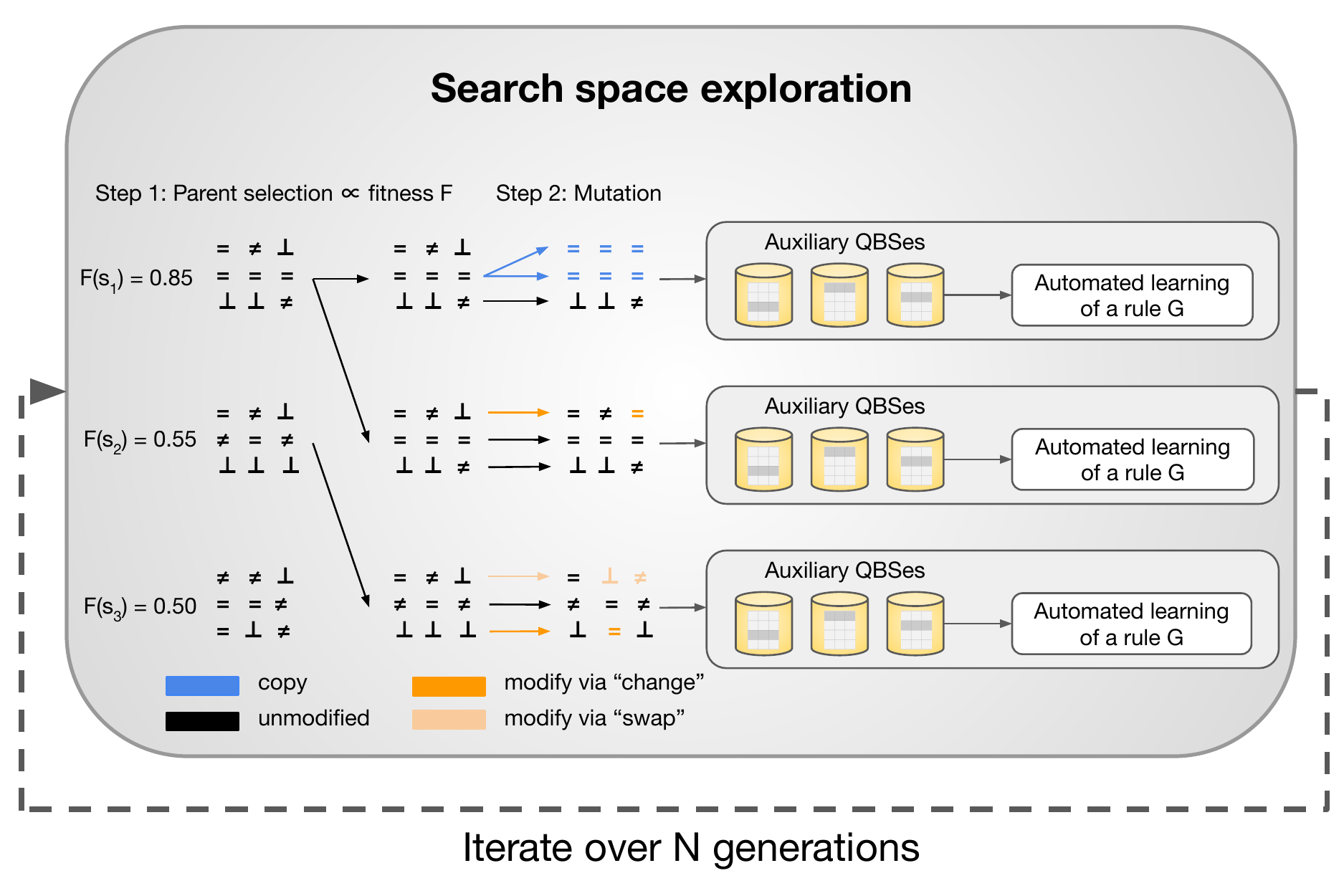}
\caption{\textbf{Illustration of how QuerySnout explores the search space using evolutionary algorithms.} In this example, the search space consists of multisets of $m=3$ queries relating to $n=3$ attributes. Solutions with higher fitness are more likely to be selected. We use different colors to illustrate the changes to a query and black for unmodified query operators.}
\label{fig:search-space-exploration}
\end{figure}

In order to evaluate whether a solution can be used to perform an attack, we propose a method to automatically learn a rule $G$ to combine the answers to the queries of a solution $\{q_1, \ldots, q_m\} \in \mathcal{S}^m$ to perform an attribute inference attack (Fig.~\ref{fig:learning-the-rule}).
Our method uses the training distribution $\traindist$ to sample $d$ auxiliary datasets $D'_1, \ldots, D'_d \sim \traindist$. Each dataset is protected by a QBS initialized with different seeds for their random number generator. We perform the queries on each QBS to obtain the answers $(R(D'_i, q_1), \ldots, \\ R(D'_i, q_m))$, $i=1, \ldots, d$. We denote by $X_{\text{train}}$ the dataset of $d$ samples of $m$ features each (a feature is a query answer) obtained in this way. We also denote by $Y_{\text{train}}$ the $d$ values for the target record's sensitive attribute in the corresponding auxiliary datasets. Finally, we train a binary classification model $G$ on $(X_{\text{train}}, y_{\text{train}})$ to infer the value of the sensitive attribute from the answer to queries. The model will be used as the rule $G$ to combine the answers to queries in the solution.

Our approach assumes that an optimal rule $G$ can be effectively approximated using machine learning models. For instance, the averaging attack described in Sec.~\ref{subsec:manual-attacks} computes a linear combination of the query answers, which can be represented by a logistic regression. Furthermore, multilayer perceptrons are known to be universal approximators~\cite{hornik1989multilayer}, which makes them even more apt to model an arbitrary rule. Our approach to learn the rule $G$ automatically given a set of queries is similar to the one proposed by Pyrgelis et al. to perform membership inference attacks on aggregate statistics~\cite{pyrgelis2017knock}, except we apply it to attribute inference attacks and use a broader and more flexible range of queries.

\subsection{Search space exploration}
\label{subsec:search-space-exploration}

We explore the search space for queries that can be combined to perform highly accurate attacks using evolutionary algorithms. Evolutionary algorithms are a family of optimization algorithms particularly well suited when the structure of the problem is discrete, unknown or too complex to be described mathematically, and the search space is very large~\cite{mitchell1998introduction}.
We refer the reader to the Appendix for a brief introduction to the topic.
Our approach (illustrated in Fig.~\ref{fig:search-space-exploration}) improves a \textit{population} of solutions over time by applying small random changes, called \textit{mutations}, to the solutions. We propose a mutation operator tailored to the task that allows our procedure to efficiently explore the search space. The solutions are optimized with regards to their \textit{fitness}: their ability to infer the target record's sensitive attribute.

Algorithm~\ref{alg:evolutionary-attack-search} details our procedure to explore the space of attacks using evolutionary algorithms. We start from a population of $P$ solutions $s_1, \ldots, s_P$. Each solution consists of $m$ queries $s_j = (c^{j1}, \dots, c^{jm})$, which are initialized uniformly at random: $c^{ji} \gets \mathcal{U}(\conditionset^n), i = 1,\dots,m$.
The algorithm runs for $N$ generations.
In each generation, we first evaluate the fitness of each of the $P$ solutions. The fitness function, which we describe in the next paragraph, estimates the accuracy of the attribute inference attack obtained with the queries in the solution.
Second, we sort the population decreasingly according to the fitness of each solution.
Third, we create a new population consisting of (1) a number $P_e$ of \textit{elites}~\cite{dejong1975analysis}, i.e., the $P_e$ solutions in the current population with the highest fitness, and (2) a number $P-P_e$ of \textit{offsprings}.
Each offspring is generated by applying a mutation to a \textit{parent} solution.

\begin{algorithm}[t]
\caption{\textsc{QuerySnoutEvolutionarySearch}}
\label{alg:evolutionary-attack-search}
    \begin{algorithmic}[1]
        \Inputs{
        $P$: Population size (number of solutions). \\
        $m$: Number of queries in a solution. \\
        $P_e$: Number of elites. \\
        $N$: Number of generations. \\
        $D'_{\text{aux}}$: Auxiliary datasets used to evaluate the fitness.\\
        $p_\text{mut}$: Mutation parameters.
        }
        
        \Output{
        $population$: A population of $P$ solutions in the search \\ space of solutions $\mathcal{S}^m$.
        }
    
        \Initialize{
        $population \gets [random\_solution(m)$ for $i = 1$ to $P]$
        }
        \For{$g = 1$ to $N$} \State{$fitnesses \gets [\textsc{EvaluateFitness}(solution, D'_{\text{aux}})$ for $solution$ in $population]$}
        \State{\comm{Sort the population by fitness.}}
        \State{$sort\_descending(population, fitnesses)$}
        \State{\comm{Pass the elites unchanged to the next generation.}}
        \State{$new\_population \gets population[: P_e]$ }
        \For{$i=1$ to $P - P_e$}
        \State{$parent \gets select\_parent(population, fitnesses)$}
        \State{$offspring \gets \textsc{ApplyMutation}(parent, p_{\text{mut}})$}
        \State{$new\_population.append(offspring)$}
        \EndFor
        \State{$population \gets new\_population$}
        \EndFor
    \end{algorithmic}
\end{algorithm}

The parents are selected from the current population by sampling with replacement using the \textit{biased roulette wheel}~\cite{sastry2005genetic}: the probability to sample a solution is equal to its fitness divided by the sum of fitnesses of solutions in the population. This ensures that solutions with higher fitness are more likely to generate offsprings.

\textbf{Fitness evaluation.}
Algorithm~\ref{alg:evaluate-fitness} in the Appendix describes our procedure to evaluate the fitness. The procedure uses training and validation auxiliary datasets $D'_{\text{aux}} = (D'_{\text{train}}, D'_{\text{val}})$. The datasets in $D'_{\text{train}}$ and $D'_{\text{val}}$ are (1) sampled upon initialization from $\traindist$ and $\valdist$, respectively, then (2) protected by individual QBS instances with different seeds initializing their random number generator.
To evaluate the fitness, we train a rule $G$ to combine the answer of queries to predict the sensitive attribute of the target record, using auxiliary datasets $D'_\text{train}$, as described in Sec.~\ref{subsec:automated-learning-rule}.
We measure the accuracy of this prediction on training and validation data, $a_\text{train}$ and $a_\text{val}$ respectively.
We use $\min(a_\text{train}, a_{\text{val}})$ as fitness in order to minimize the effect of randomness in the evaluation. Indeed, the accuracy estimates are noisy, and we found empirically that using $a_\text{val}$ as fitness leads to the algorithm selecting solutions that were \textit{lucky} when estimating the fitness, rather than really superior. The effect was however minor, occurring mostly when the algorithm had converged.

\textbf{Mutation operator.}
The mutation operator aims to explore the local space around known good solutions.
The mutation operator (we refer the reader to Algorithm \ref{alg:apply-mutation} in Appendix~\ref{appendix:apply-mutation} for the pseudocode), takes as input a \textit{parent} solution and makes small changes to its queries in order to generate an \textit{offspring} solution.
Each query in the parent solution is treated separately, and is either copied with probability $p_{\text{copy}}$, modified in place with probability $p_{\text{modify}}$, or left unmodified with probability $1-p_{\text{copy}}-p_{\text{modify}}$.
Note that copying a query means that we first add the query -- as is -- to the offspring and then perform the following on a copy of it. If the system is deterministic, we modify the copy; otherwise, with equal probability we either keep the unmodified copy or replace it with a modified version.
We add the resulting query to the offspring.
We distinguish between the two cases because asking a query repeatedly provides no additional information when the QBS is deterministic.
Finally, as the offspring may now contain more than $m$ queries (due to copying), we select a random subset of $m$ queries.
This mutation operator is inspired by the fact that many attacks on QBSes~\cite{denning1980secure,gadotti2019signal} use pairs of similar queries, or repeat the same queries multiple times.

\textbf{Modifying a query.}
We now describe our procedure to modify a query consisting of $n$ operators $c=(c_1, \ldots, c_n) \in \conditionset^n$ (and refer the reader to Algorithm~\ref{alg:modify-query} in Appendix~\ref{appendix:apply-mutation} for the pseudocode).
For each attribute $a_i$, we either (1) ``change''  the corresponding operator $c_i$ with probability $p_{\text{change}}$ by replacing it with a different value in the operator set which we sample uniformly at random, (2) swap  the operators between the $i$-th attribute and another attribute that has not been swapped yet with probability $p_{\text{swap}}$, or (3) leave the operator unchanged.
To ensure that no attribute pair is more likely to be swapped compared to the others, we randomly permute the order in which this procedure considers the entries of $c$.
We introduce the ``swap'' operation to exploit symmetry between attributes: if a given attack is successful, it is likely that swapping the conditions of two attributes will lead to a similarly successful attack. Note that although ``swap'' and ``change'' can produce the same outcome, they do so with different probabilities and are thus not equivalent.

\textbf{Choosing the mutation parameters.} The mutation parameters are set such that, on average, we copy (modify) a fraction $p_{\text{copy}} (p_{\text{modify}})$ of the $m$ queries in a solution, a common heuristic in evolutionary search. The same heuristic also applies to parameters $p_{\text{swap}}$ and $p_{\text{change}}$. In practice, the ranges of mutation parameters should be informed by the exploration-exploitation trade-off: making more changes to a solution allows to explore the space more, but making too many changes at once might hinder the improvement of solutions over time.

\section{Experimental setup}\label{subsec:experimental_setup}

In this section, we first present the datasets we use to evaluate our automated attacks. Second, we present how we instantiate the auxiliary knowledge in the AUXILIARY and EXACT-BUT-ONE scenarios. Third, we describe the query-based systems against which we evaluate our automated attacks. Finally, we describe the system-specific manual attacks we compare against.

\subsection{Datasets} 
We use three publicly available datasets: Adult~\cite{adults1996}, Census~\cite{censusincome2000} and Insurance~\cite{insurance2000}. Adult contains 48482 individual records of 14 socio-demographic attributes each. Census contains 299285 records of 41 socio-demographic attributes each. Demographic attributes are, for instance, ``age'', ``education'' and ``occupation''. We assign ``income'' as the binary sensitive attribute for both Adult and Census and randomize it (50:50) as described below. Insurance contains 9822 records of 86 attributes about customers of a car insurance company, aggregated at the zipcode level. We only use the 43 socio-demographic attributes and add a randomized (50:50) sensitive attribute.

\subsection{Instantiating the auxiliary knowledge} 
For each dataset, in each repetition of the experiments, we sample $n-1 = 5$ attributes uniformly at random without replacement and discard the others. These are the attributes $\attrset'$ whose values for a target record we assume to be known to the attacker. 
We then randomly partition the dataset between a training auxiliary dataset $D_\text{train}$, a validation auxiliary dataset $D_\text{val}$, and a testing dataset $D_\text{test}$ of equal sizes. We select 100 unique \textit{target records} from $D_\text{test}$ uniformly at random without replacement. For each target record, we instantiate the training and validation distributions $\traindist$ and $\valdist$. We also instantiate a test distribution $\testdist$ from which \textit{private} datasets $D$ containing the target record will be sampled in two steps.
\begin{itemize}
    \item In the AUXILIARY scenario, for $\text{split} \in \{ \text{train}, \text{val}, \text{test}\}$, we instantiate $\pi_{\text{split}}$ on the corresponding dataset $D_{\text{split}}$ as described in Sec. \ref{subsec:auxiliary-knowledge-dataset}. We also add a randomized sensitive attribute (50:50) to datasets sampled from $\pi_{\text{split}}$, and remove duplicates of the target record from all datasets generated to ensure that the it is unique.
    \item In the EXACT-BUT-ONE scenario, we sample one dataset $D$ from $D_{\text{test}}$ and add a randomized sensitive attribute (50:50). We then instantiate $\traindist$, $\valdist$ and $\testdist$ on $D$ as described in Sec. \ref{subsec:auxiliary-knowledge-dataset}.
\end{itemize}
To ensure that information from the test distribution does not unintendedly leak into the train and validation distributions, the seeds used to instantiate the QBSes on datasets sampled from $\traindist, \valdist$ and $\testdist$ are \textit{all distinct}. To evaluate the actual privacy leakage of the query-based system and in line with previous work~\cite{gadotti2019signal}, we choose to break the correlations between the sensitive attribute and the other attributes by randomizing the binary sensitive attribute with a 50:50 distribution. 

\subsection{Query-based systems}\label{subsec:query-based-systems}
We describe our implementation of three query-based systems. The first two are based on real-world privacy protection mechanisms used by Diffix~\cite{francis2017diffix} and TableBuilder~\cite{fraser2005proposed}, while the third one is a non-deterministic system combining two QBS building blocks: query set size restriction and (Gaussian) noise addition. 

These systems all rely on \textit{query set size restriction} (QSSR), where the QBS refuses to answer a query if it concerns less than $T$ users, for some threshold $T$. QSSR aims at preventing queries that reveal information about a small number of users, such as counting the number of record identical to the target user and either value of $s$.
We here assume that a QBS doing bucket suppression returns $0$ instead of an error message, as this makes the attack harder by giving less information to the attacker. For simplicity, we introduce the following notation for bucket suppression:
$\beta_{\tau}(D,Q) = I\{T(D,Q) > \tau\}$, where $I$ is the indicator function. 
We also introduce the notion of \textit{query set} $U(D,Q)$, the set of all users who satisfy the conditions of the query $Q$ in the dataset $D$.
Note that for counting queries, $T(D,Q) = |U(D,Q)|$.

\textbf{Rounding.}
For realism, since we focus on counting queries, we further assume that the answers obtained from all mechanisms are then rounded to the nearest integer, and that if the mechanism would return a negative answer it returns $0$ instead.
This again makes the attack harder by giving less information to the attacker.
When applying our attack on a mechanism, we thus use $R_\text{qbs}$ defined as:
\[
R_\text{qbs}(Q,D) = \left\lfloor \max(R_\text{mechanism}(Q,D), 0)\right\rceil~
\]

\textbf{Randomness. } Both Diffix and TableBuilder rely on \textit{seeded} noise, i.e., noise produced by a pseudo-random number generator (PRNG) which outputs the same noise given the same input seed. Hence, these two systems are deterministic: for the same input query, they always output the same result. On the contrary, SimpleQBS is non-deterministic, and sample fresh random variables for each query.

\textbf{Diffix. } Diffix is a commercial QBS developed by the startup Aircloak~\cite{francis2017diffix}. It uses bucket suppression with a noisy threshold and two layers of so-called \textit{static} and \textit{dynamic} noise addition. All noises are seeded with different elements of the query, in order to prevent specific attacks.
Specifically, we attack Diffix-Birch~\cite{francis2018extended}, which adds noise to counting queries as:
\[
R_\text{diffix}(D,Q) = \beta_{\min(2,\tau)}(D,Q) \left(T(D,Q) +\sum_{i=1}^{n_\text{cond}(Q)} N^S_i + N^D_i \right)
\]
where $n_\text{cond}(Q)$ is the number of non-empty conditions in $Q$ (i.e., $\sum_{i=1}^n I\{o_i \neq \perp\}$), $\tau\sim\mathcal{N}(4,0.5)$ is seeded with the query set, $N^S_i \sim\mathcal{N}(0,1)$ is seeded with the text of condition $i$ of $Q$, and $N^D_i \sim\mathcal{N}(0,1)$ is seeded with the text of condition $i$ of $Q$ and the query set $U(D,Q)$.

\textbf{TableBuilder. } TableBuilder is a QBS developed by the Australian Bureau of Statistics for census contingency tables~\cite{fraser2005proposed}. We here focus on the privacy mechanism used on individual cells (similarly to~\cite{asghar2020averaging}), and hence attack a simplified subset of the syntax of the real system. We define this mechanism as
\[
R_\text{tablebuilder}(D,Q) = \beta_4(D,Q) \cdot (T(D,Q)+U),
\]
where $U \sim \mathcal{U}\{-2,\dots,2\}$ is seeded with the query set $U(D,Q)$.\\

\textbf{SimpleQBS. }
We study a simple QBS that combines two common building blocks: Gaussian noise addition (with variance $\sigma$) and query set size restriction. 
Formally, the mechanism SimpleQBS$(\tau,\sigma^2)$ answers queries as:
\[
R_\text{simple}(D,Q) = \beta_\tau(D,Q) \cdot \left(T(D,Q) + N \right),
\]
with $N\sim\mathcal{N}(0,\sigma^2)$. A fresh noise sample is drawn every time a query is performed, even if the query is repeated. We instantiate this QBS with all pairs $(\tau,\sigma)$, $\tau=0,\dots,4$ and $\sigma = 0,\dots,4$.

Note that SimpleQBS$(\tau,\sigma>0)$ can be seen as an application of the Gaussian mechanism with sensitivity $\max(1,\tau)$~\cite{dwork2014algorithmic}, and thus provides $(\varepsilon,\delta)-$differential privacy for some values of $\varepsilon, \delta$.
However, for our choice of parameters, and since we do not restrict the number of queries, the corresponding values of $\varepsilon,\delta$ are very large and do not represent meaningful privacy guarantees.
In Sec.~\ref{sec:budget-based-mechanisms}, we propose an extension of our method to handle budget-based mechanisms.

\subsection{Manual attribute inference attacks}\label{subsec:manual-attacks}

We here describe manual attribute inference attacks from prior work against Diffix, TableBuilder, and SimpleQBS. We compare QuerySnout with these attacks both quantitatively (i.e., the accuracy of the inference) and qualitatively (i.e., what is the attack exploiting).
We first introduce the notion of \textit{difference attack}, which underlies most of the attacks we consider here.

\textbf{Difference attacks. } A common class of attacks against QBSes is \textit{difference attacks}, which consist of a pair of queries $(q_1, q_2)$ of the form\vspace{-.2cm}
\[\vspace{-.1cm}
\begin{array}{l}
q_1 = \text{COUNT WHERE } \bigwedge_{i \in A'} (a_i = r^u_{a_i}) \land a_n = s\\
q_2 = q_1 \land a_{i'} \neq r^u_{a_{i'}}
\end{array}
\]
where $A' \subset \{1,\dots,n-1\}$ is a subset of attributes, $i' \not\in A', i' \neq n$ is another attribute, and $s \in \{0,1\}$ is a possible value for the sensitive attribute.
The insight behind such attacks is that if the target user is uniquely identified by $(A', i')$, then the true counts of $q_1$ and $q_2$ differ by $1$ if and only if $r^u_{a_n} = s$, and $0$ otherwise.
The rule to combine the answers to $q_1$ and $q_2$, as well as how to select the parameters $(A', i', s)$ depend on the system.

\textbf{Diffix. } Gadotti et al.~\cite{gadotti2019signal} proposed the only known attribute inference attack against Diffix.
This attack uses difference queries as described above, and exploits the structure of Diffix's noise to combine results.
Specifically, if the target record is unique for $A' \cup \{i'\}$, then the distribution of $R(q_1,D) - R(q_2,D)$ is either $\mathcal{N}(0,2)$ if $r^u_{a_n}=1$ or $\mathcal{N}(1, 2 |A'|+2)$ otherwise. The attack uses a likelihood ratio test to distinguish between the cases.
In order to find values of $(A',i',s)$ such that the target record is unique and the queries are not suppressed, the attack performs a search over subsets of attributes using access to the target QBS (protecting the private dataset $D$) and heuristics to determine whether the assumptions are verified.
We implement their attack with the \textsf{ValueUnique}~\cite{gadotti2019signal} heuristic in the AUXILIARY setup, and the exact uniqueness oracle in the EXACT-BUT-ONE setup to reflect the additional knowledge of the attacker.
Note that this attack is \textit{interactive}, as it uses iterative access to the target QBS to choose which queries to perform.

\textbf{TableBuilder. }
Chipperfield et al.\cite{chipperfield2016} propose a simple difference attack against TableBuilder.
For each known attribute $j$ and value $s\in\{0,1\}$, they perform the difference queries $(q_1^{j,s},q_2^{j,s})$ for $A' = \{1,\dots,n-1\} \setminus \{j\}$ and $i' = j$.
They then compute the difference $r^{j,s} = R(q_1^{j,s}, D) - R(q_2^{j,s}, D)$, and observe that if $r^{j,s} \geq 5$, the true difference must be 1 (because the noise is bounded), and the attack predicts that the user's target value is $1-s$.
If $r^{j,s} < 5,\,\forall j,s$, the attacks computes the averages $\mu_s = \frac{1}{n-1}\sum_{j=1}^{n-1} r_{j,s}$ and predicts $0$ iff $\mu_0 > \mu_1$ (because the noise is centered).
Rinott et al.~\cite{rinott2018} propose another attack, based on the same queries but using a different combination rule.
This second attack exploits the fact the noise added to a query only depends on the userset, and the same noise is thus added to all the queries selecting the same userset. Then, if the answer to both queries in a pair are equal (i.e., $r^{j,s}=0$), the attack predicts $1-s$ (since the target user is not in the user set of either query, $r^u_{a_n} \neq s$).
Our work is, to the best of our knowledge, the first to empirically evaluate this attack, since the original paper only presented it as a theoretical vulnerability.
We implement both attacks with a small modification to only take into account non-suppressed queries. Additionally, we adapt the attacks to the EXACT-BUT-ONE scenario to reflect the additional knowledge of the attacker, by selecting attribute subsets for which the target record is unique (since they have perfect knowledge of $D$).

\textbf{SimpleQBS. } Non-deterministic noise addition and simple bucket suppression have long been known to be vulnerable to respectively averaging and difference attacks~\cite{denning1979tracker,denning1980secure}.

When $\tau=0$, SimpleQBS is vulnerable to a simple averaging attack, where the query $q_\text{direct} = \bigwedge_{i=1}^{n-1} (a_i = r^u_{a_i}) \land a_n = 0$ is repeated $m$ times to obtain the results $(r_1, \dots, r_m)$. Since the noise is centered, $\mathbb{E}[R_i] = 1$ (resp. 0) if and only if $r^u_{a_n} = 0$ (resp. 1). The attacker predicts $0$ iff $\frac{1}{m}\sum_{i=1}^m r_i < \frac{1}{2}$, and $1$ otherwise.

When $\tau>0$, the query $q_\text{direct}$ is suppressed by bucket suppression. We thus combine a difference and an averaging attack.
We use the attacker's auxiliary information to find values of $(A', i', s)$ such that the user is unique for $(A', i')$ and the queries bypass bucket suppression in two steps. First, we generate auxiliary datasets (from the auxiliary knowledge available in each scenario). Second, we select the pairs $(q_1, q_2)$ for which both assumptions are satisfied for the largest fraction of datasets.
When $\sigma > 0$, we repeat the best pair of queries $m/2$ times, averaging the results to obtain $\mu_{q_1}$ and $\mu_{q_2}$. We then output $s$ iff $\mu_{q_1}-\mu_{q_2} > \frac{1}{2}$, and $1-s$ otherwise.
When $\sigma = 0$, we select the $m/2$ best pairs and perform them in decreasing order until the result $(r_1, r_2)$ is such that $r_1 > 0$, $r_2 > 0$ and $r_1 - r_2 \in \{0,1\}$. We output $s$ iff $r_1 = r_2+1$, and $1-s$ otherwise.

\section{Empirical results}\label{sec:empirical-results}

The goal of our empirical evaluation is fourfold. First, we want to show that it is possible to automate the discovery of attribute inference attacks against a QBS by ``pressing a button''. Second, we want to understand quantitatively how well the attacks discovered by QuerySnout perform when compared to manual attacks. Third, we want to understand qualitatively what vulnerabilities are exploited by QuerySnout, for instance if they are similar to known attacks. Finally, we want to showcase the versatility of QuerySnout by deploying it on a variety of attack scenarios to derive new insights.

\subsection{Attack parameters}

For each target record, we run the evolutionary search and extract after $N$ generations the solution $s^*$ having the highest fitness. 
For computational efficiency and as the attacks against different records are independent, we parallelize the attack over the target records. 
We use 2000 training datasets (sampled from $\traindist$), 1000 validation datasets (sampled from $\valdist$), and 500 test datasets (sampled from $\testdist$). The datasets are of size 8000 for Adults and Census and 1000 for Insurance. 

\textbf{Evolutionary search parameters. } 
Each evolutionary search uses solutions of $m=100$ queries, a population size of $P=100$ and a maximum number of $N=200$ generations with a stopping criteria of having 10 generations of fitness superior to $99.99\%$. To evaluate the fitness of a solution, we use a Logistic Regression as the binary classifier $G$. We also experimented with a multilayer perceptron and found that it did not improve the performance, while significantly increasing the training time.
We set the mutation parameters for modifying operators in a query to $p_{\text{change}}=\frac{1}{n}=\frac{1}{6}$, so that on average we change one operator. Similarly, we use $p_{\text{swap}}=\frac{1}{6}$, $p_{\text{copy}}=0.025$ and $p_{\text{modify}}=0.025$.

\textbf{Attack success metric. } For each target record, we compute the accuracy of the attribute inference attack defined by the best solution $s^*$ and the trained classifier $G$ on $500$ datasets sampled from the corresponding test distribution $D \sim \testdist$. Our metric for the attack success is the average accuracy over the $100$ target users, which we report averaged over 5 repetitions. Note that we randomize the private attribute $a_n$, so that the random guess baseline has an accuracy of 50\%. 

\subsection{Real-world systems}

Our results show that the attacks discovered automatically by QuerySnout  match and often outperform by a large margin the manual attacks from previous work.

\textbf{AUXILIARY scenario. } Table~\ref{table:diffix-comparison-manual}a and Table~\ref{table:table-builder-comparison-manual}a show how, across all datasets and systems, QuerySnout matches or outperforms manual attacks. More specifically, it matches the accuracy of manual attack by Gadotti et al.~\cite{gadotti2019signal} against Diffix on the Adult and Census datasets and strongly outperforms it on the Insurance dataset. Similarly, QuerySnout strongly outperforms Chipperfield et al.~\cite{chipperfield2016} on all datasets and matches Rinott et al.~\cite{rinott2018} on Adult and Census, while strongly outperforming it on Insurance.
\textbf{Our results show how automated attacks, and QuerySnout in particular, can today not only replicate but also outperform existing manual attacks.}
We identify two reasons for the good performances of QuerySnout.

\begin{table}[htbp!]
\centering
\caption{Accuracy against Diffix in the (a) AUXILIARY and (b) EXACT-BUT-ONE scenarios. We report the mean and standard deviation over 5 repetitions.}
\begin{tabular}{|l|c|c|c|}
\hline
\textbf{(a) AUXILIARY} & Adult & Census & Insurance \\ \hline
QuerySnout (automated) & \textbf{77.8} (0.5)  & \textbf{78.3} (1.4) & \textbf{80.1} (0.6)  \\
Gadotti et al.~\cite{gadotti2019signal} (manual) & 76.3 (0.8) & 76.9 (1.4) & 73.0 (1.2) \\ \hline \hline
\textbf{(b) EXACT-BUT-ONE} & Adult & Census & Insurance \\ \hline
QuerySnout (automated) & \textbf{90.2} (0.6) & \textbf{88.3} (0.9) & \textbf{91.6} (1.2)  \\
Gadotti et al.~\cite{gadotti2019signal} (manual) & 77.1 (0.9) & 77.5 (2.0) & 74.4 (0.7) \\
\hline
\end{tabular}
\label{table:diffix-comparison-manual}
\end{table}

\begin{table}[htbp!]
\centering
\caption{Accuracy against TableBuilder in the (a) AUXILIARY and (b) EXACT-BUT-ONE scenarios. We report the mean and standard deviation over 5 repetitions.}
\begin{tabular}{|l|c|c|c|}
\hline
\textbf{(a) AUXILIARY} & Adult & Census & Insurance \\ \hline 
QuerySnout (automated) & \textbf{84.5} (0.6) & \textbf{85.5} (1.4) & \textbf{85.4} (0.6)  \\
Rinott et al.\cite{rinott2018} (manual) & 76.1 (7.5) & 78.1 (7.0) & 56.9 (4.6)  \\ 
Chip. et al.\cite{chipperfield2016} (manual) & 61.2 (3.5) & 62.4 (3.1) & 52.8 (1.8) \\ 
\hline \hline
\textbf{(b) EXACT-BUT-ONE} & Adult & Census & Insurance \\ \hline
QuerySnout (automated) & \textbf{98.1} (0.7) & \textbf{96.6} (0.9) & \textbf{98.8} (0.7)  \\
Rinott et al.\cite{rinott2018} (manual) & 83.1 (8.7) & 72.1 (13.4) & 76.5 (2.3)   \\ 
Chip. et al.\cite{chipperfield2016} (manual) & 72.3 (6.4) & 64.4 (7.2) & 67.2 (1.4) \\\hline 
\end{tabular}\label{table:table-builder-comparison-manual}
\end{table}

First, a manual inspection of queries found by QuerySnout on Diffix shows that difference queries account for $\geq97.0\%$ of the accuracy, while constituting less than half of the queries in solutions (see Appendix~\ref{appendix:manual-analysis} for a detailed analysis). Similarly, the difference queries found by QuerySnout on TableBuilder also account for $\geq97.5\%$ of the accuracy. QuerySnout is able to outperform the manual attacks because it is able to (1) find more attribute subsets for which the target record is likely unique, and (2) combine the results from all subsets taken together for the attack.
We report examples of the solutions found by QuerySnout for each QBS in Appendix~\ref{appendix:solutions-found}.

Second, the gap between QuerySnout and manual attacks is particularly strong on the Insurance dataset, for both QBSes. This is due to previous attacks performing worse on Insurance than on Adult and Census and to QuerySnout performing better. We hypothesize that this difference is due to the smaller size of the Insurance dataset ($|D|=1000$ for Insurance and $|D|=8000$ for Adult and Census), making queries more likely to be suppressed by the query set size restriction (QSSR). While our automated attacks also exploit difference attacks, we believe its flexibility allows it to find more specific queries that bypass QSSR. In Appendix~\ref{appendix:scaling-dataset_size}, we show that QuerySnout performs slightly better in general on smaller datasets.

\textbf{EXACT-BUT-ONE scenario.} Table~\ref{table:diffix-comparison-manual}a and Table ~\ref{table:table-builder-comparison-manual}a show how the accuracy of the best attack found by QuerySnout increases by at least $10\%$ when moving from the AUXILIARY to the EXACT-BUT-ONE scenarios across datasets and QBSes. More specifically, the accuracy of QuerySnout increases by $12.4\%$ (resp. $10.1\%$ and $11.5\%$) for Adults (resp. Census and Insurance) compared to AUXILIARY against Diffix and by $13.7\%$ (resp. $11.2\%$ and $13.4\%$) against TableBuilder. It also strongly outperforms the baselines across all datasets and QBSes, with the exception of Rinott et al. on Census where it matches them.
\textbf{These results show that much stronger attacks than previously believed can be devised against both Diffix and TableBuilder by a strong (``worst-case'') attacker. They also show how QuerySnout is able to automatically discover new and better attacks across scenarios and datasets.}

For instance, in the case of Diffix, an attacker could perform the query $a_1 = r^u_{a_1} \land a_n = 0$. Denote by $C$ the true count of this query when $r^u_{a_n} \neq 0$.
The answer returned by Diffix is distributed as $\mathcal{N}(C + I\{r^u_{a_n} = 0\}, 4)$, assuming the query is thus not bucket suppressed (which will occur only if $r^u_{a_1}$ is very rare).
Repeating this query for different attributes $a_2, \dots, a_{n-1}$ thus gives $n$ samples to distinguish between (in effect) $\mathcal{N}(1,4)$ and $\mathcal{N}(0,4)$.
For $n=5$, a likelihood ratio test distinguishing between these distributions can already achieve an accuracy of $\approx 73\%$.


\subsection{Non-deterministic systems: SimpleQBS}
Most real-world systems are deterministic, either through seeded noise or caching, ensuring that the same answer is returned every time the same query is sent. This is desirable from both a privacy perspective (preventing averaging attacks) and a utility perspective (always returning the same answer).
In some cases though, systems might be non-deterministic. This can be by design, e.g., lack of awareness of averaging attacks; because of implementation or operational issues, e.g., non-functional cache or answer retrieving mechanism; or in cases where the same dataset is made available from different instances of the QBS seeded differently, e.g., distributed computing. 

We here show how QuerySnout is able to find efficient attacks against non-deterministic systems. While we here show and discuss results against the Adult dataset, our conclusions hold for Census and Insurance (cf. Fig.~\ref{figure:qbs-simple_census-insurance} in Appendix~\ref{appendix:simpleqbs}). We also compare our results to the baseline inspired by Denning et al.~\cite{denning1979tracker} and using a simple heuristic to choose difference queries from the auxiliary information available to the attacker (see Sec.~\ref{subsec:manual-attacks} for details).

\begin{figure}[htbp!]
\centering
\includegraphics[width=\linewidth]{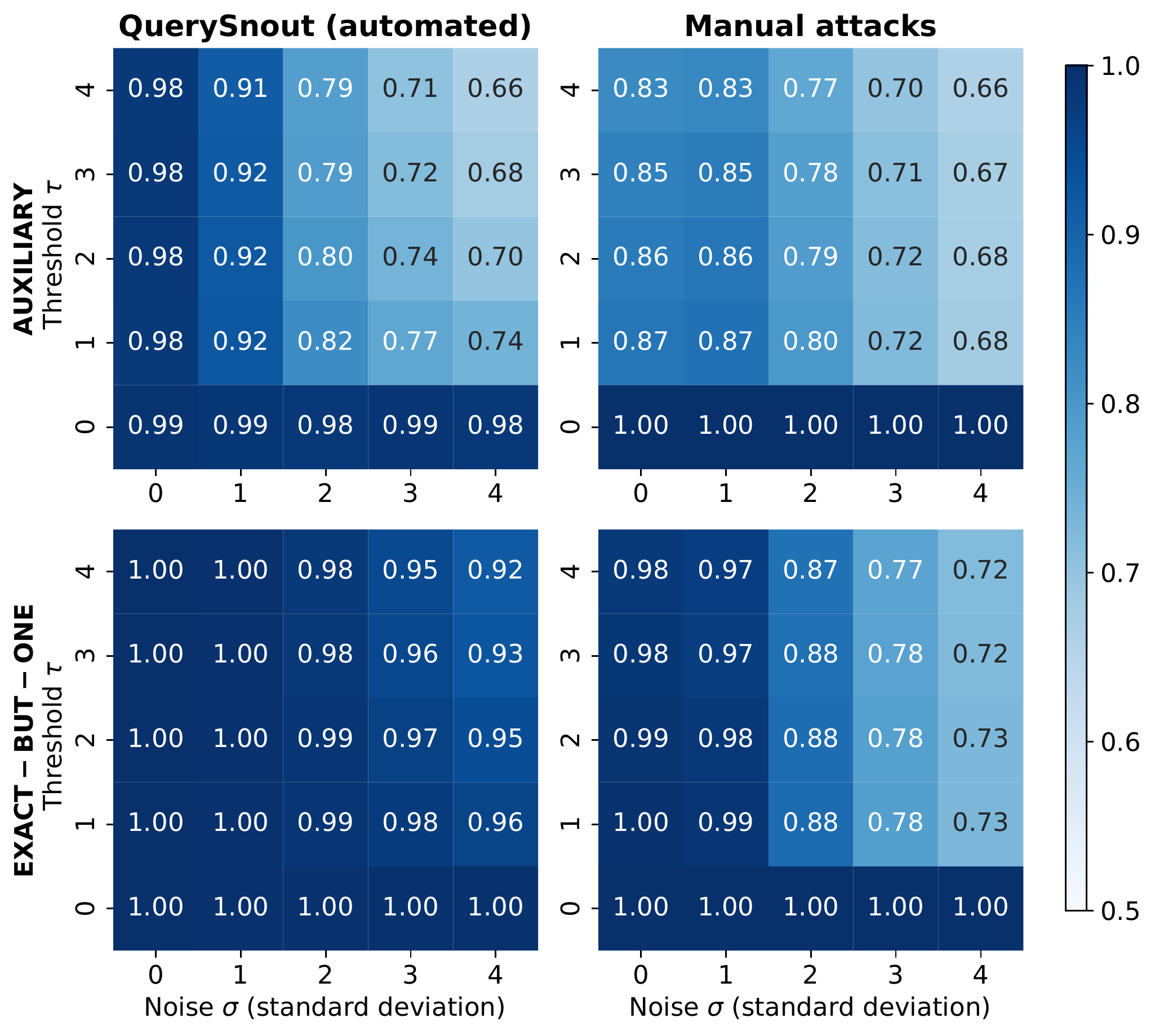}
\caption{\textbf{Comparison between automated and manual attacks against SimpleQBS($\tau, \sigma$) on the Adult dataset.} 
We report the accuracy of attribute inference attacks discovered by QuerySnout (left) and of manual attacks tailored to each system (right) in the AUXILIARY (top) and EXACT-BUT-ONE (bottom) scenarios. The accuracy is averaged over 5 repetitions. 
}
\label{figure:qbs-simple-adults}
\end{figure}

Fig.~\ref{figure:qbs-simple-adults} shows how QuerySnout manages to find effective attacks against SimpleQBS across a range of $\sigma$ (noise) and $\tau$ (threshold) outperforming, often strongly, manual attacks. More specifically, QuerySnout outperforms --on average-- manual attacks in the AUXILIARY scenario by $3.8\%$, particularly when the noise added is small ($\sigma$ from 0 to 3). The gap in accuracy is even larger ($8.4\%$ on average) in the EXACT-BUT-ONE scenario, finding attacks with $>90\%$ accuracy across all values of $(\tau,\sigma)$ considered. Noticeably, QuerySnout finds much better attacks, by 16.7\% on average, than the manual ones for high values of $\sigma$ (from 2 to 4) and $\tau \geq 1$.

\subsection{Comparison with random search}
To evaluate the impact of our search procedure, we compare our results with the accuracy obtained by a random search. The random search uses the same fitness evaluation but samples a new population of random solutions at every generation rather than mutating and keeping previous solutions. The best solution found over all generations is used. We provide more details on this baseline in Appendix~\ref{appendix:random-search}. We here compare our approach to random search for Diffix, Table Builder, and two selected version of SimpleQBS: ($\tau=4,\sigma=3$) and ($\tau=3, \sigma=4$).
We find that our approach strongly improves upon a random search procedure using the same parameters, across all datasets and QBSes (see Table~\ref{table:random-search} in the Appendix). For instance, in the AUXILIARY scenario, our approach outperforms the random search by $6.3\%$ for Diffix, $5.3\%$ for TableBuilder, $6.7\%$ for SimpleQBS($\tau=4,\sigma=3$), and $5.4\%$ for SimpleQBS($\tau=3, \sigma=4$), each time averaged over the datasets. 

\subsection{Impact of the number of queries}
\label{subsec:scaling-number-queries}

Limiting the number of queries is, along with authentication, a popular ``non-QBS'' defense used in practice. This can be done directly, often per user and per period of time, or through budgeting~\cite{mcsherry2009privacy}. We here evaluate the impact of limiting or increasing the number of queries made by our attack to the target QBS. For simplicity, we report results on a deterministic QBS (Diffix) and a non-deterministic QBS (SimpleQBS($\tau=4,\sigma=3$)) on the Adult dataset in the AUXILIARY scenario. The $\tau$ and $\sigma$ of SimpleQBS were chosen to be comparable to those of Diffix, same threshold mean and similar average noise per query (for queries with $5$ conditions).

\begin{figure}[htbp!]
\centering
\includegraphics[width=\linewidth]{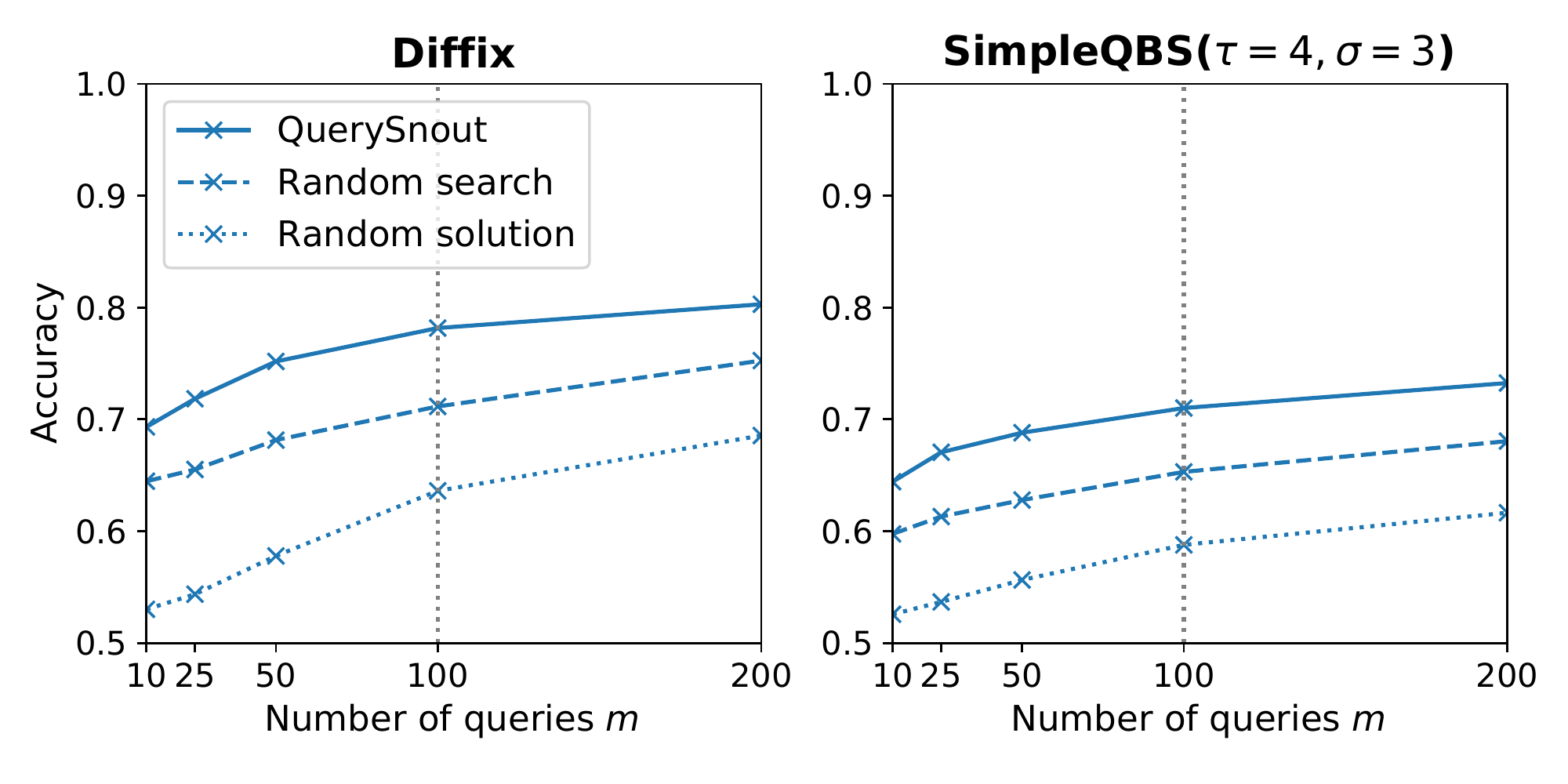}
\caption{\textbf{Impact of the number of queries on the accuracy of QuerySnout.}
We report the accuracy of QuerySnout (which uses an evolutionary search), of a random search, and of a solution chosen uniformly at random as we vary the number of queries $m$. The results are computed on the Adult dataset in the AUXILIARY scenario.
}
\label{figure:scaling-num-queries}
\end{figure}

Fig.~\ref{figure:scaling-num-queries} shows that, as expected, the performance of QuerySnout increases with the number of queries $m$, albeit slowly. Indeed, even with as little as 10 queries, QuerySnout still reaches an accuracy of $69.3\%$ against Diffix and $64.4\%$ against SimpleQBS(4, 3). Increasing the number of queries would further increase the accuracy of the attack, e.g., against systems that do no implement some kind of query limiting mechanisms.

We also evaluate the impact of $m$ on both random search and a na\"ive baseline consisting of one solution sampled uniformly at random from the search space of solutions $\mathcal{S}^m$. Our results show that QuerySnout's mutation operators enable it to consistently strongly outperform the random search, which itself --as expected-- outperforms a random solution. It is however interesting to note that, when the number of queries is large, even a single random solution achieves better than random accuracy.

\subsection{Impact of search parameters}\label{subsec:impact-search-parameters}

We study the impact of different components on the performance of QuerySnout. Fig.~\ref{figure:scaling-search-parameters} in the Appendix shows that increasing the size of the population $P$, the number of generations $N$, or the number of datasets (auxiliary QBSes) results in better performance, with decreasing returns. However, the marginal increase in performance comes at a computational cost, as doubling any of the parameters will double the running time. In some cases, the memory footprint also increases significantly, e.g., increasing linearly with the number of auxiliary datasets.

For the mutation operators, Table~\ref{table:ablation} in the Appendix shows ``copy'' to be the most impactful one. For non-deterministic systems, ``copy'' duplicates a query and -- with equal probability -- either keeps the duplicate unchanged or modifies it with $p_{\text{change}}$ and $p_{\text{swap}}$. For deterministic systems, ``copy'' duplicates a query and always modifies it, since repeating the same query would yield the same answer. In both cases, removing this component negatively impacts the evolutionary search by a significant margin. Even though queries are still being modified (according to $p_{\text{modify}}$), it becomes harder for QuerySnout to find difference queries or to repeat good queries many times. As for the mutations of query operators, the search is robust to removing either ``swap'' or ``change'', as there is significant overlap between the two mutations. We further hypothesize that ``swap'' is likely to be more useful when increasing the number of condition operators $|\conditionset_s|$ (e.g., allowing for $<, \leq, \geq, >$ operators).

\section{Budget-based systems}
\label{sec:budget-based-mechanisms}

Some QBSes, mostly based on $\varepsilon-$differential privacy~\cite{dwork2006calibrating}, require the user to divide a given privacy budget $\varepsilon$ between the queries they ask. We here present a simple heuristic to apply QuerySnout to QBSes that require a budget, which we call Budget-Based Systems (BBS).

\subsection{Extending QuerySnout to BBSes}

Formally, we model budget-based systems similarly to other QBSes, with the addition of a parameter called the \textit{partial budget} of the query, specifying the fraction of the total budget ($\varepsilon$) to use on this query:
$R:\mathcal{D}\times\mathcal{Q}\times(0,1]$.
A new query is only answered if the sum of partial budgets used and of the new query's partial budget is less than or equal to $1$. 

We here propose a simple heuristic to transform the multiset of queries manipulated by QuerySnout to a list of pairs of query and partial budget.
Given a solution of $m$ queries $s = (c^1, \ldots, c^m)$ we group identical queries to obtain $k$ unique queries with multiplicities $(c^{i_1}, m_1), \ldots, (c^{i_k}, m_k)$ such that $\sum_{i=1}^k m_i=m$. We allocate to each unique query a partial budget proportional to its multiplicity. This way, we perform $k$ queries to the QBS, where the partial budget allocated to the $i$-th query is equal to $\frac{m_i}{m}$.
This heuristic assumes \textit{monotonicity of accuracy}: performing any query $q$ with full budget yields more accurate results than repeating a query $k$ times with a partial budget of $1/k$ and averaging out the results.
We discuss this in Appendix~\ref{appendix:monotonicity}.

Note that while running the evolutionary search, we are performing the queries on QBSes instantiated on the auxiliary datasets (resetting their budget at each iteration), and only the final $k \leq m$ queries will be performed on the target QBS. Hence the budget of the target QBS is only used once.

\begin{table*}[!htbp]
\centering
\caption{Accuracy against DPLaplace for different values of $\varepsilon$ in the AUXILIARY and EXACT-BUT-ONE scenarios. We report the mean and standard deviation over 5 repetitions.}
\begin{tabular}{|l|ccc|ccc|ccc|}
\cline{2-10}
\multicolumn{1}{c|}{}
 & \multicolumn{3}{c|}{\textbf{DPLaplace($\varepsilon=1$)}} & \multicolumn{3}{c|}{\textbf{DPLaplace($\varepsilon=5$)}}  & \multicolumn{3}{c|}{\textbf{DPLaplace($\varepsilon=10$)}}  \\
 \hline
\textbf{AUXILIARY} & Adult & Census & Insurance & Adult & Census & Insurance & Adult & Census & Insurance \\
\hline
QuerySnout (automated) & 62.9 (0.8) & 64.2 (0.8) & 63.6 (0.8) & 94.2 (1.0) & 94.6 (1.0) & 94.9 (0.7) & 98.6 (0.4) & 99.4 (0.2) & 99.0 (0.3) \\
Uniqueness attack (manual) & 69.2 (0.8) & 70.2 (0.9) & 70.4 (1.2) & 95.6 (0.8) & 95.5 (0.7) & 95.9 (0.6) & 99.6 (0.2) & 99.6 (0.2) & 99.7 (0.1) \\
\hline
\textbf{EXACT-BUT-ONE} & Adult & Census & Insurance & Adult & Census & Insurance & Adult & Census & Insurance \\
\hline
QuerySnout (automated) & 64.2 (0.7) & 65.0 (0.6) & 65.6 (0.8) & 95.4 (0.8) & 95.3 (0.7) & 95.7 (0.7) & 99.5 (0.3) & 99.6 (0.1) & 99.7 (0.1) \\
Uniqueness attack (manual) & 68.6 (1.1) & 68.9 (2.6) & 69.5 (2.6) & 95.7 (1.1) & 95.6 (1.1) & 96.0 (0.6) & 99.5 (0.3) & 99.5 (0.4) & 99.8 (0.2) \\
\hline
\end{tabular}\label{table:dp-laplace-comparison-manual}
\end{table*}

\subsection{DPLaplace}
\label{subsec:dplaplace}

We consider a simple budget-based mechanism using addition of Laplace noise.
The mechanism assumes that a total budget $\varepsilon$ is allocated by the data curator, and that each query is answered by adding independent Laplace noise scaled with $(p\varepsilon)^{-1}$, where $p$ is the fractional budget allocated for this query.
This ensures that each answer satisfies $(p\varepsilon)-$Differential Privacy (DP)~\cite{dwork2006calibrating} and, by composition~\cite{mcsherry2009privacy}, performing $k$ queries with fractional budgets $(p_1, \dots, p_k)$ satisfies $(\sum_{i=1}^k p_i \varepsilon)-$DP.
Formally, having answered $k-1$ queries with total fractional budget $p_{1:k-1}$, the QBS answers the $k^\text{th}$ query with fractional budget $p_k$ iff $p_{1:k-1} + p_k \leq 1$, and answers as:
$$R_{\text{DPLaplace}(\varepsilon)}(D,Q,p) = T(D,Q) + L~~\text{ with } L \sim Lap\left((p\,\varepsilon)^{-1}\right)$$
Similarly to the other systems, we round answers to the nearest integer and threshold the answers at $0$, which doesn't affect the privacy guarantees.
We prove in Appendix~\ref{appendix:monotonicity} that this QBS satisfies the monotonicity of accuracy assumption.
Although the DPLaplace mechanism is particularly simple, it serves as the core component of many more complex systems implementing Differential Privacy, such as PINQ~\cite{mcsherry2009privacy}, Chorus~\cite{near2018differential} and PriPearl~\cite{kenthapadi2018pripearl}.
We instantiate this QBS for $\varepsilon\in\{1,5,10\}$. As we explain below, finding solutions for our search procedure is challenging against DPLaplace.

\textbf{Known optimal attack. }
A known \emph{optimal} attack exists against DPLaplace: a uniqueness attack that performs the query $q = \text{COUNT}\\ \text{WHERE} \bigwedge_{i=0}^{n-1} (a_i = r^u_{a_i}) \land a_n = 0$ with partial budget $1$ and returns $s=0$ iff the count is larger than 0.5. We prove that this attack achieves maximal accuracy in Appendix~\ref{appendix:dplaplace}. Given a number of queries $m$, this attack can be mapped to the solution in $\mathcal{S}^m$ consisting of the query $q$ repeated $m$ times. While the attack is simple for a knowledgeable attacker, it present challenges for our search procedure. Showing that QuerySnout can be extended to budget-based mechanisms is therefore important.

Finding good solutions with our current black-box search is challenging here, particularly for small values of $\varepsilon$. Indeed, a query that is not repeated (equivalently, that has a small partial budget) will be answered with a noise of large standard deviation $\approx 14$ for $\varepsilon=1$ and $m=10$.
This has two consequences: first, estimating the fitness of solutions is difficult (due to the increased randomness), which can lead to overfitting, both of the rule $G$ and the fitness; second, it complicates the discovery of good queries, since the signal of optimal queries is significantly smaller than the noise added. For instance, even the optimal query $\bigwedge_{i=1}^{n-1} (a_i = r^u_{a_i}) \land s=0$ has little signal if not repeated, since performing it once (with budget $\frac{1}{10}$) leads to an attack with accuracy of at most $\approx52.4\%$. 

\textbf{Attack parameters. }
Unlike the previous mechanisms, increasing the number of queries in a solution $m$ amounts to adding more noise to the answers.
The standard deviation of the noise is indeed proportional to $m$, due to the budget being split among the queries (and is equal to $\sqrt{2}m/\varepsilon$ for queries with multiplicity 1).
We thus use smaller solutions of $m=10$ queries.
All other parameters are identical as those in Sec.~\ref{sec:empirical-results}, except we update the mutation operators accordingly to $p_{\text{copy}}=p_{\text{modify}}=\frac{1}{m} = 0.1$.  

\subsection{Empirical results}\label{subsec:dp-laplace-results}

Table~\ref{table:dp-laplace-comparison-manual} shows that the attacks discovered by QuerySnout match the performance of the optimal attack for large budget values $\varepsilon \in \{5, 10 \}$ but fall slightly short in the $\varepsilon=1$ case. Indeed, in the latter case QuerySnout performs on average $6.4\%$ worse than the optimal attack for AUXILIARY, and $4.1\%$ for EXACT-BUT-ONE.

These results show that QuerySnout can already find the optimal attacks against budget-based mechanisms for medium to small amounts of noise, and very good ones when a large amount of noise is added. We believe the search can be further improved in future work. 


In Appendix~\ref{appendix:random-search}, we report the difference in accuracy between QuerySnout and the random search for DPLaplace. Table~\ref{table:dp-laplace-random-search} shows QuerySnout to vastly outperform the random search for all datasets, scenarios, and values of $\varepsilon$. Specifically, in the AUXILIARY scenario, the gap is of $11.5\%$ on average across the datasets for $\varepsilon=1$, with the random search (at $52.1\%$) barely improving on the random guess baseline. Similar or higher gaps (up to $29.3\%$) are obtained for other datasets, scenarios, and values of $\varepsilon$. 
This result both confirms that finding good solutions for DPLaplace is challenging, and shows that QuerySnout is able to efficiently explore the search space even in complex cases. 

Overall, the gap between QuerySnout and the random search is much larger for DPLaplace compared to the other QBSes considered in this paper. This is due to (1) random solutions yielding low accuracy against DPLaplace (as explained above) and (2) the noise added by the other QBSes to answers on one query being overall smaller ($\leq \sqrt{12}$ for Diffix and $\sigma \leq 5$ for SimpleQBS, compared to $10\sqrt{2}$ for DPLaplace$(\varepsilon=1)$) and independent of the number of queries. 

\section{Discussion}
\label{sec:discussion}

QuerySnout \textit{automatically} finds attacks that equate and often outperform previous manual attacks across a range of systems and datasets. This demonstrates the potential of automating attacks against QBSes and opens the door for new, much more complex attacks to be discovered. Our work is the first to enable systems to be automatically tested before deployment to identify and patch vulnerabilities. We now discuss in more detail the extensive potential avenues for future work.

\textbf{Extensions. }
QuerySnout currently operates in the same constrained space explored by manual attacks. More complex and accurate attacks exploiting e.g. different SQL operators ($\leq, \geq$) or logical expressions (such as OR) are, in our opinion, likely to exist. Finding such attacks however presents some challenges as the size of the search space increases fast with new operators or conditions per attributes, and with it the computational cost of the search. Specialised mutation operators can be used to overcome this challenge, for instance, using empirical discrete gradient approximation~\cite{dorogush2018catboost}. Furthermore, generative encodings~\cite{stanley2009hypercube} can be used to improve the expressiveness of the genotype and enhance the capability of QuerySnout to find regularities in query sets.

QuerySnout currently relies on a set of assumptions, primarily that the target record is unique in the private dataset and that the attacker has access to auxiliary knowledge on the target dataset. We believe both can be relaxed. In particular, different auxiliary information can likely be used, for instance a few statistics on the data from which synthetic data is generated~\cite{shokri2017membership}, or syntactically similar datasets from another distribution. These would not require changing the method as QuerySnout only requires two samplers $\traindist$ and $\valdist$.

The attacks that we consider here are \textit{non-interactive}: the attacker chooses which queries to perform \textit{before} observing answers from the target QBS. It is likely that attacks could be made more accurate or efficient if future automated attacks were to use the answer to previous queries to inform the choice of the next queries~\cite{gadotti2019signal}.

Finally, QuerySnout could be extended to use additional QBS outputs such as the time it took for the answer to a query to be computed (such as exploited by Boenisch et al.~\cite{boenisch2021side}), information about the query provided by the QBS, or warnings and error messages produced by the QBS (e.g., allowing our system to treat bucket suppression as different from a $0$). This would enable the system to discover so-called \textit{side-channel attacks} in addition to the privacy attacks that this paper focuses on.

\textbf{Membership inference attacks. }
QuerySnout currently focuses on attribute inference attacks.
Another class of attacks, membership inference attacks (MIA), has been considered in the privacy literature. MIAs are at the core of formal privacy guarantees such as differential privacy~\cite{dwork2006calibrating} and can lead to privacy concerns when being part of a dataset is sensitive. 
More importantly here, membership inference can also be a first step towards attribute inference attacks by allowing the attacker to verify the assumption that the target record is in the dataset.
QuerySnout can be extended to membership inference attacks with minimal changes.
Indeed, the main change to our method is in the sampling of auxiliary datasets used to train the attack: after sampling a dataset (as a subset of $D'$ in the AUXILIARY scenario, or as $D$ in the EXACT-BUT-ONE), with equal probability either a random record in the dataset is replaced with the target record, or the auxiliary dataset left unchanged.
The rule $G$ can then be trained to predict membership (whether the target record is in the data) rather than the value of the sensitive attribute.
The rest of the method can be applied unchanged, although it is likely that designing specific mutations for membership inference might result in better accuracy.

\textbf{Automating the analysis of solutions. }
In this paper, we manually inspected the queries found by QuerySnout to understand what vulnerabilities of the QBS they exploited.
This analysis could likely be automated in the future, e.g. using techniques from interpretability in machine learning~\cite{ribeiro2016} to understand the relative importance of queries in the attack. 
This would greatly help better understand what the attack exploits and how the attack could be mitigated and patched to make the system stronger.

\textbf{Automating the defense. }
QuerySnout discovers highly accurate attacks against real-world systems in the AUXILIARY scenario, and new, more powerful, attacks in the EXACT-BUT-ONE scenario. 
We believe that defending against these attacks will likely require both improving these systems but also ensuring that appropriate risk mitigation strategies are in place, including authentication, query limits, and logging. 
We believe however that the existence of an automated attack system such as QuerySnout might help develop stronger mitigation strategies such as anomaly detection and attack detection mechanisms.

\textbf{Order of queries. }
We here assume that the order of queries does not matter to the QBS (Sec ~\ref{subsec:attack-search-space}), a property satisified by all the query-based systems we consider, as well as, in our opinion, the majority of real-world systems. In some cases however, this might not be the case. For instance, one could imagine a system auditing queries in order to avoid answering the second half of difference query pairs~\cite{nabar2008survey}. 
Such a QBS could be represented by a function $R_k:\mathcal{D}\times\mathcal{Q}\times\mathcal{H} \rightarrow \mathbb{R}$, where $\mathcal{H}$ is the set of \textit{histories}, i.e., lists of query-answer pairs. Here, solutions would be represented as \textit{ordered} lists of queries rather than multisets, which increases the size of the solution search space by a factor of approximately $m!/3^n$.
Our method can be adapted by adding a mutation operator which changes the query order.

\section{Related work}
\label{sec:related_work}

Attacks have long been used to evaluate and improve query-based systems.
Difference attacks against counting queries were studied by Denning et al.~\cite{denning1979tracker}, as well as solutions based on randomisation~\cite{denning1980secure}, leading to the development of averaging attacks.
The seminal linear reconstruction attack by Dinur and Nissim~\cite{dinur2003revealing} allows to retrieve the records of a dataset using (many) counting queries, and gives a lower bound on the noise added by query-based systems to prevent it.
Further work by Dwork and Yekhanin~\cite{dwork2008new} showed that this attack could be performed efficiently with a smaller number of queries, and that the impossibility result extends to curators that aim to answer a small subset of queries accurately (while maintaining average noise level satisfying the lower bound).
Diffix~\cite{francis2017diffix}, developed by Aircloak, was improved through two bounty challenges~\cite{bauer2017bounty,bauer2020bounty}, where researchers were invited to develop attacks against it. 
Researchers showed that Diffix was vulnerable to reconstruction attacks using direct~\cite{cohenLinearProgramReconstruction2018} and quasi-identifiers~\cite{cohen2020reconstruction}, as well as membership inference attacks~\cite{pyrgelis2018blogpost}.
Gadotti et al. proposed an attack specifically exploiting the noise structure of Diffix~\cite{gadotti2019signal}.

Query auditing~\cite{nabar2008survey} is a line of research which aims to detect whether queries can be answered without disclosing private information.
Auditing can be performed in the offline setup, where $m$ queries are audited together before being answered, and in the online setup, where each query is audited individually before being answered without knowledge of subsequent queries.
While the latter is known to be challenging, even the former can be difficult: for instance, query auditing is co-NPHard for counting queries~\cite{kleinberg2003auditing}. 
Research on query auditing has been restricted to systems that answer queries exactly ($R(d,q) = T(d,q)$).
QuerySnout can be seen as an automated auditor applicable to more general QBSes.

Differential privacy~\cite{dwork2006calibrating} is a formal guarantee of privacy, which was proposed as a solution to the reconstruction attack of Dinur and Nissim.
Researchers have proposed query-based systems guaranteeing differential privacy, such as PINQ~\cite{mcsherry2009privacy}, PriPearl~\cite{kenthapadi2018pripearl}, and Flex~\cite{johnson2018towards}.
For the counting queries we consider, all three systems behave similarly to the DPLaplace QBS we attack in this paper.
More complex mechanisms guaranteeing differential privacy have also been developed, such as query release through adaptive projection~\cite{dwork2014algorithmic,aydore2021differentially} and the matrix mechanism~\cite{li2010optimizing,li2015matrix}.

Our method uses a technique similar to shadow modelling to infer the rule $G$ that combines answers to queries.
Shadow modelling has been used extensively for property inference~\cite{ateniese2013} and membership inference~\cite{shokri2017membership} attacks against machine learning algorithms.
Pyrgelis et al. have used it to attack aggregated location data~\cite{pyrgelis2017knock}, and Stadler et al. have used it to develop membership inference attacks against synthetic data~\cite{stadler2020synthetic}. Machine learning-based attacks have also been used by Bichsel et al. to detect violations of DP~\cite{bichsel2021dp}.

Evolutionary algorithms have been applied to adversarial attacks against machine learning models~\cite{Nguyen_2015_CVPR,su2019one,dai2018adversarial,alzantot2019genattack}. They are especially suitable under black-box attack scenarios where the attacker does not have access to the gradients or only hard (discrete) labels are available from the model. To the best of our knowledge, our work is the first to use evolutionary algorithms for attacks against QBSes.

\section{Conclusion}
\label{sec:conclusion}

In this paper, we propose QuerySnout, the first approach to automatically discover attribute inference attacks against query-based systems.
QuerySnout discovers a multiset of queries and a rule to combine them. We learn the rule by training a machine learning classifier on answers from auxiliary QBSes protecting datasets sampled from the auxiliary knowledge available to the attacker. We use an evolutionary algorithm to find an optimal multiset of queries by iteratively improving a population of solutions using a mutation operator specifically tailored for this task.
We show QuerySnout to find attacks against two deterministic real-world mechanisms, Diffix and TableBuilder, and a non-deterministic system (SimpleQBS). Across systems and datasets, the attacks found equate and often outperform previous known manual attacks. 
Finally, we show how QuerySnout can be extended to QBSes that require a budget and show it to approximately match the optimal accuracy against systems implementing the Laplace mechanism for large values of $\epsilon$.
We also discuss extensions of our attack to other attack models (e.g., membership inference) and query syntax.

Taken together, our results show how QuerySnout can be used to automatically find powerful attacks against QBSes and evaluate the privacy protection they offer. We believe that automated attack discovery procedures such as QuerySnout will help detect issues before systems are deployed, helping to patch the systems and mitigate risks, and ultimately help design query-based systems offering a high level of protection in practice.

\textbf{Acknowledgements.} We thank Bozhidar Stevanoski for his feedback, the anonymous reviewers for their comments which have helped improve the paper, and Tianhao Wang for shepherding our paper. 
A.-M. C. was partially funded by UKRI Research England via the ``Policy Support Fund 2021/22 Evidence-based policy making'' call. 
We acknowledge computational resources and support provided by the Imperial College Research Computing Service\footnote{\url{http://doi.org/10.14469/hpc/2232}}.

\bibliographystyle{ACM-Reference-Format}
\bibliography{references}

\appendix
\section{Appendix}

\startcontents[sections]
\printcontents[sections]{l}{1}{\setcounter{tocdepth}{2}}

\subsection{Background on evolutionary algorithms} 
The goal of an evolutionary algorithm is to produce one or more high-quality \textit{solutions} $s$ to an optimization problem with search space $\mathcal{S}$, as quantified by a \textit{fitness} function $\mathcal{F}:\mathcal{S} \rightarrow \mathbb{R}_+$. The fitness function assigns a score $\mathcal{F}(s)$ to each solution $s$, measuring its quality for the specific use case. An evolutionary algorithm consists of an iterative procedure operating on a \textit{population} of $P$  solutions $s_1, \ldots, s_P \in \mathcal{S}$. Its goal is to improve the quality of solutions over time using the feedback from the fitness scores of solutions in the population. This improvement is typically achieved by (1) maintaining a population of high-quality solutions selected among those discovered so far, (2) evolving the population over many iterations (\textit{generations}) to improve their quality. In each generation, new solutions (\textit{offsprings}) are generated by applying small random changes via a \textit{mutation operator} $\mathcal{M}: \mathcal{S} \rightarrow \mathcal{S}$ to existing solutions (\textit{parents}), or by combining them via a \textit{crossover operator} $\mathcal{C}:\mathcal{S}\times\mathcal{S}\rightarrow\mathcal{S}$.
While the ideas behind the algorithm are general, its building blocks: solution, fitness function and mutation or crossover operators need to be defined for the specific use case. 

\subsection{Fitness evaluation}
Algorithm~\ref{alg:evaluate-fitness} details our procedure to evaluate the fitness of a solution. We refer the reader to Sec.~\ref{subsec:search-space-exploration} of the main paper for a high-level description.

\begin{algorithm}[t]
\caption{\textsc{EvaluateFitness}}
\label{alg:evaluate-fitness}
    \begin{algorithmic}[1]
        \Inputs{
        $solution$: A solution in the search space $\mathcal{S}^m$.
        $D'_{\text{aux}}=(D'_{\text{train}}, D'_{\text{val}})$: Two lists of auxiliary datasets \\ used to train and evaluate the fitness function.\\
        }
        
        \Output{
        $fitness$: Accuracy of an ML model trained to predict the sensitive attribute of a record.
        }
        
        \Initialize{\comm{Answers to queries used as input features} \\
        \comm{for training and evaluation.}\\
        $X_{\text{train}}, X_{\text{val}} \gets empty\_list, empty\_list$\\
        \comm{Corresponding sensitive attributes.}\\
         $y_{\text{train}}, y_{\text{val}} \gets empty\_list, empty\_list$\\
        }
        
        \For{$\text{split} \in \{\text{train}, \text{val}\}$}
        \For{$k=1$ to $length(D'_{\text{split}})$}
        \State{$answers \gets empty\_list$}
        \For{$c$ in $solution$}
        \State{$q = \text{COUNT WHERE} \bigwedge_{i=1}^{n-1} (a_i\,c_i\,r^u_{a_i}) \land (a_n\,c_n\,0)$}
        \State{$answers.append(R(D'_{\text{split}}[k],q))$}
        \EndFor
        \State{$X_{\text{split}}.append(answers)$}
        \State{$y_{\text{split}}.append(r^u_{a_n})$}  \comm{Extracted from $D_{\text{split}}[k]$.}
        \EndFor
        \EndFor
    \State{$G = train\_model(X_{\text{train}}, y_{\text{train}})$ \comm{Train the rule $G$.}}
    \State{\comm{Evaluate the model on each split.}}
    \State{$\hat{y}_\text{train}, \hat{y}_\text{val}  \gets G.predict(X_{\text{train}}), G.predict(X_{\text{val}})$}
    \State{$fitness\_train, fitness\_val \gets acc(\hat{y}_{\text{train}},  y_{\text{train}}), acc(\hat{y}_{\text{val}}, y_{\text{val}})$}
    \State{$fitness\_val \gets  accuracy(\hat{y}_{\text{val}}, y_{\text{val}})$}
    \State{$fitness \gets min(fitness\_train, fitness\_val)$}
    \end{algorithmic}
\end{algorithm}

\subsection{Mutation algorithms}
\label{appendix:apply-mutation}
Algorithm~\ref{alg:apply-mutation} presents the pseudocode for our mutation algorithm and Algorithm~\ref{alg:modify-query} presents the pseudocode for the algorithm to modify a query. We refer the reader to the main paper for a high-level description.

\begin{algorithm}[htbp!]
\caption{\textsc{ApplyMutation}: Mutation algorithm.}
\label{alg:apply-mutation}
    \begin{algorithmic}[1]
        \Inputs{
        $parent$: A solution in the attack search space $\mathcal{S}^m$. \\
        $p_{\text{copy}}$: Probability to copy a query. \\
        $p_{\text{modify}}$: Probability to modify a query. \\
        $p_{\text{change}}$: Probability to change an operator in a query. \\
        $p_{\text{swap}}$: Probability to swap two operators in a query.
        }
        
        \Output{
        $offspring$: New solution in the attack search space $\mathcal{S}^m$.}
        \Initialize{$offspring \gets empty\_list$
        }
        \For{$i = 1$ to $m$}
        \State{$c \gets solution[i]$}
        \State{$u \gets \mathcal{U}(0, 1)$}
        \State{\comm{Case 1: Copy the query.}}
        \If
        {$0 \leq u < p_\text{copy}$}
        \State{$offspring.append(c)$}
        \State{\comm{Case 1a: Modify the copy.}}
        \If{$qbs\_deterministic$}
        \State{$c' \gets \textsc{ModifyQuery}(c, p_\text{change}, p_\text{swap})$}
        \State{$offspring.append(c')$}
        \Else
        \State{\comm{Case 1b: Modify the copy or add the query unchanged with equal chance.}}
        \State{$u' \gets \mathcal{U}(0, 1)$}
        \If{$u' > 0.5$}
        \State{$c' \gets \textsc{ModifyQueryRule}(c, p_\text{change}, p_\text{swap})$}
        \State{$offspring.append(c')$}
        \Else
        \State{$offspring.append(c)$}
        \EndIf
        \EndIf
        \EndIf
        \State{\comm{Case 2: Modify the query.}}
        \If{$p_{\text{copy}} \leq u < p_{\text{copy}} + p_{\text{modify}}$}
        \State{$c' \gets \textsc{ModifyQuery}(c, p_\text{change}, p_\text{swap})$}
        \State{$offspring.append(c')$}
        \EndIf
        \State{\comm{Case 3: Add the query unchanged.}}
        \If{$p_{\text{copy}} + p_{\text{modify}} \leq u$}
        \State{$offspring.append(c)$}
        \EndIf
        \EndFor
    \State{\comm{Keep a random subset of $m$ queries.}}
    \State{$offspring \gets random\_permutation(offspring)[:m]$}
    \end{algorithmic}
\end{algorithm}

\newpage
\begin{algorithm}[htbp!]
\caption{\textsc{ModifyQuery}: Algorithm to modify a query.}
\label{alg:modify-query}
    \begin{algorithmic}[1]
        \Inputs{
        $c = (c_1, \ldots, c_n)$: Query to be modified. \\
        $p_{\text{change}}$: Probability to switch an operator. \\
        $p_{\text{swap}}$: Probability to swap two query operators.
        }
        
        \Output{
        $new\_c$: A modified version of the query $c$.
        }
        \Initialize{
        $new\_c \gets (c_1, \ldots, c_n)$
        }
        \State{\comm{Randomize the attribute order.}}
        \State{$\sigma \gets random\_permutation(n)$}
        \State{$swapped \gets empty\_set$}
        
        \For{$i=1$ to $n$}
        \State{\comm{Skip already swapped attributes.}}
        \If{$swapped.contains(i)$}
        \State{$continue$}
        \EndIf
         \State{$u \gets \mathcal{U}(0, 1)$}
        \State{\comm{Case 1: Change the operator.}}
        \If
        {$0 \leq u < p_\text{change}$}
        \State{$new\_c[\sigma(i)] \gets \mathcal{U}(\conditionset_s \setminus \{ c_i\})$}
        \EndIf
        \State{\comm{Case 2: Swap two operators.}}
        \If
        {$p_\text{change} \leq u < p_\text{change}+p_\text{swap}$}
        \State{$new\_c[\sigma(i)] \gets c_{\sigma(i+1)}$}
        \State{$new\_c[\sigma(i+1)] \gets c_{\sigma(i)}$}
        \State{$swapped.add(i+1)$}
        \EndIf
        \State{\comm{Case 3: Leave the query unchanged.}}
        \EndFor
    \end{algorithmic}
\end{algorithm}

\subsection{Real-world mechanisms: Analysis of solutions}
\label{appendix:manual-analysis}

We analyze the solutions found by QuerySnout against Diffix and TableBuilder in the AUXILIARY scenario. First, we automatically identify and extract the difference queries from each solution (corresponding to a target record) and repetition. We then fit a Logistic Regression on answers to these queries using the same seed and hyperparameters as in the default experiment. Fig.~\ref{figure:diffix_analysis_number_difference_queries} shows that out of $m=100$ queries, QuerySnout uses an average of 29.0 difference queries on Adult, 26.9 on Census and 29.0 on Insurance. These queries account for $98.0\%$ of the accuracy on Adult and for $97.0\%$ of the accuracy on Census and Insurance, as shown on Fig.~\ref{figure:diffix_analysis_difference_queries}. Note that using a random subset of each solution of the same size only preserves, on average, $80.0\%$ of the performance on Adult, $79.8\%$ on Census, and $80.5\%$ on Insurance. We obtain similar results on TableBuilder, shown on Fig.~\ref{figure:table_builder_analysis_number_difference_queries}-~\ref{figure:table_builder_analysis_difference_queries}. Specifically, QuerySnout finds fewer difference queries (an average of 23.9 on Adult, 20.7 on Census, and 24.2 on Insurance), but those found account for most of the performance: $99.8\%$ on Adult, $97.5\%$ on Census, and $98.2\%$ on Insurance. The randomly selected queries only account for $73.1\%$ of the performance on Adult, $71.7\%$ on Census, and $74.8\%$ on Insurance.

\begin{figure*}[htbp!]
\centering
\includegraphics[width=0.7\linewidth]{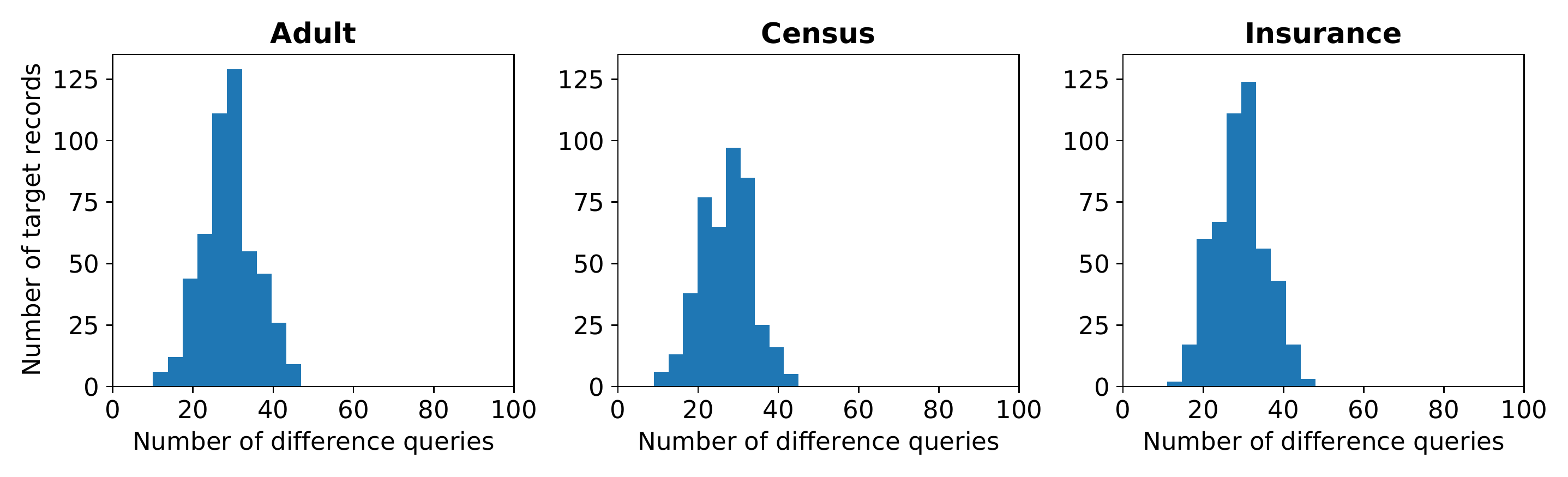}
\caption{\textbf{Diffix: Histogram of the number of difference queries in each solution found by QuerySnout in the AUXILIARY scenario.} 
For each dataset, the histogram aggregates the results from the 5 repetitions.
}
\label{figure:diffix_analysis_number_difference_queries}
\end{figure*}

\begin{figure*}[htbp!]
\centering
\includegraphics[width=0.78\linewidth]{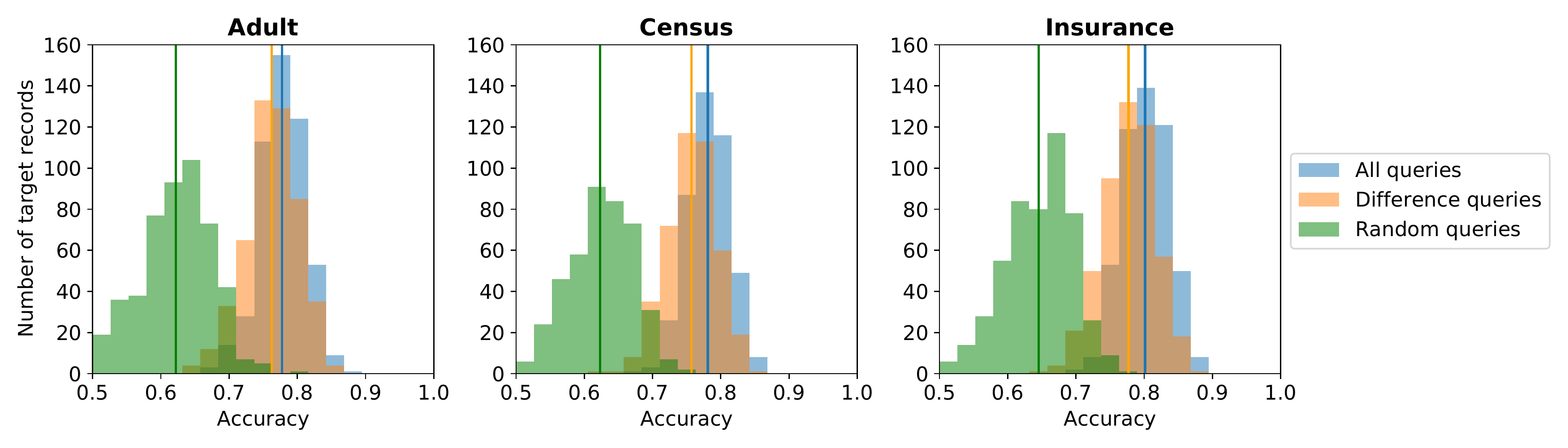}
\caption{\textbf{Diffix: Comparison between QuerySnout's performance in the AUXILIARY scenario using all the queries in the solution (blue), using only the difference queries (orange), or the same number of random queries in the solution (green).} 
For each dataset, the histogram aggregates the results from the 5 repetitions. The vertical bars show the mean accuracy.
}
\label{figure:diffix_analysis_difference_queries}
\end{figure*}

\begin{figure*}[htbp!]
\centering
\includegraphics[width=0.7\linewidth]{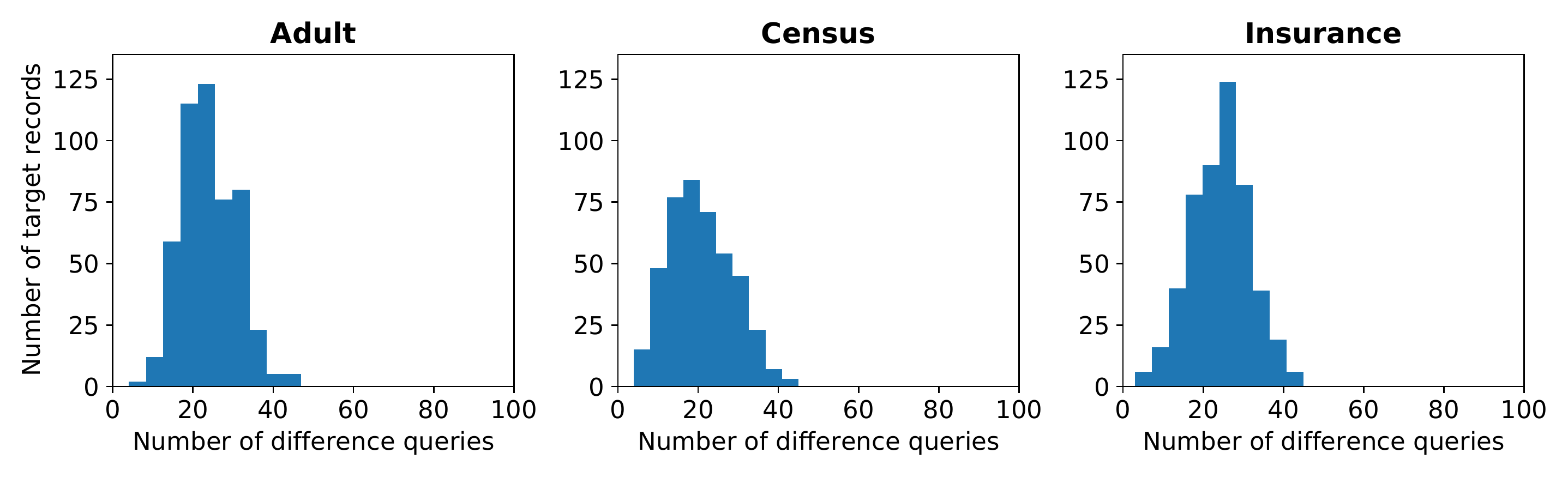}
\caption{\textbf{TableBuilder: Histogram of the number of difference queries in each solution found by QuerySnout.} 
For each dataset, the histogram aggregates the results from the 5 repetitions.
}
\label{figure:table_builder_analysis_number_difference_queries}
\end{figure*}

\begin{figure*}[htbp!]
\centering
\includegraphics[width=0.78\linewidth]{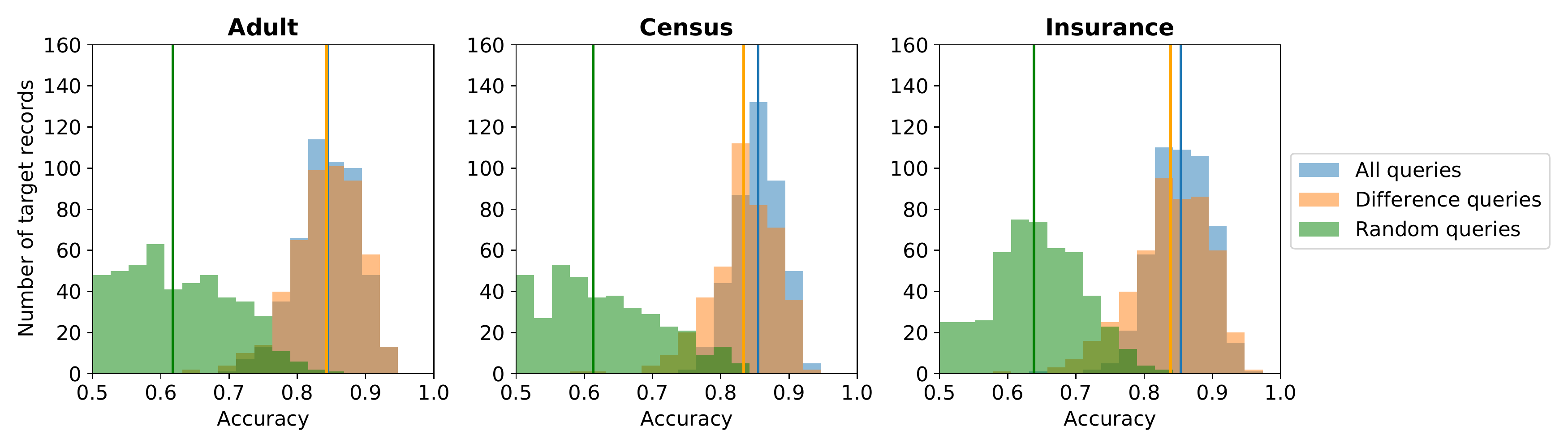}
\caption{\textbf{TableBuilder: Comparison between QuerySnout's performance in the AUXILIARY scenario using all the queries in the solution (blue), using only the difference queries (orange), or the same number of random queries in the solution (green).} 
For each dataset, the histogram aggregates the results from the 5 repetitions. The vertical bars show the mean accuracy.
}
\label{figure:table_builder_analysis_difference_queries}
\end{figure*}

\newpage

\subsection{SimpleQBS: Comparison with manual attacks on other datasets}
\label{appendix:simpleqbs}

Fig.~\ref{figure:qbs-simple_census-insurance} shows the results for the comparison with manual attacks in the AUXILIARY and EXACT-BUT-ONE scenarios for SimpleQBS($\tau, \sigma$) on the Insurance and Census datasets, for $\tau \in \{ 0, 1, 2, 3, 4\}$ and $\sigma \in \{ 0, 1, 2, 3, 4\}$. 

\begin{figure}[htbp!]
\centering
\subfigure[Census]{
\includegraphics[width=\linewidth]{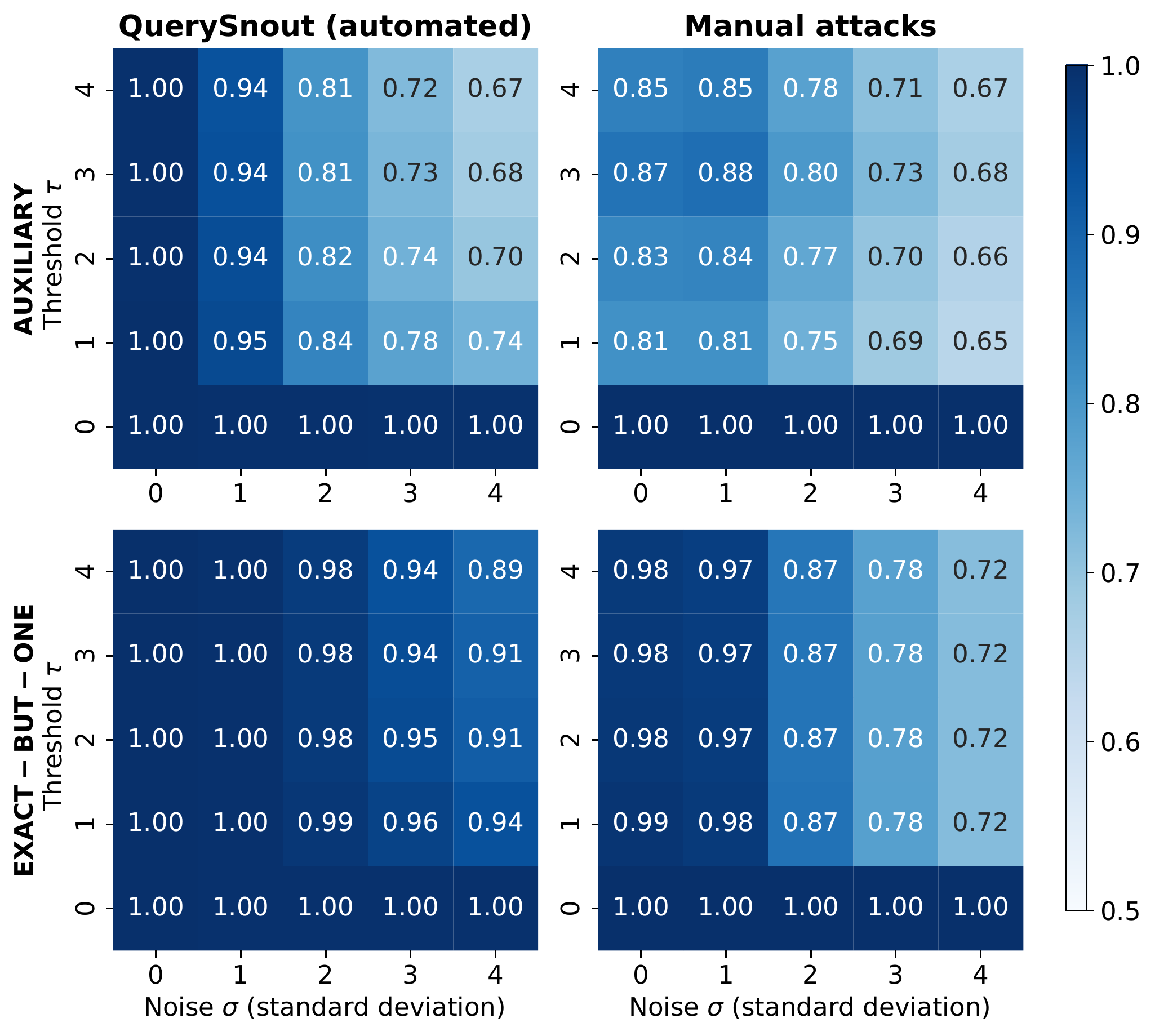}
}
\subfigure[Insurance]{\includegraphics[width=\linewidth]{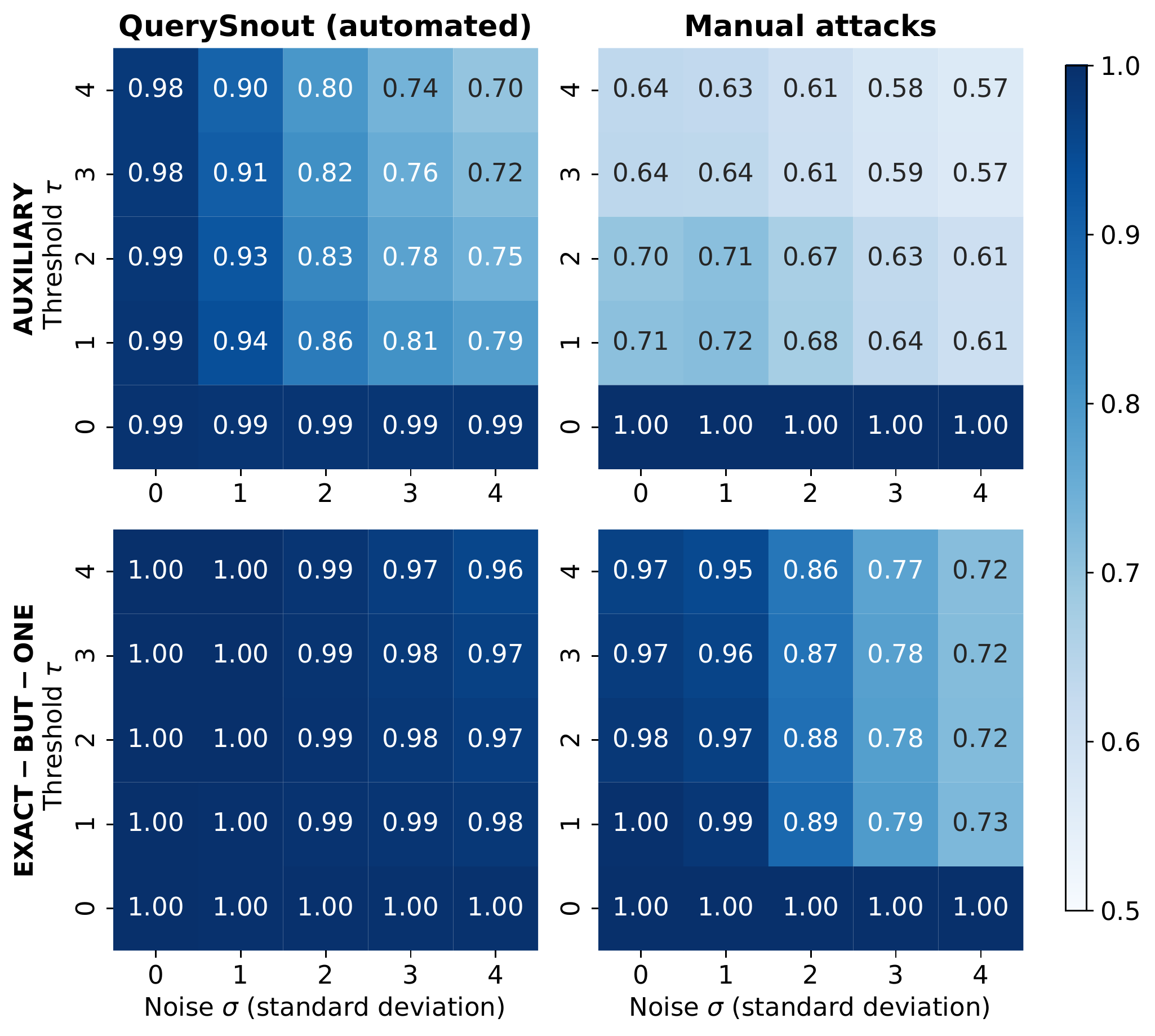}
}
\caption{\textbf{Comparison between automated and manual attacks against SimpleQBS($\tau, \sigma$) on the Census  and Insurance datasets.} 
We report the accuracy of attribute inference attacks discovered by QuerySnout (left) and of manual attacks tailored to each system (right) in the AUXILIARY (top) and EXACT-BUT-ONE (bottom) scenarios. The accuracy is averaged over 5 repetitions. 
}
\label{figure:qbs-simple_census-insurance}
\end{figure}

\begin{table*}[htbp!]
\centering
\caption{Comparison between QuerySnout and the random search approach on Diffix, TableBuilder,  SimpleQBS($\tau=4,\sigma=3$), and SimpleQBS($\tau=3,\sigma=4$). We report the difference $\Delta$ (in percentage points) between the accuracy of QuerySnout and the accuracy of the random search. We report the mean and standard deviation over 5 repetitions, for both the AUXILIARY (abbr. \textbf{AUX}) and EXACT-BUT-ONE (abbr. \textbf{EB1}) scenarios.}
\begin{tabular}{|l|ccc|ccc|ccc|ccc|}
\cline{2-13}
\multicolumn{1}{c|}{}
 & \multicolumn{3}{c|}{\textbf{Diffix}} & \multicolumn{3}{c|}{\textbf{TableBuilder}}  & \multicolumn{3}{c|}{\textbf{SimpleQBS($\tau=4,\sigma=3$)}} & \multicolumn{3}{c|}{\textbf{SimpleQBS($\tau=3,\sigma=4$)}} \\
 \hline
\textbf{AUX} & Adult & Census & Insurance & Adult & Census & Insurance & Adult & Census & Insurance & Adult & Census & Insurance \\
\hline
$\Delta$ & 6.8 (0.4) & 5.6 (0.8) & 6.5 (0.7) &  6.0 (0.3) & 4.1 (1.1) & 5.9 (1.5) & 6.4 (0.4) & 7.0 (0.4) & 6.7 (0.3) & 5.1 (0.4) & 4.9 (0.9) & 6.2 (0.4) \\
\hline
\textbf{EB1} & Adult & Census & Insurance & Adult & Census & Insurance & Adult & Census & Insurance & Adult & Census & Insurance \\
\hline
$\Delta$ & 4.8 (0.3) & 5.0 (0.3) & 4.7 (0.3) & 2.9 (1.0) & 3.3 (0.8) & 1.3 (0.5) & 7.5 (2.5) & 10.3 (2.2) & 4.1 (1.7) & 6.7 (1.7) & 8.4 (2.0) & 3.4 (1.8)\\
\hline
\end{tabular}\label{table:random-search}
\end{table*}

\begin{table*}[htbp!]
\centering
\caption{Comparison between QuerySnout and the random search approach on DPLaplace for different values of $\varepsilon$. We compute the absolute difference $\Delta$ (in percentage points) between the accuracy of QuerySnout and the accuracy of the random search. QuerySnout vastly outperforms the random search. We report the mean and standard deviation over 5 repetitions.}
\begin{tabular}{|l|ccc|ccc|ccc|}
\cline{2-10}
\multicolumn{1}{c|}{}
 & \multicolumn{3}{c|}{\textbf{DPLaplace($\varepsilon=1$)}} & \multicolumn{3}{c|}{\textbf{DPLaplace($\varepsilon=5$)}}  & \multicolumn{3}{c|}{\textbf{DPLaplace($\varepsilon=10$)}}  \\
 \hline
\textbf{AUXILIARY} & Adult & Census & Insurance & Adult & Census & Insurance & Adult & Census & Insurance \\
\hline
Absolute difference $\Delta$ & 10.8 (0.6) & 11.9 (0.7) & 11.4 (0.9) & 26.0 (0.7) & 25.8 (0.8) & 26.5 (0.4) & 17.7 (0.5) & 17.6 (0.8) & 17.9 (0.6)  \\
\hline
\textbf{EXACT-BUT-ONE} & Adult & Census & Insurance & Adult & Census & Insurance & Adult & Census & Insurance \\
\hline
Absolute difference $\Delta$ & 11.5 (0.8) & 12.2 (0.6) & 12.9 (1.0) & 29.1 (1.0) & 28.6 (1.1) & 29.3 (1.2) & 16.1 (0.5) & 15.6 (0.3) & 15.8 (0.5) \\
\hline
\end{tabular}\label{table:dp-laplace-random-search}
\end{table*}


\begin{algorithm}[htbp!]
\caption{\textsc{RandomSearch}}
\label{alg:random-search}
    \begin{algorithmic}[1]
        \Inputs{
        $P$: Population size (number of solutions). \\
        $m$: Number of queries in a solution. \\
        $N$: Number of generations. \\
        $D_{\text{aux}}$: Auxiliary datasets used to evaluate the fitness.\\
        }
        \Output{
        $best\in\mathcal{S}^m$: the best solution found in the search.
        }
        \Initialize{
        $best \gets None$\\
        $best\_fitness \gets - \infty$\\
        }
        \comm{Iterate over $N$ generations.}
        \For{$g = 1$ to $N$}
            \State{\comm{Generate a random population.}}\
            \For{$i = 1$ to $P$}
                \State{$solution \gets random\_solution(m)$}
                \State{$fit \gets \textsc{EvaluateFitness}(solution, D_{\text{aux}},$ $U_{\text{aux}})$}
                \If{$fit > best\_fitness$}
                    \State{$best\_fitness \gets fit$}
                    \State{$best \gets solution$}
                \EndIf
            \EndFor
        \EndFor
    \end{algorithmic}
\end{algorithm}

\subsection{Random search}
\label{appendix:random-search}

The random search procedure is a simple evolutionary algorithm where the population is entirely renewed at each generation, and the best solution found in all generations is returned at the end.
This is a baseline for evolutionary algorithms that allows to evaluate the gain in performance from the mutation operations.
We present the pseudocode for the random search procedure in Algorithm~\ref{alg:random-search} and compare its performance with  QuerySnout in Tables~\ref{table:random-search} and ~\ref{table:dp-laplace-random-search}. We refer the reader to the main paper (end of Sec.~\ref{subsec:dp-laplace-results}) for an analysis of the results.

\subsection{Impact of the dataset size}\label{appendix:scaling-dataset_size}

\begin{figure}[htbp!]
\centering
\includegraphics[width=\linewidth]{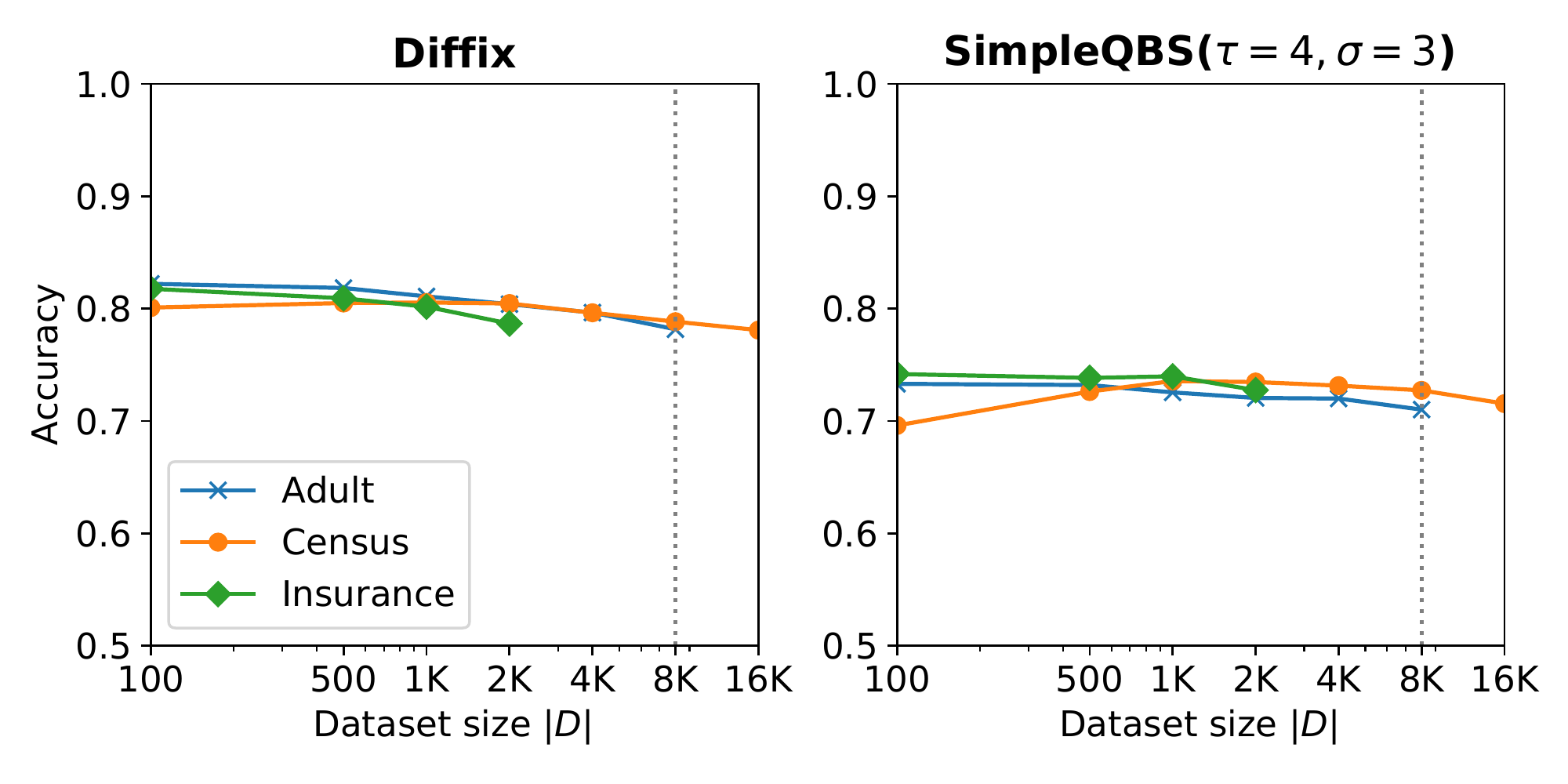}
\caption{\textbf{Impact of the dataset size on the performance of QuerySnout.} 
We show the accuracy of an attribute inference attack discovered by QuerySnout in the AUXILIARY scenario, for varying dataset sizes. The dotted vertical line represents the value $8000$ which we use for experiments on Adult and Census.
}
\label{figure:scaling-dataset-size}
\end{figure}

Fig.~\ref{figure:scaling-dataset-size} shows that the accuracy of attacks discovered by QuerySnout slowly increases as the dataset $|D|$ decreases (in the AUXILIARY scenario). This can partly explain why our attack performs better on Insurance ($|D| = 1000$) than on Adult ($|D| = 8000$). We believe that this might be due to partitions of the original dataset being of limited size (e.g., slightly larger than 16K for Adult), so that large datasets sampled from, e.g., $\traindist$ tend to overlap more and hence generalize less well. In our experiments, we found the validation accuracy to remain the same or slightly decrease with $|D|$, while the gap between the validation and test accuracy slighly increases. Note that these results were conducted using only one seed, for computational reasons. We leave the thorough investigation of reasons why this occurs for future work.

\subsection{Impact of search parameters}
\label{appendix:impact-other-parameters}

Fig.~\ref{figure:scaling-search-parameters} shows the impact of different search parameters on the accuracy of QuerySnout, as we vary them one at a time. We also study the impact of one-by-one removal of the different mutation components on different systems and report results in Table~\ref{table:ablation}. We refer the reader to Sec.~\ref{subsec:impact-search-parameters} for an analysis of the results.

\begin{figure*}[htbp!]
\centering

\subfigure{%
\includegraphics[width=0.3\textwidth]{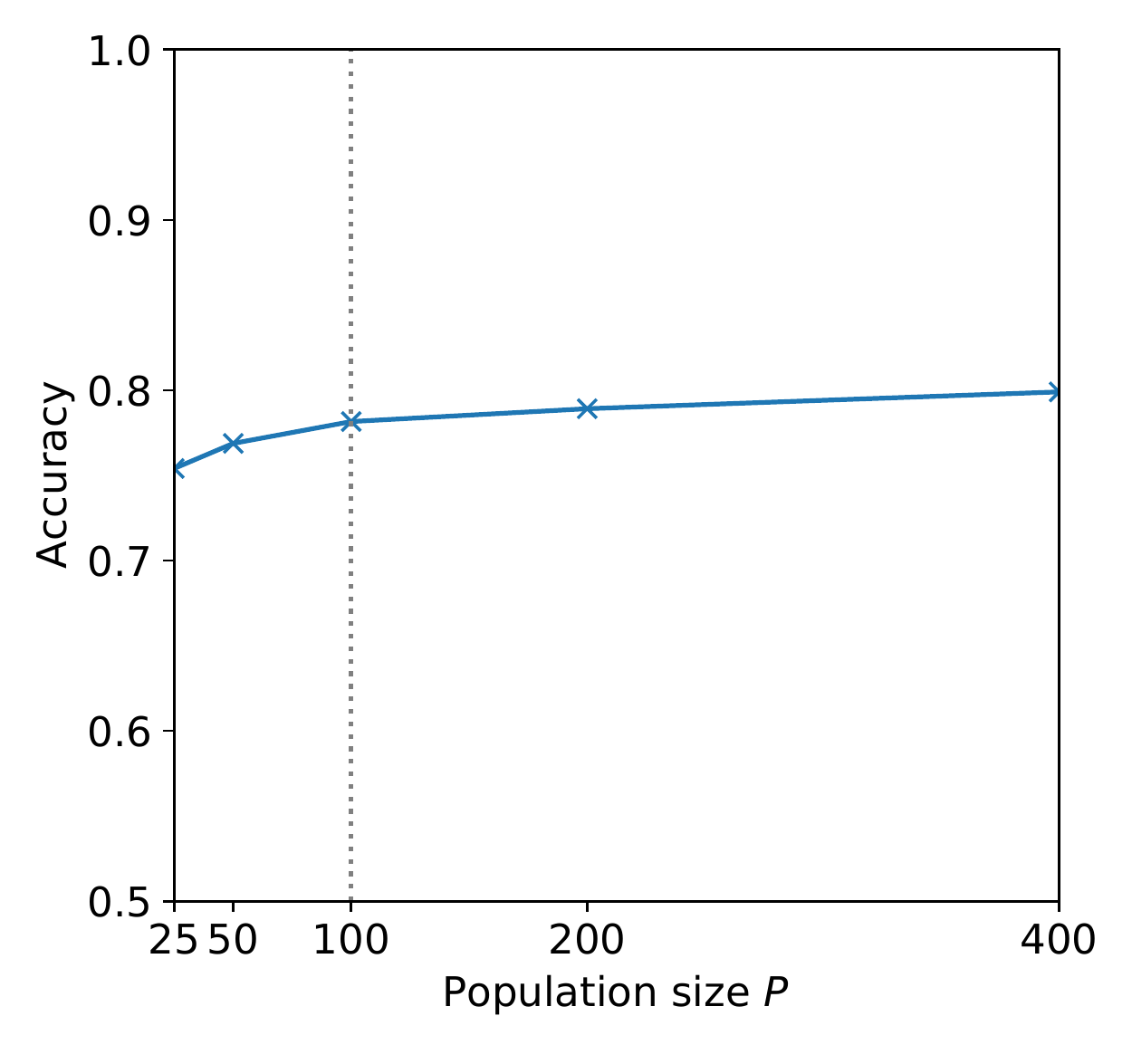}%
}\hfil
\subfigure{%
\includegraphics[width=0.3\textwidth]{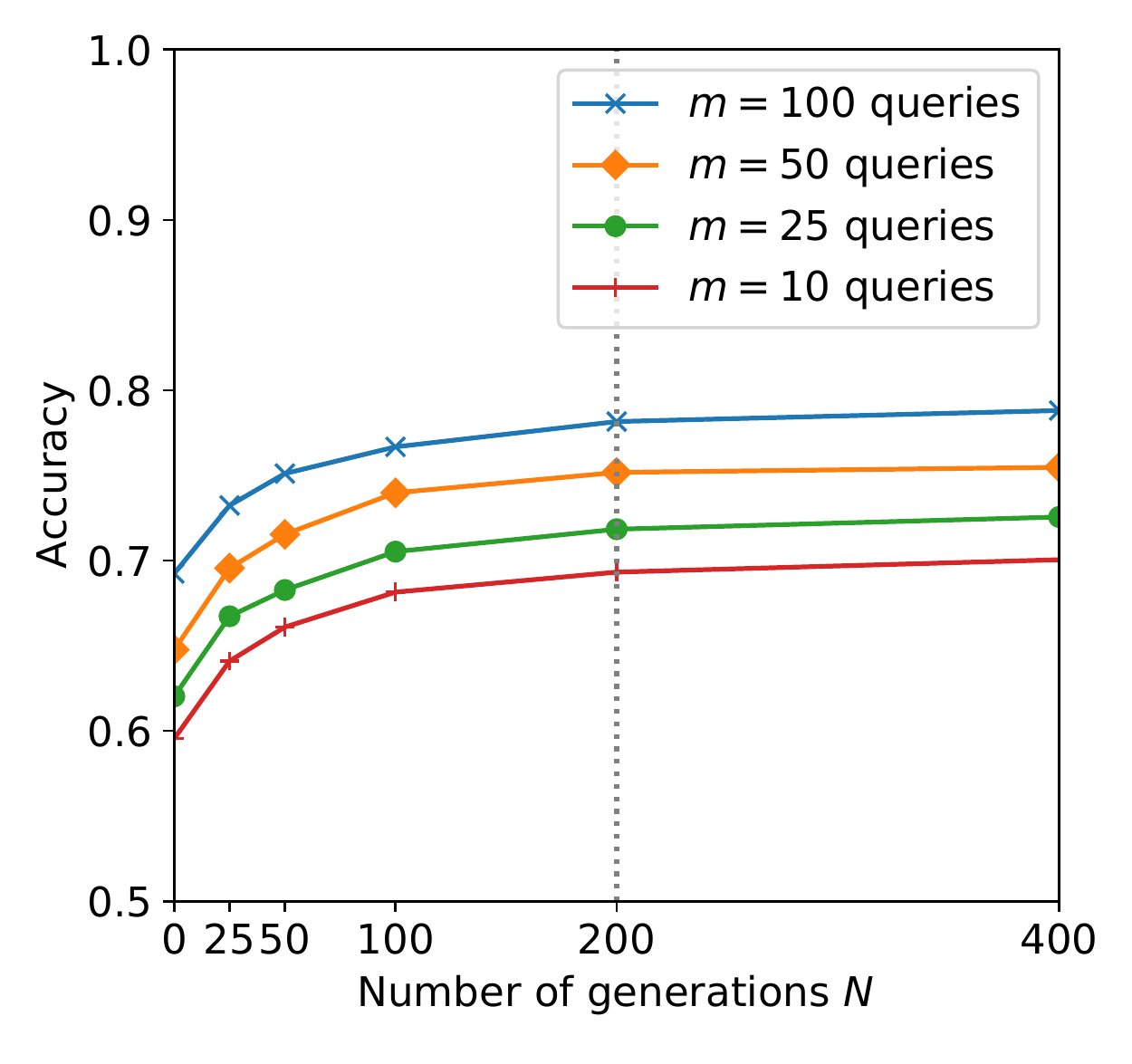}%
}\hfil
\subfigure{%
\includegraphics[width=0.3\textwidth]{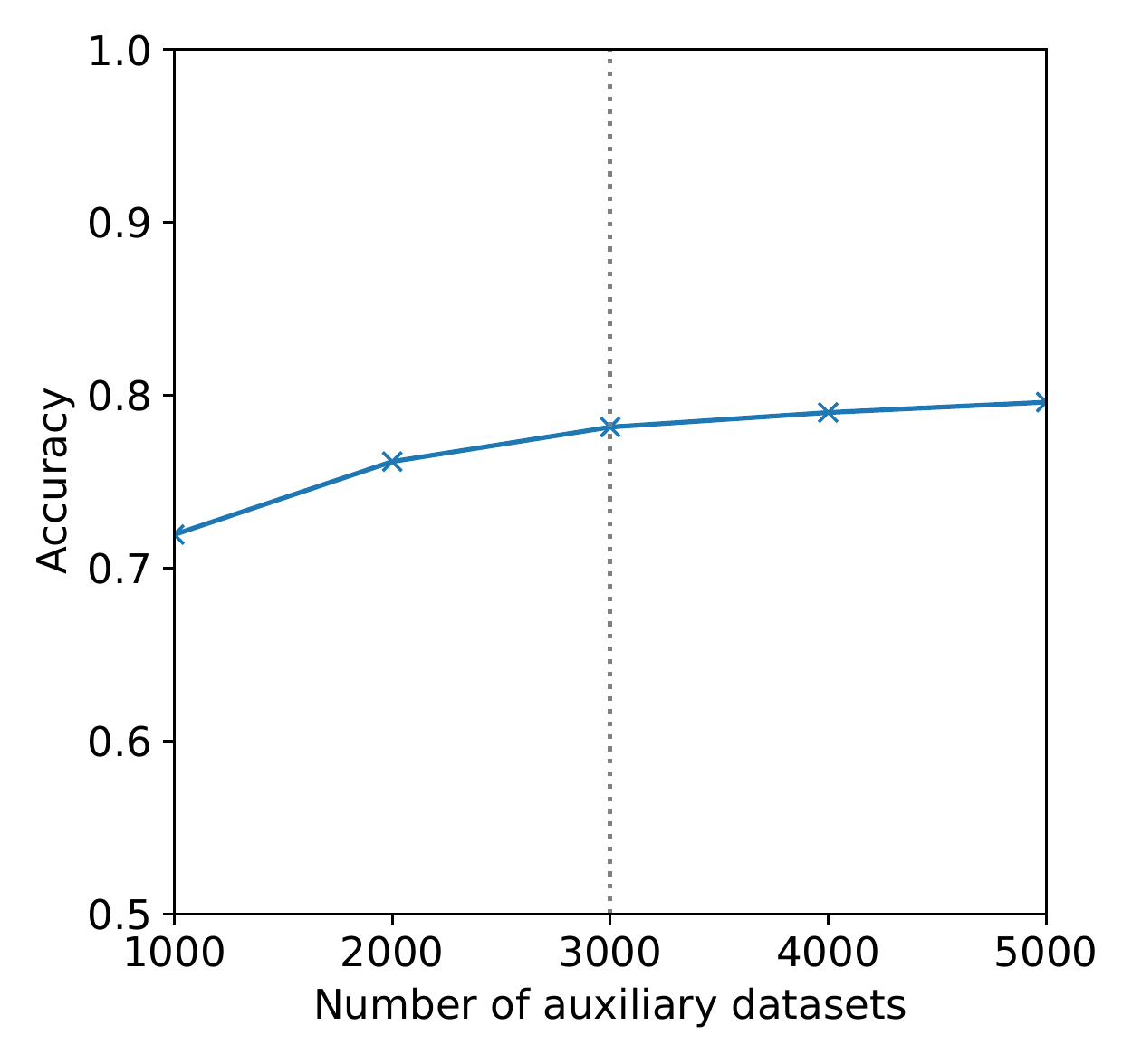}%
}

\caption{\textbf{Impact of various search parameters on the accuracy of QuerySnout.}
The results are computed on the Adult dataset in the AUXILIARY scenario. The vertical gray line indicates the default value used in the main paper experiments.
\vspace{0.5cm}
}

\label{figure:scaling-search-parameters}
\end{figure*}

\begin{table*}[htbp!]
\centering
\caption{\textbf{Ablation experiment in the AUXILIARY scenario.} We set the various probabilities controlling the evolutionary search to zero, one-by-one, keeping all the others equal to the default value. We find that the search is robust and that ``copy'' has the largest impact. We report the mean and standard deviation over 5 repetitions.}
\begin{tabular}{|l|cc|ccc|}
\cline{2-6}
\multicolumn{1}{c|}{}
 & \multicolumn{2}{c|}{\textbf{Deterministic systems}} & \multicolumn{3}{c|}{\textbf{Non-deterministic systems}}  \\
 \cline{2-6}
\multicolumn{1}{c|}{}
 & \textbf{Diffix}  & \textbf{TableBuilder} & \textbf{SimpleQBS($\tau=4,\sigma=3$)} & \textbf{DPLaplace($\varepsilon=1$)} & \textbf{DPLaplace($\varepsilon=10$)} \\
 \hline
Default values & 77.8 (0.5) & 84.5 (0.6) & 70.7 (1.1) & 62.9 (0.8) & 98.6 (0.4) \\
\hline
\hline
$p_{\text{copy}}=0$ & 76.1 (0.5) & 82.9 (0.3) & 67.2 (1.3) & 52.1 (0.3) & 93.0 (0.3) \\
\hline
\hline
$p_{\text{change}}=0$ & 76.1 (0.7) & 83.5 (0.5) & 70.5 (1.2) & 65.5 (1.0) & 98.5 (0.4) \\
\hline
$p_{\text{swap}}=0$ & 77.7 (0.6) & 84.5 (0.6) & 71.0 (1.1) & 63.4 (0.8) & 98.5 (0.4) \\
\hline
\end{tabular}\label{table:ablation}
\end{table*}

\subsection{Monotonicity of accuracy}
\label{appendix:monotonicity}

In section~\ref{sec:budget-based-mechanisms}, we propose a heuristic to apply QuerySnout to budget-based mechanisms, where a query repeated $k$ times is performed once with partial budget $\frac{k}{m}$.
This heuristic assumes assume \textit{monotonicity of accuracy} : repeating a query $k$ times with budget $\frac{1}{k}$ returns a less precise answer than answering the query one time with the full budget. Formally, $R$ satisfies monotonicity of accuracy if and only if for all $q\in \mathcal{Q}$, $d \in \mathcal{D}$, and $k \in \mathbb{N}, k>1$ (with $error:r\mapsto (r - T(d,q))^2$):
\[
\mathbb{E}\left[error\left(R(d,q,1)\right)\right]
\leq
\mathbb{E}\left[error\left(\sum_{i=1}^k \frac{R(d,q,1/k)}{k}
\right)\right]
\]
Note that this is a desirable property of QBS based on budget: if a QBS is non-monotonous, analysts are incentivized to repeat the same query as many times with as small a budget as possible, rather than perform queries directly, which is impractical.

We then show that the DPLaplace mechanism satisfies the monotonicity of accuracy property.

\begin{theorem}
The $R_\text{DPLaplace}$ be the DPLaplace mechanism as defined in section~\ref{subsec:dplaplace} satisfies monotonicity of accuracy for all budgets $\varepsilon>0$, i.e., for all $q\in \mathbb{Q}$, $d \in \mathcal{D}$, and $k \in \mathbb{N}, k>2$:
\[
\mathbb{E}\left[e\left(R_{\text{DPLaplace}(\varepsilon)}(d,q,1)\right)\right]
\leq
\mathbb{E}\left[e\left(\sum_{i=1}^k \frac{R_{\text{DPLaplace}(\varepsilon)}(d,q,1/k)}{k}
\right)\right]
\]
where $e:r\mapsto (r - T(d,q))^2$ is the L$_2$ error.
\end{theorem}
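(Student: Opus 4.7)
The plan is to reduce the claim to a direct variance comparison for independent Laplace noises, exploiting that the Laplace noise scale is inversely proportional to the partial budget.

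First, I would observe that for a counting query $q$ on a fixed dataset $d$, the true answer $T(d,q)$ is deterministic (in fact, a non-negative integer), and the raw DPLaplace output decomposes as $R_{\text{DPLaplace}(\varepsilon)}(d,q,p) = T(d,q) + L$ with $L \sim \mathrm{Lap}((p\varepsilon)^{-1})$ and $\mathbb{E}[L]=0$. Both estimators on the two sides of the target inequality — the single-query, full-budget estimator and the average of $k$ independent, $1/k$-budget estimators — are therefore unbiased for $T(d,q)$, so the expected values of $e$ on each side equal the corresponding variances.

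Second, I would compute both variances explicitly. Using $\mathrm{Var}(\mathrm{Lap}(b)) = 2b^2$, the single-query error variance is $2/\varepsilon^2$. For the averaged estimator, let $L^{(1)},\dots,L^{(k)}$ be i.i.d.\ $\mathrm{Lap}(k/\varepsilon)$ random variables (one per query). By independence, the variance of their mean is $\tfrac{1}{k^2}\cdot k\cdot 2k^2/\varepsilon^2 = 2k/\varepsilon^2$. Since $k>1$, this is strictly larger than $2/\varepsilon^2$ (by a factor of $k$), which establishes the inequality for the raw mechanism.

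Third, the statement should be extended past the deterministic post-processing $\phi(x)=\lfloor\max(x,0)\rceil$ used by the QBS wrapper in Section~\ref{subsec:query-based-systems}. For the single-query side, $\phi$ cannot inflate the error meaningfully: thresholding at $0$ is a contraction toward $[0,\infty)\supseteq\{T(d,q)\}$, and rounding a quantity of the form $T(d,q)+L$ with $T(d,q)\in\mathbb{Z}$ and $L$ symmetric about $0$ leaves the estimator unbiased by the symmetry of $\lfloor\cdot\rceil$. For the averaged side, $\phi$ is applied per-query before averaging, and a bias-variance decomposition of the post-processed mean still yields a variance term of order $k/\varepsilon^2$ up to rounding corrections that are $O(1)$ per query.

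The main obstacle is precisely this last step: $\phi$ is non-linear and does not commute with averaging, so one cannot simply plug the variance computation into the post-processed estimators. My plan for closing this is to couple the noise realizations on both sides and show that the variance gap $2(k-1)/\varepsilon^2$ produced by the Laplace computation dominates any $O(1)$ correction introduced by $\phi$; in the regimes where $\varepsilon$ is small enough that rounding could matter, the factor $2(k-1)/\varepsilon^2$ is already very large, which should make the dominance clean. A short case split on whether $T(d,q)$ is close enough to $0$ for thresholding to become active is likely what is needed to push the argument through rigorously.
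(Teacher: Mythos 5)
Your first two steps are exactly the paper's proof: both sides are unbiased estimators of $T(d,q)$, so the expected squared errors are the variances $2/\varepsilon^2$ and $\frac{1}{k^2}\cdot k\cdot\frac{2k^2}{\varepsilon^2}=\frac{2k}{\varepsilon^2}$, and the inequality follows since $k>1$ (you are in fact slightly more careful than the paper in attributing the vanishing cross terms to independence rather than to ``linearity of expectation''). Your third step is not needed for the statement as written --- $R_{\text{DPLaplace}}$ as defined in Sec.~\ref{subsec:dplaplace} is the raw Laplace mechanism, with rounding and thresholding applied as a separate post-processing step that the theorem (and the paper's proof) does not cover --- so the fact that you only sketch a plan there is not a gap in the proof of the stated claim, though it would need to be completed if one wanted monotonicity for the post-processed outputs.
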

\begin{proof}
Define $L = R_{\text{DPLaplace}(\varepsilon)}(d,q,1) - T(d,q)$. By definition, observe that $L \sim Lap(\frac{1}{\varepsilon})$. By definition, we have that:
\[
\mathbb{E}\left[e\left(R_{\text{DPLaplace}(\varepsilon)}(d,q,1)\right)\right] = \mathbb{E}\left[L^2\right] = \frac{2}{\varepsilon^2}
\]
Similarly, define $L_i = R_{\text{DPLaplace}(\varepsilon)}(d,q,1/k) - T(d,q)$ for $i=1,\dots,k$, and observe that $L_i \sim Lap(\frac{k}{\varepsilon})$. We have that:
\[
\mathbb{E}\left[e\left(\sum_{i=1}^k \frac{R_{\text{DPLaplace}(\varepsilon)}(d,q,1/k)}{k}
\right)\right] = \mathbb{E}\left[\left(\frac{\sum_{i=1}^k L_i}{k}\right)^2\right]
\]
By linearity of the expectation:
\[
\mathbb{E}\left[\left(\frac{\sum_{i=1}^k L_i}{k}\right)^2\right] = \frac{1}{k^2} \cdot \sum_{i=1}^k \mathbb{E}[L_i^2] = \frac{1}{k}\frac{2 k^2}{\varepsilon^2} = \frac{2 k}{\varepsilon^2} > \frac{2}{\varepsilon^2},
\]
which concludes the proof.
\end{proof}

\subsection{Optimality of the uniqueness attack against DPLaplace}
\label{appendix:dplaplace}
Since DPLaplace does not use query-set size restriction, we propose a simple \textit{uniqueness} attack that directly targets the user's record:
\[
q = \text{COUNT}\left(\land_{a \in \attrset'}(a = r^u_a) \land (a_n = 1)\right).
\]
This query is such that it reveals the target user's secret attribute, with some added noise:
\[
R := R(d,q^*,p) = I\{r^{u}_n = 1\} + L,~~L \sim Lap(\frac{1}{p\varepsilon}).
\]
Because the QBS has monotonicity of accuracy (see Appendix~\ref{appendix:monotonicity}), the best choice of $p$ possible is $p=1$, i.e. using all the budget for this query.
Hence, $R\sim Lap(0, \varepsilon^{-1})$ if the user's target record is $0$, and $R \sim Lap(1, \varepsilon^{-1})$ otherwise.We can thus use a likelihood ratio test to determine the target record's value. In the continuous case, this gives the rule:
\[
\widehat{r^{u}_n = 1} \Longleftrightarrow \frac{pdf_{Lap(1,\varepsilon^{-1})}(R)}{pdf_{Lap(0,\varepsilon^{-1})}(R)} \geq 1 \Longleftrightarrow R \geq \frac{1}{2},
\]
For the rounded and thresholded at zero QBS that we report results on, the same rule applies. Indeed, let $P_{\mathcal{L}}(n)$ the probability of sampling $n \in \mathbb{N}$ from rounding and thresholding of a sample of $L \sim \mathcal{L}$. By developing the pdf of the Laplace distribution and integrating over bins, we get:
\begin{itemize}

    \item $P_{Lap(0,\varepsilon^{-1})}(0) = \frac{1}{2} + \int_0^{\frac{1}{2}} \frac{\varepsilon}{2} e^{ - x \cdot \varepsilon} dx > \frac{1}{2}$,

    \item $P_{Lap(1,\varepsilon^{-1})}(0) = \int_{-\infty}^{\frac{1}{2}} \frac{\varepsilon}{2} e^{ (x-1) \cdot \varepsilon} dx < \frac{1}{2}$,

    \item $\forall r \in \mathbb{N}_0:~P_{Lap(0,\varepsilon^{-1})}(r) = \int_{r - \frac{1}{2}}^{r + \frac{1}{2}} \frac{\varepsilon}{2} e^{ -x \cdot \varepsilon} dx$,

    \item $
    \begin{array}{ll}
        P_{Lap(1,\varepsilon^{-1})}(1) & = \int_{\frac{1}{2}}^{1} \frac{\varepsilon}{2} e^{ (x-1) \cdot \varepsilon} dx + \int_{1}^{\frac{3}{2}} \frac{\varepsilon}{2} e^{ (1-x) \cdot \varepsilon} dx \\
         & = P_{Lap(0,\varepsilon^{-1})}(1) + f(\varepsilon)
    \end{array}
 $, where $f(\varepsilon) = 1 -\frac{3}{2}e^{\frac{\varepsilon}{2}} + \frac{1}{2}e^{\frac{3\varepsilon}{2}} > 0,~~\forall \varepsilon>0$ (since $f(0) = 0$ and $f'(\varepsilon) > 0$ for $\varepsilon>0$),

    \item $\forall r \in \mathbb{N}\setminus\{0,1\}:~P_{Lap(1,\varepsilon^{-1})}(r) = \int_{r - \frac{1}{2}}^{r + \frac{1}{2}} \frac{\varepsilon}{2} e^{ (1 - x) \cdot \varepsilon} dx = e^\varepsilon P_{Lap(0,\varepsilon^-1)}(r) > P_{Lap(0,\varepsilon^-1)}(r)$.

\end{itemize}
We thus observe that for all $n \in \mathbb{N}$, $P_{Lap(0,\varepsilon^{-1})}(n) \geq P_{Lap(1,\varepsilon^{-1})}(n)$ iff $n = 0$.
These tests have maximum statistical power, thanks to the Neyman-Pearson lemma~\cite{neyman1933ix}.

\clearpage
\subsection{Example of solutions}
\label{appendix:solutions-found}
We report the best solution discovered by our attack against various QBSes in run \#1 on the Adult dataset in the AUXILIARY scenario. In each figure, we select a random target record and display each unique query in the corresponding solution, line by line, along with the number of times this query is repeated in the solution (in the rightmost column).
We write each condition $a_i \; c_i \; v_i$ (with $v_i = r^u_{a_i}$) in a query as \textcolor{pos}{\textsf{\textbf{ai = vi}}} for $c_i=$``='', \textcolor{neg}{\textsf{\textbf{ai != vi}}} for $c_i=$``$\neq$'', and no condition for $c_i = \perp$.
We use $s$(ensitive) to denote the sensitive attribute and write conditions on that attribute as \textcolor{pos}{\textsf{\textbf{s = 1}}} ($a_n \neq 0$) and \textcolor{neg}{\textsf{\textbf{s = 0}}} ($a_n = 0$). 
For readability, we write conditions for attribute $a_i$ in column $i$, and leave the column blank if no condition is put on the attribute.
We refer to a conjunction between conditions using the character $\land$. For readability purposes, we leave an empty space whenever there is no condition on a given attribute.

\textbf{Diffix and TableBuilder. }
We report the attacks found for Diffix and TableBuilder.
As expected, both solutions contain difference queries (queries \#1-\#36 for Diffix, and \#1-\#12 for TableBuilder).
This is in line with prior work, which showed that both Diffix~\cite{gadotti2019signal} and TableBuilder~\cite{chipperfield2016,rinott2018} can be attacked with queries similar to difference attacks.
We also note that the optimal solutions contain few repeated queries, which is expected, since both systems are deterministic.

\newpage
\begin{framed}
\noindent\textbf{Example of solution found against Diffix in the AUXILIARY scenario.}
\[
{\customsize
\begin{array}{llllllll}
 \multicolumn{3}{l}{\textbf{Difference queries}} \\
 (1) &  & \land~ & \land~ & \land~\textcolor{pos}{a_{4} = v_{4}} & \land~ & \land~\textcolor{neg}{s=0} & \\
 (2) &  & \land~ & \land~\textcolor{neg}{a_{3} \neq v_{3}} & \land~\textcolor{pos}{a_{4} = v_{4}} & \land~ & \land~\textcolor{neg}{s=0} & \\
 (3) &  & \land~ & \land~\textcolor{pos}{a_{3} = v_{3}} & \land~ & \land~ & \land~\textcolor{neg}{s=0} & \\
 (4) &  & \land~\textcolor{neg}{a_{2} \neq v_{2}} & \land~\textcolor{pos}{a_{3} = v_{3}} & \land~ & \land~ & \land~\textcolor{neg}{s=0} & \\
 (5) & \textcolor{pos}{a_{1} = v_{1}} & \land~ & \land~ & \land~ & \land~ & \land~\textcolor{neg}{s=0} & \\
 (6) & \textcolor{pos}{a_{1} = v_{1}} & \land~ & \land~\textcolor{neg}{a_{3} \neq v_{3}} & \land~ & \land~ & \land~\textcolor{neg}{s=0} & \\
 (7) & \textcolor{pos}{a_{1} = v_{1}} & \land~ & \land~ & \land~\textcolor{neg}{a_{4} \neq v_{4}} & \land~ & \land~\textcolor{neg}{s=0} & \\
 (8) &  & \land~ & \land~\textcolor{pos}{a_{3} = v_{3}} & \land~ & \land~\textcolor{pos}{a_{5} = v_{5}} & \land~\textcolor{pos}{s=1} & \\
 (9) & \textcolor{neg}{a_{1} \neq v_{1}} & \land~ & \land~\textcolor{pos}{a_{3} = v_{3}} & \land~ & \land~\textcolor{pos}{a_{5} = v_{5}} & \land~\textcolor{pos}{s=1} & \\
 (10) &  & \land~ & \land~\textcolor{pos}{a_{3} = v_{3}} & \land~\textcolor{pos}{a_{4} = v_{4}} & \land~ & \land~\textcolor{neg}{s=0} & \\
 (11) & \textcolor{neg}{a_{1} \neq v_{1}} & \land~ & \land~\textcolor{pos}{a_{3} = v_{3}} & \land~\textcolor{pos}{a_{4} = v_{4}} & \land~ & \land~\textcolor{neg}{s=0} & \\
 (12) &  & \land~\textcolor{pos}{a_{2} = v_{2}} & \land~ & \land~ & \land~\textcolor{pos}{a_{5} = v_{5}} & \land~\textcolor{neg}{s=0} & \\
 (13) &  & \land~\textcolor{pos}{a_{2} = v_{2}} & \land~ & \land~\textcolor{neg}{a_{4} \neq v_{4}} & \land~\textcolor{pos}{a_{5} = v_{5}} & \land~\textcolor{neg}{s=0} & \\
 (14) & \textcolor{pos}{a_{1} = v_{1}} & \land~ & \land~ & \land~ & \land~\textcolor{pos}{a_{5} = v_{5}} & \land~\textcolor{neg}{s=0} & \\
 (15) & \textcolor{pos}{a_{1} = v_{1}} & \land~ & \land~\textcolor{neg}{a_{3} \neq v_{3}} & \land~ & \land~\textcolor{pos}{a_{5} = v_{5}} & \land~\textcolor{neg}{s=0} & \\
 (16) & \textcolor{pos}{a_{1} = v_{1}} & \land~ & \land~ & \land~\textcolor{pos}{a_{4} = v_{4}} & \land~ & \land~\textcolor{neg}{s=0} & \\
 (17) & \textcolor{pos}{a_{1} = v_{1}} & \land~ & \land~\textcolor{neg}{a_{3} \neq v_{3}} & \land~\textcolor{pos}{a_{4} = v_{4}} & \land~ & \land~\textcolor{neg}{s=0} & \mathbf{(\times 2)}\\
 (18) & \textcolor{pos}{a_{1} = v_{1}} & \land~ & \land~ & \land~\textcolor{pos}{a_{4} = v_{4}} & \land~ & \land~\textcolor{pos}{s=1} & \\
 (19) & \textcolor{pos}{a_{1} = v_{1}} & \land~ & \land~\textcolor{neg}{a_{3} \neq v_{3}} & \land~\textcolor{pos}{a_{4} = v_{4}} & \land~ & \land~\textcolor{pos}{s=1} & \\
 (20) & \textcolor{pos}{a_{1} = v_{1}} & \land~\textcolor{pos}{a_{2} = v_{2}} & \land~ & \land~ & \land~ & \land~\textcolor{neg}{s=0} & \\
 (21) & \textcolor{pos}{a_{1} = v_{1}} & \land~\textcolor{pos}{a_{2} = v_{2}} & \land~\textcolor{neg}{a_{3} \neq v_{3}} & \land~ & \land~ & \land~\textcolor{neg}{s=0} & \\
 (22) &  & \land~ & \land~\textcolor{pos}{a_{3} = v_{3}} & \land~\textcolor{pos}{a_{4} = v_{4}} & \land~\textcolor{pos}{a_{5} = v_{5}} & \land~\textcolor{pos}{s=1} & \\
 (23) & \textcolor{neg}{a_{1} \neq v_{1}} & \land~ & \land~\textcolor{pos}{a_{3} = v_{3}} & \land~\textcolor{pos}{a_{4} = v_{4}} & \land~\textcolor{pos}{a_{5} = v_{5}} & \land~\textcolor{pos}{s=1} & \\
 (24) &  & \land~\textcolor{pos}{a_{2} = v_{2}} & \land~ & \land~\textcolor{pos}{a_{4} = v_{4}} & \land~\textcolor{pos}{a_{5} = v_{5}} & \land~\textcolor{pos}{s=1} & \\
 (25) & \textcolor{neg}{a_{1} \neq v_{1}} & \land~\textcolor{pos}{a_{2} = v_{2}} & \land~ & \land~\textcolor{pos}{a_{4} = v_{4}} & \land~\textcolor{pos}{a_{5} = v_{5}} & \land~\textcolor{pos}{s=1} & \\
 (26) &  & \land~\textcolor{pos}{a_{2} = v_{2}} & \land~\textcolor{pos}{a_{3} = v_{3}} & \land~\textcolor{pos}{a_{4} = v_{4}} & \land~ & \land~\textcolor{pos}{s=1} & \\
 (27) & \textcolor{neg}{a_{1} \neq v_{1}} & \land~\textcolor{pos}{a_{2} = v_{2}} & \land~\textcolor{pos}{a_{3} = v_{3}} & \land~\textcolor{pos}{a_{4} = v_{4}} & \land~ & \land~\textcolor{pos}{s=1} & \\
 (28) & \textcolor{pos}{a_{1} = v_{1}} & \land~ & \land~ & \land~\textcolor{pos}{a_{4} = v_{4}} & \land~\textcolor{pos}{a_{5} = v_{5}} & \land~\textcolor{neg}{s=0} & \\
 (29) & \textcolor{pos}{a_{1} = v_{1}} & \land~ & \land~\textcolor{neg}{a_{3} \neq v_{3}} & \land~\textcolor{pos}{a_{4} = v_{4}} & \land~\textcolor{pos}{a_{5} = v_{5}} & \land~\textcolor{neg}{s=0} & \\
 (30) & \textcolor{pos}{a_{1} = v_{1}} & \land~ & \land~ & \land~\textcolor{pos}{a_{4} = v_{4}} & \land~\textcolor{pos}{a_{5} = v_{5}} & \land~\textcolor{pos}{s=1} & \\
 (31) & \textcolor{pos}{a_{1} = v_{1}} & \land~\textcolor{neg}{a_{2} \neq v_{2}} & \land~ & \land~\textcolor{pos}{a_{4} = v_{4}} & \land~\textcolor{pos}{a_{5} = v_{5}} & \land~\textcolor{pos}{s=1} & \\
 (32) & \textcolor{pos}{a_{1} = v_{1}} & \land~ & \land~\textcolor{neg}{a_{3} \neq v_{3}} & \land~\textcolor{pos}{a_{4} = v_{4}} & \land~\textcolor{pos}{a_{5} = v_{5}} & \land~\textcolor{pos}{s=1} & \\
 (33) & \textcolor{pos}{a_{1} = v_{1}} & \land~\textcolor{pos}{a_{2} = v_{2}} & \land~ & \land~ & \land~\textcolor{pos}{a_{5} = v_{5}} & \land~\textcolor{pos}{s=1} & \\
 (34) & \textcolor{pos}{a_{1} = v_{1}} & \land~\textcolor{pos}{a_{2} = v_{2}} & \land~\textcolor{neg}{a_{3} \neq v_{3}} & \land~ & \land~\textcolor{pos}{a_{5} = v_{5}} & \land~\textcolor{pos}{s=1} & \\
 (35) & \textcolor{pos}{a_{1} = v_{1}} & \land~\textcolor{pos}{a_{2} = v_{2}} & \land~\textcolor{pos}{a_{3} = v_{3}} & \land~ & \land~ & \land~\textcolor{pos}{s=1} & \\
 (36) & \textcolor{pos}{a_{1} = v_{1}} & \land~\textcolor{pos}{a_{2} = v_{2}} & \land~\textcolor{pos}{a_{3} = v_{3}} & \land~ & \land~\textcolor{neg}{a_{5} \neq v_{5}} & \land~\textcolor{pos}{s=1} & \vspace{\customspace}\\
 \multicolumn{3}{l}{\textbf{Other queries}} \\
 (37) &  & \land~ & \land~\textcolor{neg}{a_{3} \neq v_{3}} & \land~ & \land~ & \\
 (38) &  & \land~ & \land~ & \land~ & \land~\textcolor{neg}{a_{5} \neq v_{5}} & \\
 (39) & \textcolor{pos}{a_{1} = v_{1}} & \land~ & \land~ & \land~ & \land~ & \\
 (40) &  & \land~ & \land~\textcolor{neg}{a_{3} \neq v_{3}} & \land~\textcolor{pos}{a_{4} = v_{4}} & \land~ & \\
 (41) & \textcolor{pos}{a_{1} = v_{1}} & \land~ & \land~ & \land~\textcolor{pos}{a_{4} = v_{4}} & \land~ & \\
 (42) & \textcolor{neg}{a_{1} \neq v_{1}} & \land~ & \land~\textcolor{neg}{a_{3} \neq v_{3}} & \land~\textcolor{neg}{a_{4} \neq v_{4}} & \land~ &  & \mathbf{(\times 2)}\\
 (43) & \textcolor{neg}{a_{1} \neq v_{1}} & \land~ & \land~ & \land~\textcolor{pos}{a_{4} = v_{4}} & \land~\textcolor{neg}{a_{5} \neq v_{5}} & \\
 (44) & \textcolor{neg}{a_{1} \neq v_{1}} & \land~ & \land~ & \land~\textcolor{pos}{a_{4} = v_{4}} & \land~ & \land~\textcolor{pos}{s=1} & \\
 (45) & \textcolor{neg}{a_{1} \neq v_{1}} & \land~ & \land~\textcolor{pos}{a_{3} = v_{3}} & \land~ & \land~ & \land~\textcolor{pos}{s=1} & \\
 (46) & \textcolor{neg}{a_{1} \neq v_{1}} & \land~\textcolor{pos}{a_{2} = v_{2}} & \land~ & \land~\textcolor{neg}{a_{4} \neq v_{4}} & \land~ & \\
 (47) & \textcolor{neg}{a_{1} \neq v_{1}} & \land~\textcolor{pos}{a_{2} = v_{2}} & \land~ & \land~ & \land~ & \land~\textcolor{pos}{s=1} & \\
 (48) &  & \land~\textcolor{neg}{a_{2} \neq v_{2}} & \land~ & \land~\textcolor{pos}{a_{4} = v_{4}} & \land~\textcolor{neg}{a_{5} \neq v_{5}} & & \mathbf{(\times 2)}\\
 (49) &  & \land~\textcolor{neg}{a_{2} \neq v_{2}} & \land~ & \land~\textcolor{pos}{a_{4} = v_{4}} & \land~ & \land~\textcolor{pos}{s=1} & \\
 (50) &  & \land~ & \land~\textcolor{neg}{a_{3} \neq v_{3}} & \land~ & \land~\textcolor{neg}{a_{5} \neq v_{5}} & \land~\textcolor{neg}{s=0} & \\
 (51) &  & \land~ & \land~\textcolor{neg}{a_{3} \neq v_{3}} & \land~\textcolor{pos}{a_{4} = v_{4}} & \land~ & \land~\textcolor{pos}{s=1} & \\
 (52) &  & \land~ & \land~\textcolor{pos}{a_{3} = v_{3}} & \land~\textcolor{pos}{a_{4} = v_{4}} & \land~ & \land~\textcolor{pos}{s=1} & \\
 (53) &  & \land~\textcolor{pos}{a_{2} = v_{2}} & \land~\textcolor{pos}{a_{3} = v_{3}} & \land~\textcolor{pos}{a_{4} = v_{4}} & \land~ & \\
 (54) & \textcolor{pos}{a_{1} = v_{1}} & \land~ & \land~ & \land~\textcolor{neg}{a_{4} \neq v_{4}} & \land~ & \land~\textcolor{pos}{s=1} & \\
 (55) & \textcolor{pos}{a_{1} = v_{1}} & \land~ & \land~ & \land~\textcolor{neg}{a_{4} \neq v_{4}} & \land~\textcolor{pos}{a_{5} = v_{5}} & \\
 (56) & \textcolor{pos}{a_{1} = v_{1}} & \land~ & \land~\textcolor{pos}{a_{3} = v_{3}} & \land~\textcolor{pos}{a_{4} = v_{4}} & \land~ & \\
 (57) & \textcolor{neg}{a_{1} \neq v_{1}} & \land~\textcolor{neg}{a_{2} \neq v_{2}} & \land~\textcolor{pos}{a_{3} = v_{3}} & \land~ & \land~ & \land~\textcolor{neg}{s=0} & \\
 (58) & \textcolor{neg}{a_{1} \neq v_{1}} & \land~\textcolor{neg}{a_{2} \neq v_{2}} & \land~\textcolor{pos}{a_{3} = v_{3}} & \land~ & \land~\textcolor{pos}{a_{5} = v_{5}} & \\
 (59) & \textcolor{neg}{a_{1} \neq v_{1}} & \land~ & \land~\textcolor{neg}{a_{3} \neq v_{3}} & \land~\textcolor{pos}{a_{4} = v_{4}} & \land~ & \land~\textcolor{neg}{s=0} & \\
 (60) & \textcolor{neg}{a_{1} \neq v_{1}} & \land~ & \land~ & \land~\textcolor{neg}{a_{4} \neq v_{4}} & \land~\textcolor{neg}{a_{5} \neq v_{5}} & \land~\textcolor{neg}{s=0} & \\
 (61) & \textcolor{neg}{a_{1} \neq v_{1}} & \land~ & \land~ & \land~\textcolor{pos}{a_{4} = v_{4}} & \land~\textcolor{pos}{a_{5} = v_{5}} & \land~\textcolor{neg}{s=0} & \\
 (62) & \textcolor{neg}{a_{1} \neq v_{1}} & \land~\textcolor{pos}{a_{2} = v_{2}} & \land~\textcolor{neg}{a_{3} \neq v_{3}} & \land~ & \land~\textcolor{neg}{a_{5} \neq v_{5}} & \\
 (63) & \textcolor{neg}{a_{1} \neq v_{1}} & \land~\textcolor{pos}{a_{2} = v_{2}} & \land~ & \land~ & \land~\textcolor{neg}{a_{5} \neq v_{5}} & \land~\textcolor{neg}{s=0} & \\
 (64) & \textcolor{neg}{a_{1} \neq v_{1}} & \land~\textcolor{pos}{a_{2} = v_{2}} & \land~ & \land~\textcolor{pos}{a_{4} = v_{4}} & \land~ & \land~\textcolor{neg}{s=0} & \\
 (65) &  & \land~ & \land~\textcolor{neg}{a_{3} \neq v_{3}} & \land~\textcolor{pos}{a_{4} = v_{4}} & \land~\textcolor{neg}{a_{5} \neq v_{5}} & \land~\textcolor{pos}{s=1} & \mathbf{(\times 2)}\\
 (66) & \textcolor{pos}{a_{1} = v_{1}} & \land~ & \land~\textcolor{neg}{a_{3} \neq v_{3}} & \land~ & \land~\textcolor{neg}{a_{5} \neq v_{5}} & \land~\textcolor{neg}{s=0} & \\
 (67) & \textcolor{pos}{a_{1} = v_{1}} & \land~ & \land~\textcolor{neg}{a_{3} \neq v_{3}} & \land~\textcolor{pos}{a_{4} = v_{4}} & \land~\textcolor{pos}{a_{5} = v_{5}} & \\
 (68) & \textcolor{pos}{a_{1} = v_{1}} & \land~ & \land~ & \land~\textcolor{neg}{a_{4} \neq v_{4}} & \land~\textcolor{pos}{a_{5} = v_{5}} & \land~\textcolor{pos}{s=1} & \\
 (69) & \textcolor{neg}{a_{1} \neq v_{1}} & \land~\textcolor{neg}{a_{2} \neq v_{2}} & \land~\textcolor{neg}{a_{3} \neq v_{3}} & \land~\textcolor{neg}{a_{4} \neq v_{4}} & \land~\textcolor{pos}{a_{5} = v_{5}} & \\
 (70) & \textcolor{neg}{a_{1} \neq v_{1}} & \land~ & \land~\textcolor{neg}{a_{3} \neq v_{3}} & \land~\textcolor{neg}{a_{4} \neq v_{4}} & \land~\textcolor{neg}{a_{5} \neq v_{5}} & \land~\textcolor{neg}{s=0} & \\
 (71) & \textcolor{neg}{a_{1} \neq v_{1}} & \land~ & \land~\textcolor{neg}{a_{3} \neq v_{3}} & \land~\textcolor{neg}{a_{4} \neq v_{4}} & \land~\textcolor{neg}{a_{5} \neq v_{5}} & \land~\textcolor{pos}{s=1} & \\
 (72) & \textcolor{neg}{a_{1} \neq v_{1}} & \land~ & \land~\textcolor{pos}{a_{3} = v_{3}} & \land~\textcolor{pos}{a_{4} = v_{4}} & \land~\textcolor{pos}{a_{5} = v_{5}} & \land~\textcolor{neg}{s=0} & \mathbf{(\times 2)}\\
 (73) & \textcolor{neg}{a_{1} \neq v_{1}} & \land~\textcolor{pos}{a_{2} = v_{2}} & \land~\textcolor{neg}{a_{3} \neq v_{3}} & \land~ & \land~\textcolor{pos}{a_{5} = v_{5}} & \land~\textcolor{pos}{s=1} & \\
 (74) & \textcolor{neg}{a_{1} \neq v_{1}} & \land~\textcolor{pos}{a_{2} = v_{2}} & \land~ & \land~\textcolor{neg}{a_{4} \neq v_{4}} & \land~\textcolor{neg}{a_{5} \neq v_{5}} & \land~\textcolor{pos}{s=1} & \\
 (75) & \textcolor{neg}{a_{1} \neq v_{1}} & \land~\textcolor{pos}{a_{2} = v_{2}} & \land~ & \land~\textcolor{pos}{a_{4} = v_{4}} & \land~\textcolor{pos}{a_{5} = v_{5}} & \land~\textcolor{neg}{s=0} & \\
 (76) & \textcolor{neg}{a_{1} \neq v_{1}} & \land~\textcolor{pos}{a_{2} = v_{2}} & \land~\textcolor{pos}{a_{3} = v_{3}} & \land~ & \land~\textcolor{neg}{a_{5} \neq v_{5}} & \land~\textcolor{neg}{s=0} & \\
 (77) & \textcolor{neg}{a_{1} \neq v_{1}} & \land~\textcolor{pos}{a_{2} = v_{2}} & \land~\textcolor{pos}{a_{3} = v_{3}} & \land~ & \land~\textcolor{pos}{a_{5} = v_{5}} & \land~\textcolor{neg}{s=0} & \\
 (78) &  & \land~\textcolor{neg}{a_{2} \neq v_{2}} & \land~\textcolor{neg}{a_{3} \neq v_{3}} & \land~\textcolor{neg}{a_{4} \neq v_{4}} & \land~\textcolor{neg}{a_{5} \neq v_{5}} & \land~\textcolor{pos}{s=1} & \\
 (79) &  & \land~\textcolor{pos}{a_{2} = v_{2}} & \land~\textcolor{neg}{a_{3} \neq v_{3}} & \land~\textcolor{neg}{a_{4} \neq v_{4}} & \land~\textcolor{neg}{a_{5} \neq v_{5}} & \land~\textcolor{pos}{s=1} & \mathbf{(\times 2)}\\
 (80) &  & \land~\textcolor{pos}{a_{2} = v_{2}} & \land~\textcolor{pos}{a_{3} = v_{3}} & \land~\textcolor{pos}{a_{4} = v_{4}} & \land~\textcolor{pos}{a_{5} = v_{5}} & \land~\textcolor{neg}{s=0} & \\
 (81) & \textcolor{pos}{a_{1} = v_{1}} & \land~\textcolor{neg}{a_{2} \neq v_{2}} & \land~\textcolor{neg}{a_{3} \neq v_{3}} & \land~\textcolor{pos}{a_{4} = v_{4}} & \land~ & \land~\textcolor{neg}{s=0} & \\
 (82) & \textcolor{pos}{a_{1} = v_{1}} & \land~ & \land~\textcolor{neg}{a_{3} \neq v_{3}} & \land~\textcolor{neg}{a_{4} \neq v_{4}} & \land~\textcolor{pos}{a_{5} = v_{5}} & \land~\textcolor{neg}{s=0} & \\
 (83) & \textcolor{pos}{a_{1} = v_{1}} & \land~ & \land~\textcolor{pos}{a_{3} = v_{3}} & \land~\textcolor{pos}{a_{4} = v_{4}} & \land~\textcolor{neg}{a_{5} \neq v_{5}} & \land~\textcolor{neg}{s=0} & \\
 (84) & \textcolor{pos}{a_{1} = v_{1}} & \land~\textcolor{pos}{a_{2} = v_{2}} & \land~ & \land~\textcolor{pos}{a_{4} = v_{4}} & \land~\textcolor{neg}{a_{5} \neq v_{5}} & \land~\textcolor{neg}{s=0} & \\
 (85) & \textcolor{pos}{a_{1} = v_{1}} & \land~\textcolor{pos}{a_{2} = v_{2}} & \land~ & \land~\textcolor{pos}{a_{4} = v_{4}} & \land~\textcolor{pos}{a_{5} = v_{5}} & \land~\textcolor{pos}{s=1} & \\
 (86) & \textcolor{neg}{a_{1} \neq v_{1}} & \land~\textcolor{neg}{a_{2} \neq v_{2}} & \land~\textcolor{neg}{a_{3} \neq v_{3}} & \land~\textcolor{neg}{a_{4} \neq v_{4}} & \land~\textcolor{neg}{a_{5} \neq v_{5}} & \land~\textcolor{neg}{s=0} & \\
 (87) & \textcolor{neg}{a_{1} \neq v_{1}} & \land~\textcolor{pos}{a_{2} = v_{2}} & \land~\textcolor{neg}{a_{3} \neq v_{3}} & \land~\textcolor{neg}{a_{4} \neq v_{4}} & \land~\textcolor{neg}{a_{5} \neq v_{5}} & \land~\textcolor{neg}{s=0} & \mathbf{(\times 2)}\\
 (88) & \textcolor{neg}{a_{1} \neq v_{1}} & \land~\textcolor{pos}{a_{2} = v_{2}} & \land~\textcolor{neg}{a_{3} \neq v_{3}} & \land~\textcolor{pos}{a_{4} = v_{4}} & \land~\textcolor{neg}{a_{5} \neq v_{5}} & \land~\textcolor{pos}{s=1} & \\
 (89) & \textcolor{pos}{a_{1} = v_{1}} & \land~\textcolor{neg}{a_{2} \neq v_{2}} & \land~\textcolor{neg}{a_{3} \neq v_{3}} & \land~\textcolor{pos}{a_{4} = v_{4}} & \land~\textcolor{neg}{a_{5} \neq v_{5}} & \land~\textcolor{neg}{s=0} & \\
 (90) & \textcolor{pos}{a_{1} = v_{1}} & \land~\textcolor{neg}{a_{2} \neq v_{2}} & \land~\textcolor{pos}{a_{3} = v_{3}} & \land~\textcolor{neg}{a_{4} \neq v_{4}} & \land~\textcolor{neg}{a_{5} \neq v_{5}} & \land~\textcolor{neg}{s=0} & \mathbf{(\times 3)}\\
 (91) & \textcolor{pos}{a_{1} = v_{1}} & \land~\textcolor{pos}{a_{2} = v_{2}} & \land~\textcolor{pos}{a_{3} = v_{3}} & \land~\textcolor{pos}{a_{4} = v_{4}} & \land~\textcolor{pos}{a_{5} = v_{5}} & \land~\textcolor{neg}{s=0} & \\
\end{array}}
\]
\end{framed}

\newpage
\begin{framed}
\noindent\textbf{Example of solution found against TableBuilder in the AUXILIARY scenario.}
\[
{\customsize
\begin{array}{llllllll}
 \multicolumn{3}{l}{\textbf{Difference queries}} \\
 (1) &  & \land~\textcolor{pos}{a_{2} = v_{2}} & \land~ & \land~ & \land~\textcolor{pos}{a_{5} = v_{5}} & \land~\textcolor{pos}{s=1} & \\
 (2) &  & \land~\textcolor{pos}{a_{2} = v_{2}} & \land~ & \land~\textcolor{neg}{a_{4} \neq v_{4}} & \land~\textcolor{pos}{a_{5} = v_{5}} & \land~\textcolor{pos}{s=1} & \\
 (3) & \textcolor{pos}{a_{1} = v_{1}} & \land~ & \land~ & \land~ & \land~\textcolor{pos}{a_{5} = v_{5}} & \land~\textcolor{pos}{s=1} & \mathbf{(\times 2)}\\
 (4) & \textcolor{pos}{a_{1} = v_{1}} & \land~ & \land~\textcolor{neg}{a_{3} \neq v_{3}} & \land~ & \land~\textcolor{pos}{a_{5} = v_{5}} & \land~\textcolor{pos}{s=1} & \\
 (5) &  & \land~\textcolor{pos}{a_{2} = v_{2}} & \land~\textcolor{pos}{a_{3} = v_{3}} & \land~ & \land~\textcolor{pos}{a_{5} = v_{5}} & \land~\textcolor{pos}{s=1} & \\
 (6) & \textcolor{neg}{a_{1} \neq v_{1}} & \land~\textcolor{pos}{a_{2} = v_{2}} & \land~\textcolor{pos}{a_{3} = v_{3}} & \land~ & \land~\textcolor{pos}{a_{5} = v_{5}} & \land~\textcolor{pos}{s=1} & \\
 (7) & \textcolor{pos}{a_{1} = v_{1}} & \land~ & \land~ & \land~\textcolor{pos}{a_{4} = v_{4}} & \land~\textcolor{pos}{a_{5} = v_{5}} & \land~\textcolor{pos}{s=1} & \\
 (8) & \textcolor{pos}{a_{1} = v_{1}} & \land~ & \land~\textcolor{neg}{a_{3} \neq v_{3}} & \land~\textcolor{pos}{a_{4} = v_{4}} & \land~\textcolor{pos}{a_{5} = v_{5}} & \land~\textcolor{pos}{s=1} & \mathbf{(\times 2)}\\
 (9) & \textcolor{pos}{a_{1} = v_{1}} & \land~ & \land~\textcolor{pos}{a_{3} = v_{3}} & \land~ & \land~\textcolor{pos}{a_{5} = v_{5}} & \land~\textcolor{neg}{s=0} & \\
 (10) & \textcolor{pos}{a_{1} = v_{1}} & \land~ & \land~\textcolor{pos}{a_{3} = v_{3}} & \land~\textcolor{neg}{a_{4} \neq v_{4}} & \land~\textcolor{pos}{a_{5} = v_{5}} & \land~\textcolor{neg}{s=0} & \mathbf{(\times 2)}\\
 (11) & \textcolor{pos}{a_{1} = v_{1}} & \land~\textcolor{pos}{a_{2} = v_{2}} & \land~ & \land~ & \land~\textcolor{pos}{a_{5} = v_{5}} & \land~\textcolor{pos}{s=1} & \\
 (12) & \textcolor{pos}{a_{1} = v_{1}} & \land~\textcolor{pos}{a_{2} = v_{2}} & \land~\textcolor{neg}{a_{3} \neq v_{3}} & \land~ & \land~\textcolor{pos}{a_{5} = v_{5}} & \land~\textcolor{pos}{s=1} & \vspace{\customspace}\\
 \multicolumn{3}{l}{\textbf{Other queries}} \\
 (13) & \textcolor{pos}{a_{1} = v_{1}} & \land~ & \land~ & \land~ & \land~ & \\
 (14) & \textcolor{neg}{a_{1} \neq v_{1}} & \land~ & \land~\textcolor{neg}{a_{3} \neq v_{3}} & \land~ & \land~ & \\
 (15) &  & \land~ & \land~\textcolor{neg}{a_{3} \neq v_{3}} & \land~ & \land~\textcolor{pos}{a_{5} = v_{5}} & \\
 (16) &  & \land~ & \land~ & \land~\textcolor{neg}{a_{4} \neq v_{4}} & \land~ & \land~\textcolor{pos}{s=1} & \\
 (17) &  & \land~ & \land~ & \land~\textcolor{pos}{a_{4} = v_{4}} & \land~\textcolor{neg}{a_{5} \neq v_{5}} & \\
 (18) &  & \land~ & \land~\textcolor{pos}{a_{3} = v_{3}} & \land~ & \land~ & \land~\textcolor{pos}{s=1} & \\
 (19) &  & \land~\textcolor{pos}{a_{2} = v_{2}} & \land~ & \land~\textcolor{neg}{a_{4} \neq v_{4}} & \land~ & \\
 (20) & \textcolor{pos}{a_{1} = v_{1}} & \land~\textcolor{neg}{a_{2} \neq v_{2}} & \land~ & \land~ & \land~ & \\
 (21) & \textcolor{neg}{a_{1} \neq v_{1}} & \land~\textcolor{neg}{a_{2} \neq v_{2}} & \land~ & \land~ & \land~ & \land~\textcolor{neg}{s=0} & \mathbf{(\times 2)}\\
 (22) & \textcolor{neg}{a_{1} \neq v_{1}} & \land~\textcolor{neg}{a_{2} \neq v_{2}} & \land~ & \land~\textcolor{pos}{a_{4} = v_{4}} & \land~ & \\
 (23) & \textcolor{neg}{a_{1} \neq v_{1}} & \land~ & \land~ & \land~\textcolor{neg}{a_{4} \neq v_{4}} & \land~\textcolor{neg}{a_{5} \neq v_{5}} & \\
 (24) & \textcolor{neg}{a_{1} \neq v_{1}} & \land~ & \land~ & \land~\textcolor{neg}{a_{4} \neq v_{4}} & \land~\textcolor{pos}{a_{5} = v_{5}} & \\
 (25) &  & \land~\textcolor{neg}{a_{2} \neq v_{2}} & \land~\textcolor{neg}{a_{3} \neq v_{3}} & \land~ & \land~\textcolor{pos}{a_{5} = v_{5}} & \\
 (26) &  & \land~\textcolor{neg}{a_{2} \neq v_{2}} & \land~ & \land~ & \land~\textcolor{pos}{a_{5} = v_{5}} & \land~\textcolor{neg}{s=0} & \mathbf{(\times 2)}\\
 (27) &  & \land~ & \land~\textcolor{neg}{a_{3} \neq v_{3}} & \land~\textcolor{neg}{a_{4} \neq v_{4}} & \land~ & \land~\textcolor{pos}{s=1} & \\
 (28) &  & \land~ & \land~\textcolor{neg}{a_{3} \neq v_{3}} & \land~\textcolor{pos}{a_{4} = v_{4}} & \land~\textcolor{pos}{a_{5} = v_{5}} & \\
 (29) &  & \land~ & \land~ & \land~\textcolor{neg}{a_{4} \neq v_{4}} & \land~\textcolor{neg}{a_{5} \neq v_{5}} & \land~\textcolor{pos}{s=1} & \\
 (30) &  & \land~ & \land~ & \land~\textcolor{pos}{a_{4} = v_{4}} & \land~\textcolor{neg}{a_{5} \neq v_{5}} & \land~\textcolor{pos}{s=1} & \\
 (31) &  & \land~\textcolor{pos}{a_{2} = v_{2}} & \land~\textcolor{neg}{a_{3} \neq v_{3}} & \land~\textcolor{pos}{a_{4} = v_{4}} & \land~ & \\
 (32) &  & \land~\textcolor{pos}{a_{2} = v_{2}} & \land~ & \land~\textcolor{neg}{a_{4} \neq v_{4}} & \land~\textcolor{neg}{a_{5} \neq v_{5}} & \\
 (33) &  & \land~\textcolor{pos}{a_{2} = v_{2}} & \land~ & \land~ & \land~\textcolor{neg}{a_{5} \neq v_{5}} & \land~\textcolor{neg}{s=0} & \\
 (34) &  & \land~\textcolor{pos}{a_{2} = v_{2}} & \land~ & \land~ & \land~\textcolor{neg}{a_{5} \neq v_{5}} & \land~\textcolor{pos}{s=1} & \\
 (35) &  & \land~\textcolor{pos}{a_{2} = v_{2}} & \land~\textcolor{pos}{a_{3} = v_{3}} & \land~ & \land~ & \land~\textcolor{pos}{s=1} & \mathbf{(\times 2)}\\
 (36) & \textcolor{pos}{a_{1} = v_{1}} & \land~ & \land~ & \land~\textcolor{pos}{a_{4} = v_{4}} & \land~\textcolor{neg}{a_{5} \neq v_{5}} & \\
 (37) & \textcolor{pos}{a_{1} = v_{1}} & \land~ & \land~\textcolor{pos}{a_{3} = v_{3}} & \land~ & \land~\textcolor{neg}{a_{5} \neq v_{5}} & \\
 (38) & \textcolor{pos}{a_{1} = v_{1}} & \land~\textcolor{pos}{a_{2} = v_{2}} & \land~\textcolor{neg}{a_{3} \neq v_{3}} & \land~ & \land~ & \\
 (39) & \textcolor{pos}{a_{1} = v_{1}} & \land~\textcolor{pos}{a_{2} = v_{2}} & \land~ & \land~ & \land~\textcolor{neg}{a_{5} \neq v_{5}} & \\
 (40) & \textcolor{neg}{a_{1} \neq v_{1}} & \land~\textcolor{neg}{a_{2} \neq v_{2}} & \land~ & \land~\textcolor{neg}{a_{4} \neq v_{4}} & \land~\textcolor{pos}{a_{5} = v_{5}} & \\
 (41) & \textcolor{neg}{a_{1} \neq v_{1}} & \land~\textcolor{neg}{a_{2} \neq v_{2}} & \land~ & \land~\textcolor{pos}{a_{4} = v_{4}} & \land~\textcolor{neg}{a_{5} \neq v_{5}} & & \mathbf{(\times 2)}\\
 (42) & \textcolor{neg}{a_{1} \neq v_{1}} & \land~\textcolor{neg}{a_{2} \neq v_{2}} & \land~\textcolor{pos}{a_{3} = v_{3}} & \land~ & \land~\textcolor{neg}{a_{5} \neq v_{5}} & \\
 (43) & \textcolor{neg}{a_{1} \neq v_{1}} & \land~\textcolor{neg}{a_{2} \neq v_{2}} & \land~\textcolor{pos}{a_{3} = v_{3}} & \land~ & \land~\textcolor{pos}{a_{5} = v_{5}} & & \mathbf{(\times 2)}\\
 (44) & \textcolor{neg}{a_{1} \neq v_{1}} & \land~\textcolor{pos}{a_{2} = v_{2}} & \land~\textcolor{neg}{a_{3} \neq v_{3}} & \land~ & \land~\textcolor{neg}{a_{5} \neq v_{5}} & \\
 (45) & \textcolor{neg}{a_{1} \neq v_{1}} & \land~\textcolor{pos}{a_{2} = v_{2}} & \land~ & \land~\textcolor{neg}{a_{4} \neq v_{4}} & \land~ & \land~\textcolor{pos}{s=1} & \\
 (46) & \textcolor{neg}{a_{1} \neq v_{1}} & \land~\textcolor{pos}{a_{2} = v_{2}} & \land~\textcolor{pos}{a_{3} = v_{3}} & \land~\textcolor{neg}{a_{4} \neq v_{4}} & \land~ & & \mathbf{(\times 3)}\\
 (47) &  & \land~\textcolor{neg}{a_{2} \neq v_{2}} & \land~\textcolor{neg}{a_{3} \neq v_{3}} & \land~ & \land~\textcolor{pos}{a_{5} = v_{5}} & \land~\textcolor{pos}{s=1} & \\
 (48) &  & \land~\textcolor{neg}{a_{2} \neq v_{2}} & \land~\textcolor{pos}{a_{3} = v_{3}} & \land~\textcolor{neg}{a_{4} \neq v_{4}} & \land~ & \land~\textcolor{pos}{s=1} & \mathbf{(\times 2)}\\
 (49) &  & \land~\textcolor{neg}{a_{2} \neq v_{2}} & \land~\textcolor{pos}{a_{3} = v_{3}} & \land~\textcolor{pos}{a_{4} = v_{4}} & \land~ & \land~\textcolor{neg}{s=0} & \\
 (50) &  & \land~\textcolor{neg}{a_{2} \neq v_{2}} & \land~\textcolor{pos}{a_{3} = v_{3}} & \land~\textcolor{pos}{a_{4} = v_{4}} & \land~\textcolor{pos}{a_{5} = v_{5}} & & \mathbf{(\times 3)}\\
 (51) &  & \land~\textcolor{pos}{a_{2} = v_{2}} & \land~\textcolor{pos}{a_{3} = v_{3}} & \land~\textcolor{neg}{a_{4} \neq v_{4}} & \land~\textcolor{neg}{a_{5} \neq v_{5}} & \\
 (52) &  & \land~\textcolor{pos}{a_{2} = v_{2}} & \land~\textcolor{pos}{a_{3} = v_{3}} & \land~\textcolor{pos}{a_{4} = v_{4}} & \land~\textcolor{neg}{a_{5} \neq v_{5}} & & \mathbf{(\times 2)}\\
 (53) &  & \land~\textcolor{pos}{a_{2} = v_{2}} & \land~\textcolor{pos}{a_{3} = v_{3}} & \land~\textcolor{pos}{a_{4} = v_{4}} & \land~ & \land~\textcolor{pos}{s=1} & \\
 (54) & \textcolor{pos}{a_{1} = v_{1}} & \land~\textcolor{neg}{a_{2} \neq v_{2}} & \land~\textcolor{neg}{a_{3} \neq v_{3}} & \land~ & \land~\textcolor{neg}{a_{5} \neq v_{5}} & \\
 (55) & \textcolor{pos}{a_{1} = v_{1}} & \land~\textcolor{neg}{a_{2} \neq v_{2}} & \land~\textcolor{neg}{a_{3} \neq v_{3}} & \land~ & \land~\textcolor{pos}{a_{5} = v_{5}} & \\
 (56) & \textcolor{pos}{a_{1} = v_{1}} & \land~ & \land~ & \land~\textcolor{pos}{a_{4} = v_{4}} & \land~\textcolor{neg}{a_{5} \neq v_{5}} & \land~\textcolor{pos}{s=1} & \\
 (57) & \textcolor{pos}{a_{1} = v_{1}} & \land~\textcolor{pos}{a_{2} = v_{2}} & \land~\textcolor{neg}{a_{3} \neq v_{3}} & \land~ & \land~\textcolor{neg}{a_{5} \neq v_{5}} & & \mathbf{(\times 2)}\\
 (58) & \textcolor{pos}{a_{1} = v_{1}} & \land~\textcolor{pos}{a_{2} = v_{2}} & \land~\textcolor{neg}{a_{3} \neq v_{3}} & \land~\textcolor{pos}{a_{4} = v_{4}} & \land~ & \\
 (59) & \textcolor{pos}{a_{1} = v_{1}} & \land~\textcolor{pos}{a_{2} = v_{2}} & \land~ & \land~\textcolor{neg}{a_{4} \neq v_{4}} & \land~ & \land~\textcolor{neg}{s=0} & \\
 (60) & \textcolor{pos}{a_{1} = v_{1}} & \land~\textcolor{pos}{a_{2} = v_{2}} & \land~ & \land~\textcolor{pos}{a_{4} = v_{4}} & \land~\textcolor{pos}{a_{5} = v_{5}} & \\
 (61) & \textcolor{neg}{a_{1} \neq v_{1}} & \land~\textcolor{neg}{a_{2} \neq v_{2}} & \land~\textcolor{pos}{a_{3} = v_{3}} & \land~\textcolor{pos}{a_{4} = v_{4}} & \land~\textcolor{pos}{a_{5} = v_{5}} & & \mathbf{(\times 2)}\\
 (62) & \textcolor{neg}{a_{1} \neq v_{1}} & \land~\textcolor{pos}{a_{2} = v_{2}} & \land~ & \land~\textcolor{neg}{a_{4} \neq v_{4}} & \land~\textcolor{pos}{a_{5} = v_{5}} & \land~\textcolor{neg}{s=0} & \\
 (63) & \textcolor{neg}{a_{1} \neq v_{1}} & \land~\textcolor{pos}{a_{2} = v_{2}} & \land~ & \land~\textcolor{neg}{a_{4} \neq v_{4}} & \land~\textcolor{pos}{a_{5} = v_{5}} & \land~\textcolor{pos}{s=1} & \\
 (64) &  & \land~\textcolor{neg}{a_{2} \neq v_{2}} & \land~\textcolor{neg}{a_{3} \neq v_{3}} & \land~\textcolor{pos}{a_{4} = v_{4}} & \land~\textcolor{pos}{a_{5} = v_{5}} & \land~\textcolor{pos}{s=1} & \\
 (65) &  & \land~\textcolor{neg}{a_{2} \neq v_{2}} & \land~\textcolor{pos}{a_{3} = v_{3}} & \land~\textcolor{pos}{a_{4} = v_{4}} & \land~\textcolor{neg}{a_{5} \neq v_{5}} & \land~\textcolor{pos}{s=1} & \\
 (66) &  & \land~\textcolor{neg}{a_{2} \neq v_{2}} & \land~\textcolor{pos}{a_{3} = v_{3}} & \land~\textcolor{pos}{a_{4} = v_{4}} & \land~\textcolor{pos}{a_{5} = v_{5}} & \land~\textcolor{neg}{s=0} & \\
 (67) &  & \land~\textcolor{pos}{a_{2} = v_{2}} & \land~\textcolor{pos}{a_{3} = v_{3}} & \land~\textcolor{neg}{a_{4} \neq v_{4}} & \land~\textcolor{pos}{a_{5} = v_{5}} & \land~\textcolor{neg}{s=0} & \\
 (68) & \textcolor{pos}{a_{1} = v_{1}} & \land~\textcolor{neg}{a_{2} \neq v_{2}} & \land~\textcolor{neg}{a_{3} \neq v_{3}} & \land~\textcolor{neg}{a_{4} \neq v_{4}} & \land~\textcolor{neg}{a_{5} \neq v_{5}} & \\
 (69) & \textcolor{pos}{a_{1} = v_{1}} & \land~\textcolor{neg}{a_{2} \neq v_{2}} & \land~\textcolor{neg}{a_{3} \neq v_{3}} & \land~\textcolor{pos}{a_{4} = v_{4}} & \land~ & \land~\textcolor{pos}{s=1} & \\
 (70) & \textcolor{pos}{a_{1} = v_{1}} & \land~\textcolor{neg}{a_{2} \neq v_{2}} & \land~ & \land~\textcolor{pos}{a_{4} = v_{4}} & \land~\textcolor{neg}{a_{5} \neq v_{5}} & \land~\textcolor{neg}{s=0} & \\
 (71) & \textcolor{pos}{a_{1} = v_{1}} & \land~\textcolor{neg}{a_{2} \neq v_{2}} & \land~ & \land~\textcolor{pos}{a_{4} = v_{4}} & \land~\textcolor{neg}{a_{5} \neq v_{5}} & \land~\textcolor{pos}{s=1} & \mathbf{(\times 2)}\\
 (72) & \textcolor{pos}{a_{1} = v_{1}} & \land~\textcolor{neg}{a_{2} \neq v_{2}} & \land~\textcolor{pos}{a_{3} = v_{3}} & \land~\textcolor{pos}{a_{4} = v_{4}} & \land~\textcolor{neg}{a_{5} \neq v_{5}} & \\
 (73) & \textcolor{pos}{a_{1} = v_{1}} & \land~\textcolor{neg}{a_{2} \neq v_{2}} & \land~\textcolor{pos}{a_{3} = v_{3}} & \land~\textcolor{pos}{a_{4} = v_{4}} & \land~ & \land~\textcolor{neg}{s=0} & \\
 (74) & \textcolor{pos}{a_{1} = v_{1}} & \land~ & \land~\textcolor{neg}{a_{3} \neq v_{3}} & \land~\textcolor{pos}{a_{4} = v_{4}} & \land~\textcolor{neg}{a_{5} \neq v_{5}} & \land~\textcolor{neg}{s=0} & \\
 (75) & \textcolor{pos}{a_{1} = v_{1}} & \land~\textcolor{pos}{a_{2} = v_{2}} & \land~\textcolor{neg}{a_{3} \neq v_{3}} & \land~\textcolor{neg}{a_{4} \neq v_{4}} & \land~\textcolor{neg}{a_{5} \neq v_{5}} & \\
 (76) & \textcolor{pos}{a_{1} = v_{1}} & \land~\textcolor{pos}{a_{2} = v_{2}} & \land~\textcolor{neg}{a_{3} \neq v_{3}} & \land~\textcolor{pos}{a_{4} = v_{4}} & \land~\textcolor{pos}{a_{5} = v_{5}} & \\
 (77) & \textcolor{pos}{a_{1} = v_{1}} & \land~\textcolor{pos}{a_{2} = v_{2}} & \land~ & \land~\textcolor{pos}{a_{4} = v_{4}} & \land~\textcolor{neg}{a_{5} \neq v_{5}} & \land~\textcolor{neg}{s=0} & \mathbf{(\times 3)}\\
 (78) & \textcolor{pos}{a_{1} = v_{1}} & \land~\textcolor{pos}{a_{2} = v_{2}} & \land~\textcolor{pos}{a_{3} = v_{3}} & \land~ & \land~\textcolor{pos}{a_{5} = v_{5}} & \land~\textcolor{pos}{s=1} & \\
 (79) & \textcolor{pos}{a_{1} = v_{1}} & \land~\textcolor{pos}{a_{2} = v_{2}} & \land~\textcolor{pos}{a_{3} = v_{3}} & \land~\textcolor{pos}{a_{4} = v_{4}} & \land~ & \land~\textcolor{pos}{s=1} & \\
 (80) & \textcolor{neg}{a_{1} \neq v_{1}} & \land~\textcolor{pos}{a_{2} = v_{2}} & \land~\textcolor{pos}{a_{3} = v_{3}} & \land~\textcolor{pos}{a_{4} = v_{4}} & \land~\textcolor{pos}{a_{5} = v_{5}} & \land~\textcolor{pos}{s=1} & \\
 (81) & \textcolor{pos}{a_{1} = v_{1}} & \land~\textcolor{neg}{a_{2} \neq v_{2}} & \land~\textcolor{neg}{a_{3} \neq v_{3}} & \land~\textcolor{pos}{a_{4} = v_{4}} & \land~\textcolor{pos}{a_{5} = v_{5}} & \land~\textcolor{pos}{s=1} & \\
\end{array}}
\]
\end{framed}

\newpage

\textbf{DPLaplace. }
We next report the attacks found by our procedure for DPLaplace and different values of $\varepsilon$. In the first two cases, we obtain a uniqueness attack of one query using the full budget (see Sec. \ref{subsec:dplaplace} for more details on these attacks). Note that the queries select on strict subsets of the known attributes, $\{ a_1, a_2, a_3\}$ ($\varepsilon=1$) and $\{ a_1, a_2, a_5\}$ ($\varepsilon=5$). Even though this results in a gap in performance with respect to the manual attacks, note how QuerySnout identified the core vulnerability: uniqueness attacks with many repetitions. For $\varepsilon=10$, the attack found consists of a uniqueness attack (with 9 repetitions) and another query. 

\begin{framed}
\noindent\textbf{Example of solution found against DPLaplace($\varepsilon=1$) in the AUXILIARY scenario.}
\[
{\customsize
\begin{array}{llllllll}
 \multicolumn{3}{l}{\textbf{Queries}} \\
 (1) & \textcolor{pos}{a_{1} = v_{1}} & \land~\textcolor{pos}{a_{2} = v_{2}} & \land~\textcolor{pos}{a_{3} = v_{3}} & \land~ & \land~ & \land~\textcolor{neg}{s=0} & \mathbf{(\times 10)}\\
\end{array}}
\]
\end{framed}

\begin{framed}
\noindent\textbf{Example of solution found against DPLaplace($\varepsilon=5$) in the AUXILIARY scenario.}
\[
{\customsize
\begin{array}{llllllll}
 \multicolumn{3}{l}{\textbf{Queries}} \\
 (1) & \textcolor{pos}{a_{1} = v_{1}} & \land~\textcolor{pos}{a_{2} = v_{2}} & \land~ & \land~ & \land~\textcolor{pos}{a_{5} = v_{5}} & \land~\textcolor{pos}{s=1} & \mathbf{(\times 10)}\\
\end{array}}
\]
\end{framed}

\begin{framed}
\noindent\textbf{Example of solution found against DPLaplace($\varepsilon=10$) in the AUXILIARY scenario.}
\[
{\customsize
\begin{array}{llllllll}
 \multicolumn{3}{l}{\textbf{Queries}} \\
 (1) & \textcolor{neg}{a_{1} \neq v_{1}} & \land~\textcolor{neg}{a_{2} \neq v_{2}} & \land~ & \land~ & \land~ & \land~\textcolor{pos}{s=1} & \\
 (2) & \textcolor{pos}{a_{1} = v_{1}} & \land~\textcolor{pos}{a_{2} = v_{2}} & \land~ & \land~ & \land~\textcolor{pos}{a_{5} = v_{5}} & \land~\textcolor{pos}{s=1} & \mathbf{(\times 9)}\\
\end{array}}
\]
\end{framed}

\textbf{SimpleQBS. }
We report the attacks found for four instances of SimpleQBS: $(\tau=0, \sigma=4)$, $(\tau=4, \sigma=0)$, $(\tau=4, \sigma=3)$, and $(\tau=3, \sigma=4)$.
We choose these QBSes to show qualitatively different solutions.
For $(\tau=0,\sigma=4)$, our procedure finds a uniqueness attack over the five attributes, repeated $6$ times (query \#75), as well as other uniqueness attacks for different subsets of attributes (queries \#1, \#3$\times 2$, \#21$\times 2$, \#24, \#26, \#27, \#28, \#42, \#43, \#58, \#60$\times2$, \#62 and \#63).
For $(\tau=4,\sigma=0)$, we obtain a large number of difference attacks~\cite{denning1979tracker} (queries \#1-\#9, grouped in pairs). Since the QBS is deterministic, we observe that queries are rarely repeated, and never more than 3 times. 
Finally, for $(\tau=4,\sigma=3)$, the search finds a large number of difference attacks (queries \#1-\#19) that are often repeated more than once. Similar results are obtained for $(\tau=3,\sigma=4)$, but note that as expected, since the noise is larger some of the queries are repeated a large number of times.
This is consistent with our findings for the other two: since $\tau>0$, the attacks must include difference queries, and since $\sigma>0$, repetitions are needed to average out the noise.

\newpage
\begin{framed}
\noindent\textbf{Example of solution found against SimpleQBS($\tau=0,\sigma=4$) in the AUXILIARY scenario.}
\[
{\customsize
\begin{array}{llllllll}
 \multicolumn{3}{l}{\textbf{Difference queries}} \\
 (1) & \textcolor{pos}{a_{1} = v_{1}} & \land~\textcolor{pos}{a_{2} = v_{2}} & \land~\textcolor{pos}{a_{3} = v_{3}} & \land~ & \land~ & \land~\textcolor{neg}{s=0} & \\
 (2) & \textcolor{pos}{a_{1} = v_{1}} & \land~\textcolor{pos}{a_{2} = v_{2}} & \land~\textcolor{pos}{a_{3} = v_{3}} & \land~ & \land~\textcolor{neg}{a_{5} \neq v_{5}} & \land~\textcolor{neg}{s=0} & \\
 (3) & \textcolor{pos}{a_{1} = v_{1}} & \land~\textcolor{pos}{a_{2} = v_{2}} & \land~\textcolor{pos}{a_{3} = v_{3}} & \land~ & \land~\textcolor{pos}{a_{5} = v_{5}} & \land~\textcolor{pos}{s=1} & \mathbf{(\times 2)}\\
 (4) & \textcolor{pos}{a_{1} = v_{1}} & \land~\textcolor{pos}{a_{2} = v_{2}} & \land~\textcolor{pos}{a_{3} = v_{3}} & \land~\textcolor{neg}{a_{4} \neq v_{4}} & \land~\textcolor{pos}{a_{5} = v_{5}} & \land~\textcolor{pos}{s=1} & \vspace{\customspace}\\
 \multicolumn{3}{l}{\textbf{Other queries}} \\
 (5) &  & \land~ & \land~ & \land~ & \land~ & \land~\textcolor{neg}{s=0} & \\
 (6) &  & \land~ & \land~\textcolor{pos}{a_{3} = v_{3}} & \land~ & \land~ & & \mathbf{(\times 2)}\\
 (7) & \textcolor{pos}{a_{1} = v_{1}} & \land~ & \land~ & \land~ & \land~ & & \mathbf{(\times 2)}\\
 (8) & \textcolor{neg}{a_{1} \neq v_{1}} & \land~ & \land~ & \land~ & \land~ & \land~\textcolor{pos}{s=1} & \\
 (9) &  & \land~\textcolor{neg}{a_{2} \neq v_{2}} & \land~ & \land~ & \land~\textcolor{neg}{a_{5} \neq v_{5}} & \\
 (10) &  & \land~ & \land~ & \land~\textcolor{pos}{a_{4} = v_{4}} & \land~\textcolor{neg}{a_{5} \neq v_{5}} & \\
 (11) &  & \land~ & \land~\textcolor{pos}{a_{3} = v_{3}} & \land~\textcolor{pos}{a_{4} = v_{4}} & \land~ & \\
 (12) &  & \land~\textcolor{pos}{a_{2} = v_{2}} & \land~ & \land~\textcolor{pos}{a_{4} = v_{4}} & \land~ & \\
 (13) & \textcolor{pos}{a_{1} = v_{1}} & \land~ & \land~ & \land~ & \land~\textcolor{pos}{a_{5} = v_{5}} & \\
 (14) & \textcolor{pos}{a_{1} = v_{1}} & \land~\textcolor{pos}{a_{2} = v_{2}} & \land~ & \land~ & \land~ & \\
 (15) & \textcolor{neg}{a_{1} \neq v_{1}} & \land~ & \land~\textcolor{neg}{a_{3} \neq v_{3}} & \land~ & \land~\textcolor{pos}{a_{5} = v_{5}} & \\
 (16) & \textcolor{neg}{a_{1} \neq v_{1}} & \land~ & \land~ & \land~\textcolor{neg}{a_{4} \neq v_{4}} & \land~\textcolor{neg}{a_{5} \neq v_{5}} & \\
 (17) &  & \land~\textcolor{neg}{a_{2} \neq v_{2}} & \land~\textcolor{neg}{a_{3} \neq v_{3}} & \land~ & \land~\textcolor{neg}{a_{5} \neq v_{5}} & & \mathbf{(\times 3)}\\
 (18) &  & \land~\textcolor{neg}{a_{2} \neq v_{2}} & \land~ & \land~\textcolor{neg}{a_{4} \neq v_{4}} & \land~\textcolor{neg}{a_{5} \neq v_{5}} & & \mathbf{(\times 2)}\\
 (19) &  & \land~\textcolor{neg}{a_{2} \neq v_{2}} & \land~ & \land~ & \land~\textcolor{neg}{a_{5} \neq v_{5}} & \land~\textcolor{neg}{s=0} & \mathbf{(\times 2)}\\
 (20) &  & \land~ & \land~ & \land~\textcolor{neg}{a_{4} \neq v_{4}} & \land~\textcolor{pos}{a_{5} = v_{5}} & \land~\textcolor{pos}{s=1} & \\
 (21) &  & \land~ & \land~\textcolor{pos}{a_{3} = v_{3}} & \land~ & \land~\textcolor{pos}{a_{5} = v_{5}} & \land~\textcolor{neg}{s=0} & \mathbf{(\times 2)}\\
 (22) &  & \land~\textcolor{pos}{a_{2} = v_{2}} & \land~\textcolor{neg}{a_{3} \neq v_{3}} & \land~ & \land~\textcolor{pos}{a_{5} = v_{5}} & & \mathbf{(\times 2)}\\
 (23) &  & \land~\textcolor{pos}{a_{2} = v_{2}} & \land~ & \land~\textcolor{neg}{a_{4} \neq v_{4}} & \land~ & \land~\textcolor{pos}{s=1} & \\
 (24) &  & \land~\textcolor{pos}{a_{2} = v_{2}} & \land~ & \land~\textcolor{pos}{a_{4} = v_{4}} & \land~ & \land~\textcolor{neg}{s=0} & \\
 (25) &  & \land~\textcolor{pos}{a_{2} = v_{2}} & \land~ & \land~\textcolor{pos}{a_{4} = v_{4}} & \land~\textcolor{pos}{a_{5} = v_{5}} & \\
 (26) & \textcolor{pos}{a_{1} = v_{1}} & \land~ & \land~ & \land~\textcolor{neg}{a_{4} \neq v_{4}} & \land~ & \land~\textcolor{neg}{s=0} & \\
 (27) & \textcolor{pos}{a_{1} = v_{1}} & \land~ & \land~ & \land~\textcolor{pos}{a_{4} = v_{4}} & \land~ & \land~\textcolor{neg}{s=0} & \\
 (28) & \textcolor{pos}{a_{1} = v_{1}} & \land~ & \land~ & \land~\textcolor{pos}{a_{4} = v_{4}} & \land~ & \land~\textcolor{pos}{s=1} & \\
 (29) & \textcolor{pos}{a_{1} = v_{1}} & \land~ & \land~\textcolor{pos}{a_{3} = v_{3}} & \land~\textcolor{neg}{a_{4} \neq v_{4}} & \land~ & \\
 (30) & \textcolor{neg}{a_{1} \neq v_{1}} & \land~\textcolor{neg}{a_{2} \neq v_{2}} & \land~\textcolor{neg}{a_{3} \neq v_{3}} & \land~ & \land~\textcolor{pos}{a_{5} = v_{5}} & \\
 (31) & \textcolor{neg}{a_{1} \neq v_{1}} & \land~\textcolor{neg}{a_{2} \neq v_{2}} & \land~ & \land~ & \land~\textcolor{neg}{a_{5} \neq v_{5}} & \land~\textcolor{pos}{s=1} & \\
 (32) & \textcolor{neg}{a_{1} \neq v_{1}} & \land~\textcolor{neg}{a_{2} \neq v_{2}} & \land~ & \land~ & \land~\textcolor{pos}{a_{5} = v_{5}} & \land~\textcolor{pos}{s=1} & \\
 (33) & \textcolor{neg}{a_{1} \neq v_{1}} & \land~\textcolor{neg}{a_{2} \neq v_{2}} & \land~ & \land~\textcolor{pos}{a_{4} = v_{4}} & \land~\textcolor{pos}{a_{5} = v_{5}} & \\
 (34) & \textcolor{neg}{a_{1} \neq v_{1}} & \land~ & \land~\textcolor{neg}{a_{3} \neq v_{3}} & \land~\textcolor{pos}{a_{4} = v_{4}} & \land~\textcolor{neg}{a_{5} \neq v_{5}} & \\
 (35) & \textcolor{neg}{a_{1} \neq v_{1}} & \land~ & \land~ & \land~\textcolor{pos}{a_{4} = v_{4}} & \land~\textcolor{pos}{a_{5} = v_{5}} & \land~\textcolor{pos}{s=1} & \mathbf{(\times 2)}\\
 (36) & \textcolor{neg}{a_{1} \neq v_{1}} & \land~ & \land~\textcolor{pos}{a_{3} = v_{3}} & \land~\textcolor{neg}{a_{4} \neq v_{4}} & \land~\textcolor{neg}{a_{5} \neq v_{5}} & & \mathbf{(\times 2)}\\
 (37) & \textcolor{neg}{a_{1} \neq v_{1}} & \land~\textcolor{pos}{a_{2} = v_{2}} & \land~ & \land~\textcolor{neg}{a_{4} \neq v_{4}} & \land~ & \land~\textcolor{neg}{s=0} & \\
 (38) & \textcolor{neg}{a_{1} \neq v_{1}} & \land~\textcolor{pos}{a_{2} = v_{2}} & \land~\textcolor{pos}{a_{3} = v_{3}} & \land~ & \land~\textcolor{pos}{a_{5} = v_{5}} & \\
 (39) &  & \land~\textcolor{neg}{a_{2} \neq v_{2}} & \land~\textcolor{pos}{a_{3} = v_{3}} & \land~ & \land~\textcolor{neg}{a_{5} \neq v_{5}} & \land~\textcolor{pos}{s=1} & \\
 (40) &  & \land~\textcolor{pos}{a_{2} = v_{2}} & \land~\textcolor{pos}{a_{3} = v_{3}} & \land~\textcolor{neg}{a_{4} \neq v_{4}} & \land~\textcolor{pos}{a_{5} = v_{5}} & & \mathbf{(\times 2)}\\
 (41) & \textcolor{pos}{a_{1} = v_{1}} & \land~\textcolor{neg}{a_{2} \neq v_{2}} & \land~\textcolor{pos}{a_{3} = v_{3}} & \land~ & \land~ & \land~\textcolor{pos}{s=1} & \\
 (42) & \textcolor{pos}{a_{1} = v_{1}} & \land~ & \land~ & \land~\textcolor{pos}{a_{4} = v_{4}} & \land~\textcolor{pos}{a_{5} = v_{5}} & \land~\textcolor{neg}{s=0} & \\
 (43) & \textcolor{pos}{a_{1} = v_{1}} & \land~ & \land~\textcolor{pos}{a_{3} = v_{3}} & \land~ & \land~\textcolor{pos}{a_{5} = v_{5}} & \land~\textcolor{pos}{s=1} & \\
 (44) & \textcolor{pos}{a_{1} = v_{1}} & \land~\textcolor{pos}{a_{2} = v_{2}} & \land~\textcolor{neg}{a_{3} \neq v_{3}} & \land~ & \land~ & \land~\textcolor{neg}{s=0} & \mathbf{(\times 2)}\\
 (45) & \textcolor{pos}{a_{1} = v_{1}} & \land~\textcolor{pos}{a_{2} = v_{2}} & \land~ & \land~ & \land~\textcolor{neg}{a_{5} \neq v_{5}} & \land~\textcolor{pos}{s=1} & \\
 (46) & \textcolor{neg}{a_{1} \neq v_{1}} & \land~\textcolor{neg}{a_{2} \neq v_{2}} & \land~\textcolor{neg}{a_{3} \neq v_{3}} & \land~ & \land~\textcolor{neg}{a_{5} \neq v_{5}} & \land~\textcolor{neg}{s=0} & \\
 (47) & \textcolor{neg}{a_{1} \neq v_{1}} & \land~\textcolor{neg}{a_{2} \neq v_{2}} & \land~ & \land~\textcolor{neg}{a_{4} \neq v_{4}} & \land~\textcolor{neg}{a_{5} \neq v_{5}} & \land~\textcolor{pos}{s=1} & \mathbf{(\times 2)}\\
 (48) & \textcolor{neg}{a_{1} \neq v_{1}} & \land~\textcolor{neg}{a_{2} \neq v_{2}} & \land~\textcolor{pos}{a_{3} = v_{3}} & \land~ & \land~\textcolor{pos}{a_{5} = v_{5}} & \land~\textcolor{neg}{s=0} & \\
 (49) & \textcolor{neg}{a_{1} \neq v_{1}} & \land~ & \land~\textcolor{neg}{a_{3} \neq v_{3}} & \land~\textcolor{pos}{a_{4} = v_{4}} & \land~\textcolor{pos}{a_{5} = v_{5}} & \land~\textcolor{pos}{s=1} & \\
 (50) & \textcolor{neg}{a_{1} \neq v_{1}} & \land~ & \land~\textcolor{pos}{a_{3} = v_{3}} & \land~\textcolor{neg}{a_{4} \neq v_{4}} & \land~\textcolor{pos}{a_{5} = v_{5}} & \land~\textcolor{pos}{s=1} & \mathbf{(\times 2)}\\
 (51) & \textcolor{neg}{a_{1} \neq v_{1}} & \land~ & \land~\textcolor{pos}{a_{3} = v_{3}} & \land~\textcolor{pos}{a_{4} = v_{4}} & \land~\textcolor{neg}{a_{5} \neq v_{5}} & \land~\textcolor{neg}{s=0} & \\
 (52) & \textcolor{neg}{a_{1} \neq v_{1}} & \land~ & \land~\textcolor{pos}{a_{3} = v_{3}} & \land~\textcolor{pos}{a_{4} = v_{4}} & \land~\textcolor{neg}{a_{5} \neq v_{5}} & \land~\textcolor{pos}{s=1} & \\
 (53) & \textcolor{neg}{a_{1} \neq v_{1}} & \land~\textcolor{pos}{a_{2} = v_{2}} & \land~\textcolor{neg}{a_{3} \neq v_{3}} & \land~ & \land~\textcolor{pos}{a_{5} = v_{5}} & \land~\textcolor{neg}{s=0} & \\
 (54) &  & \land~\textcolor{neg}{a_{2} \neq v_{2}} & \land~\textcolor{neg}{a_{3} \neq v_{3}} & \land~\textcolor{pos}{a_{4} = v_{4}} & \land~\textcolor{pos}{a_{5} = v_{5}} & \land~\textcolor{neg}{s=0} & \\
 (55) &  & \land~\textcolor{neg}{a_{2} \neq v_{2}} & \land~\textcolor{pos}{a_{3} = v_{3}} & \land~\textcolor{pos}{a_{4} = v_{4}} & \land~\textcolor{pos}{a_{5} = v_{5}} & \land~\textcolor{neg}{s=0} & \\
 (56) &  & \land~\textcolor{neg}{a_{2} \neq v_{2}} & \land~\textcolor{pos}{a_{3} = v_{3}} & \land~\textcolor{pos}{a_{4} = v_{4}} & \land~\textcolor{pos}{a_{5} = v_{5}} & \land~\textcolor{pos}{s=1} & \mathbf{(\times 2)}\\
 (57) &  & \land~\textcolor{pos}{a_{2} = v_{2}} & \land~\textcolor{pos}{a_{3} = v_{3}} & \land~\textcolor{neg}{a_{4} \neq v_{4}} & \land~\textcolor{neg}{a_{5} \neq v_{5}} & \land~\textcolor{pos}{s=1} & \\
 (58) &  & \land~\textcolor{pos}{a_{2} = v_{2}} & \land~\textcolor{pos}{a_{3} = v_{3}} & \land~\textcolor{pos}{a_{4} = v_{4}} & \land~\textcolor{pos}{a_{5} = v_{5}} & \land~\textcolor{neg}{s=0} & \\
 (59) & \textcolor{pos}{a_{1} = v_{1}} & \land~\textcolor{neg}{a_{2} \neq v_{2}} & \land~\textcolor{pos}{a_{3} = v_{3}} & \land~ & \land~\textcolor{pos}{a_{5} = v_{5}} & \land~\textcolor{neg}{s=0} & \\
 (60) & \textcolor{pos}{a_{1} = v_{1}} & \land~ & \land~\textcolor{pos}{a_{3} = v_{3}} & \land~\textcolor{pos}{a_{4} = v_{4}} & \land~\textcolor{pos}{a_{5} = v_{5}} & \land~\textcolor{pos}{s=1} & \mathbf{(\times 2)}\\
 (61) & \textcolor{pos}{a_{1} = v_{1}} & \land~\textcolor{pos}{a_{2} = v_{2}} & \land~\textcolor{neg}{a_{3} \neq v_{3}} & \land~\textcolor{neg}{a_{4} \neq v_{4}} & \land~\textcolor{neg}{a_{5} \neq v_{5}} & & \mathbf{(\times 3)}\\
 (62) & \textcolor{pos}{a_{1} = v_{1}} & \land~\textcolor{pos}{a_{2} = v_{2}} & \land~\textcolor{pos}{a_{3} = v_{3}} & \land~ & \land~\textcolor{pos}{a_{5} = v_{5}} & \land~\textcolor{neg}{s=0} & \\
 (63) & \textcolor{pos}{a_{1} = v_{1}} & \land~\textcolor{pos}{a_{2} = v_{2}} & \land~\textcolor{pos}{a_{3} = v_{3}} & \land~\textcolor{pos}{a_{4} = v_{4}} & \land~ & \land~\textcolor{pos}{s=1} & \\
 (64) & \textcolor{neg}{a_{1} \neq v_{1}} & \land~\textcolor{neg}{a_{2} \neq v_{2}} & \land~\textcolor{neg}{a_{3} \neq v_{3}} & \land~\textcolor{neg}{a_{4} \neq v_{4}} & \land~\textcolor{neg}{a_{5} \neq v_{5}} & \land~\textcolor{neg}{s=0} & \\
 (65) & \textcolor{neg}{a_{1} \neq v_{1}} & \land~\textcolor{neg}{a_{2} \neq v_{2}} & \land~\textcolor{pos}{a_{3} = v_{3}} & \land~\textcolor{neg}{a_{4} \neq v_{4}} & \land~\textcolor{pos}{a_{5} = v_{5}} & \land~\textcolor{neg}{s=0} & \\
 (66) & \textcolor{neg}{a_{1} \neq v_{1}} & \land~\textcolor{pos}{a_{2} = v_{2}} & \land~\textcolor{neg}{a_{3} \neq v_{3}} & \land~\textcolor{neg}{a_{4} \neq v_{4}} & \land~\textcolor{neg}{a_{5} \neq v_{5}} & \land~\textcolor{pos}{s=1} & \mathbf{(\times 2)}\\
 (67) & \textcolor{neg}{a_{1} \neq v_{1}} & \land~\textcolor{pos}{a_{2} = v_{2}} & \land~\textcolor{neg}{a_{3} \neq v_{3}} & \land~\textcolor{neg}{a_{4} \neq v_{4}} & \land~\textcolor{pos}{a_{5} = v_{5}} & \land~\textcolor{pos}{s=1} & \\
 (68) & \textcolor{pos}{a_{1} = v_{1}} & \land~\textcolor{neg}{a_{2} \neq v_{2}} & \land~\textcolor{neg}{a_{3} \neq v_{3}} & \land~\textcolor{pos}{a_{4} = v_{4}} & \land~\textcolor{pos}{a_{5} = v_{5}} & \land~\textcolor{pos}{s=1} & \\
 (69) & \textcolor{pos}{a_{1} = v_{1}} & \land~\textcolor{neg}{a_{2} \neq v_{2}} & \land~\textcolor{pos}{a_{3} = v_{3}} & \land~\textcolor{pos}{a_{4} = v_{4}} & \land~\textcolor{neg}{a_{5} \neq v_{5}} & \land~\textcolor{pos}{s=1} & \\
 (70) & \textcolor{pos}{a_{1} = v_{1}} & \land~\textcolor{pos}{a_{2} = v_{2}} & \land~\textcolor{neg}{a_{3} \neq v_{3}} & \land~\textcolor{neg}{a_{4} \neq v_{4}} & \land~\textcolor{neg}{a_{5} \neq v_{5}} & \land~\textcolor{neg}{s=0} & \\
 (71) & \textcolor{pos}{a_{1} = v_{1}} & \land~\textcolor{pos}{a_{2} = v_{2}} & \land~\textcolor{neg}{a_{3} \neq v_{3}} & \land~\textcolor{neg}{a_{4} \neq v_{4}} & \land~\textcolor{neg}{a_{5} \neq v_{5}} & \land~\textcolor{pos}{s=1} & \\
 (72) & \textcolor{pos}{a_{1} = v_{1}} & \land~\textcolor{pos}{a_{2} = v_{2}} & \land~\textcolor{neg}{a_{3} \neq v_{3}} & \land~\textcolor{neg}{a_{4} \neq v_{4}} & \land~\textcolor{pos}{a_{5} = v_{5}} & \land~\textcolor{pos}{s=1} & \\
 (73) & \textcolor{pos}{a_{1} = v_{1}} & \land~\textcolor{pos}{a_{2} = v_{2}} & \land~\textcolor{neg}{a_{3} \neq v_{3}} & \land~\textcolor{pos}{a_{4} = v_{4}} & \land~\textcolor{neg}{a_{5} \neq v_{5}} & \land~\textcolor{neg}{s=0} & \\
 (74) & \textcolor{pos}{a_{1} = v_{1}} & \land~\textcolor{pos}{a_{2} = v_{2}} & \land~\textcolor{neg}{a_{3} \neq v_{3}} & \land~\textcolor{pos}{a_{4} = v_{4}} & \land~\textcolor{pos}{a_{5} = v_{5}} & \land~\textcolor{pos}{s=1} & \\
 (75) & \textcolor{pos}{a_{1} = v_{1}} & \land~\textcolor{pos}{a_{2} = v_{2}} & \land~\textcolor{pos}{a_{3} = v_{3}} & \land~\textcolor{pos}{a_{4} = v_{4}} & \land~\textcolor{pos}{a_{5} = v_{5}} & \land~\textcolor{pos}{s=1} & \mathbf{(\times 6)}\\
\end{array}}
\]
\end{framed}

\begin{framed}
\noindent\textbf{Example of solution found against SimpleQBS($\tau=4,\sigma=0$) in the AUXILIARY scenario.}
\[
{\customsize
\begin{array}{llllllll}
 \multicolumn{3}{l}{\textbf{Difference queries}} \\
 (1) & \textcolor{pos}{a_{1} = v_{1}} & \land~ & \land~ & \land~ & \land~\textcolor{pos}{a_{5} = v_{5}} & \land~\textcolor{neg}{s=0} & \\
 (2) & \textcolor{pos}{a_{1} = v_{1}} & \land~\textcolor{neg}{a_{2} \neq v_{2}} & \land~ & \land~ & \land~\textcolor{pos}{a_{5} = v_{5}} & \land~\textcolor{neg}{s=0} & \mathbf{(\times 2)}\\
 (3) & \textcolor{pos}{a_{1} = v_{1}} & \land~ & \land~ & \land~ & \land~\textcolor{pos}{a_{5} = v_{5}} & \land~\textcolor{pos}{s=1} & \\
 (4) & \textcolor{pos}{a_{1} = v_{1}} & \land~\textcolor{neg}{a_{2} \neq v_{2}} & \land~ & \land~ & \land~\textcolor{pos}{a_{5} = v_{5}} & \land~\textcolor{pos}{s=1} & \\
 (5) &  & \land~ & \land~\textcolor{pos}{a_{3} = v_{3}} & \land~\textcolor{pos}{a_{4} = v_{4}} & \land~\textcolor{pos}{a_{5} = v_{5}} & \land~\textcolor{neg}{s=0} & \\
 (6) & \textcolor{neg}{a_{1} \neq v_{1}} & \land~ & \land~\textcolor{pos}{a_{3} = v_{3}} & \land~\textcolor{pos}{a_{4} = v_{4}} & \land~\textcolor{pos}{a_{5} = v_{5}} & \land~\textcolor{neg}{s=0} & \\
 (7) &  & \land~\textcolor{neg}{a_{2} \neq v_{2}} & \land~\textcolor{pos}{a_{3} = v_{3}} & \land~\textcolor{pos}{a_{4} = v_{4}} & \land~\textcolor{pos}{a_{5} = v_{5}} & \land~\textcolor{neg}{s=0} & \\
 (8) &  & \land~\textcolor{pos}{a_{2} = v_{2}} & \land~ & \land~\textcolor{pos}{a_{4} = v_{4}} & \land~\textcolor{pos}{a_{5} = v_{5}} & \land~\textcolor{neg}{s=0} & \\
 (9) & \textcolor{neg}{a_{1} \neq v_{1}} & \land~\textcolor{pos}{a_{2} = v_{2}} & \land~ & \land~\textcolor{pos}{a_{4} = v_{4}} & \land~\textcolor{pos}{a_{5} = v_{5}} & \land~\textcolor{neg}{s=0} & \vspace{\customspace}\\
 \multicolumn{3}{l}{\textbf{Other queries}} \\
 (10) & \textcolor{neg}{a_{1} \neq v_{1}} & \land~\textcolor{neg}{a_{2} \neq v_{2}} & \land~ & \land~ & \land~ & \\
 (11) &  & \land~ & \land~\textcolor{neg}{a_{3} \neq v_{3}} & \land~\textcolor{neg}{a_{4} \neq v_{4}} & \land~ & \\
 (12) &  & \land~ & \land~\textcolor{neg}{a_{3} \neq v_{3}} & \land~ & \land~ & \land~\textcolor{pos}{s=1} & \\
 (13) &  & \land~ & \land~ & \land~\textcolor{neg}{a_{4} \neq v_{4}} & \land~\textcolor{neg}{a_{5} \neq v_{5}} & & \mathbf{(\times 2)}\\
 (14) &  & \land~ & \land~ & \land~\textcolor{neg}{a_{4} \neq v_{4}} & \land~\textcolor{pos}{a_{5} = v_{5}} & \\
 (15) & \textcolor{pos}{a_{1} = v_{1}} & \land~ & \land~ & \land~ & \land~\textcolor{neg}{a_{5} \neq v_{5}} & \\
 (16) & \textcolor{neg}{a_{1} \neq v_{1}} & \land~ & \land~ & \land~\textcolor{neg}{a_{4} \neq v_{4}} & \land~\textcolor{neg}{a_{5} \neq v_{5}} & \\
 (17) & \textcolor{neg}{a_{1} \neq v_{1}} & \land~ & \land~\textcolor{pos}{a_{3} = v_{3}} & \land~\textcolor{neg}{a_{4} \neq v_{4}} & \land~ & \\
 (18) & \textcolor{neg}{a_{1} \neq v_{1}} & \land~ & \land~\textcolor{pos}{a_{3} = v_{3}} & \land~\textcolor{pos}{a_{4} = v_{4}} & \land~ & \\
 (19) &  & \land~\textcolor{neg}{a_{2} \neq v_{2}} & \land~ & \land~\textcolor{neg}{a_{4} \neq v_{4}} & \land~\textcolor{neg}{a_{5} \neq v_{5}} & \\
 (20) &  & \land~\textcolor{neg}{a_{2} \neq v_{2}} & \land~\textcolor{pos}{a_{3} = v_{3}} & \land~ & \land~ & \land~\textcolor{neg}{s=0} & \\
 (21) &  & \land~ & \land~\textcolor{neg}{a_{3} \neq v_{3}} & \land~\textcolor{pos}{a_{4} = v_{4}} & \land~ & \land~\textcolor{pos}{s=1} & \\
 (22) &  & \land~\textcolor{pos}{a_{2} = v_{2}} & \land~\textcolor{neg}{a_{3} \neq v_{3}} & \land~\textcolor{pos}{a_{4} = v_{4}} & \land~ & \\
 (23) &  & \land~\textcolor{pos}{a_{2} = v_{2}} & \land~ & \land~\textcolor{pos}{a_{4} = v_{4}} & \land~ & \land~\textcolor{neg}{s=0} & \\
 (24) &  & \land~\textcolor{pos}{a_{2} = v_{2}} & \land~\textcolor{pos}{a_{3} = v_{3}} & \land~ & \land~ & \land~\textcolor{pos}{s=1} & \\
 (25) & \textcolor{pos}{a_{1} = v_{1}} & \land~\textcolor{neg}{a_{2} \neq v_{2}} & \land~\textcolor{neg}{a_{3} \neq v_{3}} & \land~ & \land~ & \\
 (26) & \textcolor{pos}{a_{1} = v_{1}} & \land~ & \land~\textcolor{neg}{a_{3} \neq v_{3}} & \land~\textcolor{pos}{a_{4} = v_{4}} & \land~ & \\
 (27) & \textcolor{pos}{a_{1} = v_{1}} & \land~ & \land~ & \land~\textcolor{neg}{a_{4} \neq v_{4}} & \land~\textcolor{neg}{a_{5} \neq v_{5}} & \\
 (28) & \textcolor{pos}{a_{1} = v_{1}} & \land~ & \land~ & \land~\textcolor{pos}{a_{4} = v_{4}} & \land~ & \land~\textcolor{neg}{s=0} & \\
 (29) & \textcolor{pos}{a_{1} = v_{1}} & \land~ & \land~ & \land~\textcolor{pos}{a_{4} = v_{4}} & \land~\textcolor{pos}{a_{5} = v_{5}} & \\
 (30) & \textcolor{neg}{a_{1} \neq v_{1}} & \land~\textcolor{neg}{a_{2} \neq v_{2}} & \land~\textcolor{neg}{a_{3} \neq v_{3}} & \land~\textcolor{pos}{a_{4} = v_{4}} & \land~ & \\
 (31) & \textcolor{neg}{a_{1} \neq v_{1}} & \land~\textcolor{neg}{a_{2} \neq v_{2}} & \land~\textcolor{pos}{a_{3} = v_{3}} & \land~ & \land~ & \land~\textcolor{pos}{s=1} & \\
 (32) & \textcolor{neg}{a_{1} \neq v_{1}} & \land~ & \land~\textcolor{neg}{a_{3} \neq v_{3}} & \land~ & \land~\textcolor{neg}{a_{5} \neq v_{5}} & \land~\textcolor{pos}{s=1} & \\
 (33) & \textcolor{neg}{a_{1} \neq v_{1}} & \land~ & \land~\textcolor{neg}{a_{3} \neq v_{3}} & \land~ & \land~\textcolor{pos}{a_{5} = v_{5}} & \land~\textcolor{pos}{s=1} & \\
 (34) & \textcolor{neg}{a_{1} \neq v_{1}} & \land~ & \land~\textcolor{neg}{a_{3} \neq v_{3}} & \land~\textcolor{pos}{a_{4} = v_{4}} & \land~ & \land~\textcolor{pos}{s=1} & \\
 (35) & \textcolor{neg}{a_{1} \neq v_{1}} & \land~ & \land~\textcolor{pos}{a_{3} = v_{3}} & \land~ & \land~\textcolor{neg}{a_{5} \neq v_{5}} & \land~\textcolor{neg}{s=0} & \\
 (36) & \textcolor{neg}{a_{1} \neq v_{1}} & \land~ & \land~\textcolor{pos}{a_{3} = v_{3}} & \land~ & \land~\textcolor{pos}{a_{5} = v_{5}} & \land~\textcolor{pos}{s=1} & \\
 (37) & \textcolor{neg}{a_{1} \neq v_{1}} & \land~ & \land~\textcolor{pos}{a_{3} = v_{3}} & \land~\textcolor{pos}{a_{4} = v_{4}} & \land~ & \land~\textcolor{pos}{s=1} & \\
 (38) & \textcolor{neg}{a_{1} \neq v_{1}} & \land~\textcolor{pos}{a_{2} = v_{2}} & \land~\textcolor{neg}{a_{3} \neq v_{3}} & \land~ & \land~\textcolor{pos}{a_{5} = v_{5}} & \\
 (39) & \textcolor{neg}{a_{1} \neq v_{1}} & \land~\textcolor{pos}{a_{2} = v_{2}} & \land~ & \land~\textcolor{neg}{a_{4} \neq v_{4}} & \land~\textcolor{neg}{a_{5} \neq v_{5}} & \\
 (40) & \textcolor{neg}{a_{1} \neq v_{1}} & \land~\textcolor{pos}{a_{2} = v_{2}} & \land~ & \land~\textcolor{pos}{a_{4} = v_{4}} & \land~ & \land~\textcolor{pos}{s=1} & \\
 (41) &  & \land~\textcolor{neg}{a_{2} \neq v_{2}} & \land~\textcolor{neg}{a_{3} \neq v_{3}} & \land~\textcolor{pos}{a_{4} = v_{4}} & \land~ & \land~\textcolor{pos}{s=1} & \\
 (42) &  & \land~\textcolor{neg}{a_{2} \neq v_{2}} & \land~\textcolor{pos}{a_{3} = v_{3}} & \land~\textcolor{pos}{a_{4} = v_{4}} & \land~ & \land~\textcolor{neg}{s=0} & \\
 (43) &  & \land~ & \land~\textcolor{neg}{a_{3} \neq v_{3}} & \land~\textcolor{neg}{a_{4} \neq v_{4}} & \land~\textcolor{neg}{a_{5} \neq v_{5}} & \land~\textcolor{neg}{s=0} & \mathbf{(\times 2)}\\
 (44) &  & \land~\textcolor{pos}{a_{2} = v_{2}} & \land~\textcolor{neg}{a_{3} \neq v_{3}} & \land~\textcolor{neg}{a_{4} \neq v_{4}} & \land~\textcolor{neg}{a_{5} \neq v_{5}} & \\
 (45) &  & \land~\textcolor{pos}{a_{2} = v_{2}} & \land~\textcolor{neg}{a_{3} \neq v_{3}} & \land~ & \land~\textcolor{neg}{a_{5} \neq v_{5}} & \land~\textcolor{pos}{s=1} & \\
 (46) &  & \land~\textcolor{pos}{a_{2} = v_{2}} & \land~\textcolor{neg}{a_{3} \neq v_{3}} & \land~\textcolor{pos}{a_{4} = v_{4}} & \land~ & \land~\textcolor{pos}{s=1} & \\
 (47) &  & \land~\textcolor{pos}{a_{2} = v_{2}} & \land~\textcolor{pos}{a_{3} = v_{3}} & \land~\textcolor{neg}{a_{4} \neq v_{4}} & \land~\textcolor{pos}{a_{5} = v_{5}} & \\
 (48) &  & \land~\textcolor{pos}{a_{2} = v_{2}} & \land~\textcolor{pos}{a_{3} = v_{3}} & \land~ & \land~\textcolor{pos}{a_{5} = v_{5}} & \land~\textcolor{neg}{s=0} & \\
 (49) & \textcolor{pos}{a_{1} = v_{1}} & \land~\textcolor{neg}{a_{2} \neq v_{2}} & \land~\textcolor{neg}{a_{3} \neq v_{3}} & \land~\textcolor{pos}{a_{4} = v_{4}} & \land~ & \\
 (50) & \textcolor{pos}{a_{1} = v_{1}} & \land~\textcolor{neg}{a_{2} \neq v_{2}} & \land~ & \land~\textcolor{pos}{a_{4} = v_{4}} & \land~ & \land~\textcolor{pos}{s=1} & \\
 (51) & \textcolor{pos}{a_{1} = v_{1}} & \land~\textcolor{neg}{a_{2} \neq v_{2}} & \land~ & \land~\textcolor{pos}{a_{4} = v_{4}} & \land~\textcolor{pos}{a_{5} = v_{5}} & \\
 (52) & \textcolor{pos}{a_{1} = v_{1}} & \land~\textcolor{neg}{a_{2} \neq v_{2}} & \land~\textcolor{pos}{a_{3} = v_{3}} & \land~ & \land~ & \land~\textcolor{neg}{s=0} & \\
 (53) & \textcolor{pos}{a_{1} = v_{1}} & \land~\textcolor{neg}{a_{2} \neq v_{2}} & \land~\textcolor{pos}{a_{3} = v_{3}} & \land~ & \land~ & \land~\textcolor{pos}{s=1} & \\
 (54) & \textcolor{pos}{a_{1} = v_{1}} & \land~ & \land~\textcolor{neg}{a_{3} \neq v_{3}} & \land~\textcolor{neg}{a_{4} \neq v_{4}} & \land~ & \land~\textcolor{neg}{s=0} & \\
 (55) & \textcolor{pos}{a_{1} = v_{1}} & \land~ & \land~\textcolor{neg}{a_{3} \neq v_{3}} & \land~\textcolor{neg}{a_{4} \neq v_{4}} & \land~\textcolor{pos}{a_{5} = v_{5}} & \\
 (56) & \textcolor{pos}{a_{1} = v_{1}} & \land~ & \land~\textcolor{pos}{a_{3} = v_{3}} & \land~ & \land~\textcolor{pos}{a_{5} = v_{5}} & \land~\textcolor{pos}{s=1} & \\
 (57) & \textcolor{pos}{a_{1} = v_{1}} & \land~ & \land~\textcolor{pos}{a_{3} = v_{3}} & \land~\textcolor{pos}{a_{4} = v_{4}} & \land~\textcolor{pos}{a_{5} = v_{5}} & \\
 (58) & \textcolor{pos}{a_{1} = v_{1}} & \land~\textcolor{pos}{a_{2} = v_{2}} & \land~\textcolor{neg}{a_{3} \neq v_{3}} & \land~ & \land~\textcolor{pos}{a_{5} = v_{5}} & \\
 (59) & \textcolor{pos}{a_{1} = v_{1}} & \land~\textcolor{pos}{a_{2} = v_{2}} & \land~ & \land~\textcolor{pos}{a_{4} = v_{4}} & \land~\textcolor{neg}{a_{5} \neq v_{5}} & \\
 (60) & \textcolor{neg}{a_{1} \neq v_{1}} & \land~\textcolor{neg}{a_{2} \neq v_{2}} & \land~\textcolor{neg}{a_{3} \neq v_{3}} & \land~\textcolor{neg}{a_{4} \neq v_{4}} & \land~\textcolor{pos}{a_{5} = v_{5}} & \\
 (61) & \textcolor{neg}{a_{1} \neq v_{1}} & \land~\textcolor{neg}{a_{2} \neq v_{2}} & \land~\textcolor{neg}{a_{3} \neq v_{3}} & \land~ & \land~\textcolor{neg}{a_{5} \neq v_{5}} & \land~\textcolor{neg}{s=0} & \\
 (62) & \textcolor{neg}{a_{1} \neq v_{1}} & \land~\textcolor{neg}{a_{2} \neq v_{2}} & \land~\textcolor{pos}{a_{3} = v_{3}} & \land~\textcolor{pos}{a_{4} = v_{4}} & \land~\textcolor{neg}{a_{5} \neq v_{5}} & \\
 (63) & \textcolor{neg}{a_{1} \neq v_{1}} & \land~\textcolor{neg}{a_{2} \neq v_{2}} & \land~\textcolor{pos}{a_{3} = v_{3}} & \land~\textcolor{pos}{a_{4} = v_{4}} & \land~ & \land~\textcolor{neg}{s=0} & \\
 (64) & \textcolor{neg}{a_{1} \neq v_{1}} & \land~ & \land~\textcolor{pos}{a_{3} = v_{3}} & \land~\textcolor{neg}{a_{4} \neq v_{4}} & \land~\textcolor{neg}{a_{5} \neq v_{5}} & \land~\textcolor{neg}{s=0} & \mathbf{(\times 2)}\\
 (65) & \textcolor{neg}{a_{1} \neq v_{1}} & \land~\textcolor{pos}{a_{2} = v_{2}} & \land~ & \land~\textcolor{neg}{a_{4} \neq v_{4}} & \land~\textcolor{pos}{a_{5} = v_{5}} & \land~\textcolor{neg}{s=0} & \\
 (66) & \textcolor{neg}{a_{1} \neq v_{1}} & \land~\textcolor{pos}{a_{2} = v_{2}} & \land~ & \land~\textcolor{neg}{a_{4} \neq v_{4}} & \land~\textcolor{pos}{a_{5} = v_{5}} & \land~\textcolor{pos}{s=1} & \\
 (67) & \textcolor{neg}{a_{1} \neq v_{1}} & \land~\textcolor{pos}{a_{2} = v_{2}} & \land~ & \land~\textcolor{pos}{a_{4} = v_{4}} & \land~\textcolor{neg}{a_{5} \neq v_{5}} & \land~\textcolor{neg}{s=0} & \mathbf{(\times 2)}\\
 (68) & \textcolor{neg}{a_{1} \neq v_{1}} & \land~\textcolor{pos}{a_{2} = v_{2}} & \land~\textcolor{pos}{a_{3} = v_{3}} & \land~ & \land~\textcolor{pos}{a_{5} = v_{5}} & \land~\textcolor{pos}{s=1} & \\
 (69) &  & \land~\textcolor{neg}{a_{2} \neq v_{2}} & \land~\textcolor{pos}{a_{3} = v_{3}} & \land~\textcolor{pos}{a_{4} = v_{4}} & \land~\textcolor{neg}{a_{5} \neq v_{5}} & \land~\textcolor{neg}{s=0} & \\
 (70) &  & \land~\textcolor{pos}{a_{2} = v_{2}} & \land~\textcolor{neg}{a_{3} \neq v_{3}} & \land~\textcolor{neg}{a_{4} \neq v_{4}} & \land~\textcolor{neg}{a_{5} \neq v_{5}} & \land~\textcolor{pos}{s=1} & \\
 (71) &  & \land~\textcolor{pos}{a_{2} = v_{2}} & \land~\textcolor{pos}{a_{3} = v_{3}} & \land~\textcolor{pos}{a_{4} = v_{4}} & \land~\textcolor{neg}{a_{5} \neq v_{5}} & \land~\textcolor{pos}{s=1} & \\
 (72) & \textcolor{pos}{a_{1} = v_{1}} & \land~\textcolor{neg}{a_{2} \neq v_{2}} & \land~ & \land~\textcolor{neg}{a_{4} \neq v_{4}} & \land~\textcolor{neg}{a_{5} \neq v_{5}} & \land~\textcolor{neg}{s=0} & \\
 (73) & \textcolor{pos}{a_{1} = v_{1}} & \land~\textcolor{neg}{a_{2} \neq v_{2}} & \land~ & \land~\textcolor{pos}{a_{4} = v_{4}} & \land~\textcolor{neg}{a_{5} \neq v_{5}} & \land~\textcolor{pos}{s=1} & \\
 (74) & \textcolor{pos}{a_{1} = v_{1}} & \land~ & \land~\textcolor{neg}{a_{3} \neq v_{3}} & \land~\textcolor{pos}{a_{4} = v_{4}} & \land~\textcolor{neg}{a_{5} \neq v_{5}} & \land~\textcolor{neg}{s=0} & \\
 (75) & \textcolor{pos}{a_{1} = v_{1}} & \land~ & \land~\textcolor{neg}{a_{3} \neq v_{3}} & \land~\textcolor{pos}{a_{4} = v_{4}} & \land~\textcolor{pos}{a_{5} = v_{5}} & \land~\textcolor{neg}{s=0} & \\
 (76) & \textcolor{pos}{a_{1} = v_{1}} & \land~ & \land~\textcolor{neg}{a_{3} \neq v_{3}} & \land~\textcolor{pos}{a_{4} = v_{4}} & \land~\textcolor{pos}{a_{5} = v_{5}} & \land~\textcolor{pos}{s=1} & \\
 (77) & \textcolor{pos}{a_{1} = v_{1}} & \land~ & \land~\textcolor{pos}{a_{3} = v_{3}} & \land~\textcolor{pos}{a_{4} = v_{4}} & \land~\textcolor{neg}{a_{5} \neq v_{5}} & \land~\textcolor{neg}{s=0} & \mathbf{(\times 3)}\\
 (78) & \textcolor{pos}{a_{1} = v_{1}} & \land~ & \land~\textcolor{pos}{a_{3} = v_{3}} & \land~\textcolor{pos}{a_{4} = v_{4}} & \land~\textcolor{neg}{a_{5} \neq v_{5}} & \land~\textcolor{pos}{s=1} & \\
 (79) & \textcolor{pos}{a_{1} = v_{1}} & \land~\textcolor{pos}{a_{2} = v_{2}} & \land~ & \land~\textcolor{neg}{a_{4} \neq v_{4}} & \land~\textcolor{neg}{a_{5} \neq v_{5}} & \land~\textcolor{pos}{s=1} & \\
 (80) & \textcolor{pos}{a_{1} = v_{1}} & \land~\textcolor{pos}{a_{2} = v_{2}} & \land~\textcolor{pos}{a_{3} = v_{3}} & \land~\textcolor{neg}{a_{4} \neq v_{4}} & \land~\textcolor{neg}{a_{5} \neq v_{5}} & \\
 (81) & \textcolor{pos}{a_{1} = v_{1}} & \land~\textcolor{pos}{a_{2} = v_{2}} & \land~\textcolor{pos}{a_{3} = v_{3}} & \land~ & \land~\textcolor{pos}{a_{5} = v_{5}} & \land~\textcolor{pos}{s=1} & \\
 (82) & \textcolor{pos}{a_{1} = v_{1}} & \land~\textcolor{pos}{a_{2} = v_{2}} & \land~\textcolor{pos}{a_{3} = v_{3}} & \land~\textcolor{pos}{a_{4} = v_{4}} & \land~\textcolor{pos}{a_{5} = v_{5}} & & \mathbf{(\times 2)}\\
 (83) & \textcolor{neg}{a_{1} \neq v_{1}} & \land~\textcolor{neg}{a_{2} \neq v_{2}} & \land~\textcolor{neg}{a_{3} \neq v_{3}} & \land~\textcolor{neg}{a_{4} \neq v_{4}} & \land~\textcolor{neg}{a_{5} \neq v_{5}} & \land~\textcolor{neg}{s=0} & \\
 (84) & \textcolor{neg}{a_{1} \neq v_{1}} & \land~\textcolor{neg}{a_{2} \neq v_{2}} & \land~\textcolor{neg}{a_{3} \neq v_{3}} & \land~\textcolor{pos}{a_{4} = v_{4}} & \land~\textcolor{neg}{a_{5} \neq v_{5}} & \land~\textcolor{pos}{s=1} & \\
 (85) & \textcolor{neg}{a_{1} \neq v_{1}} & \land~\textcolor{neg}{a_{2} \neq v_{2}} & \land~\textcolor{pos}{a_{3} = v_{3}} & \land~\textcolor{neg}{a_{4} \neq v_{4}} & \land~\textcolor{neg}{a_{5} \neq v_{5}} & \land~\textcolor{neg}{s=0} & \\
 (86) & \textcolor{neg}{a_{1} \neq v_{1}} & \land~\textcolor{pos}{a_{2} = v_{2}} & \land~\textcolor{neg}{a_{3} \neq v_{3}} & \land~\textcolor{pos}{a_{4} = v_{4}} & \land~\textcolor{pos}{a_{5} = v_{5}} & \land~\textcolor{pos}{s=1} & \\
 (87) & \textcolor{neg}{a_{1} \neq v_{1}} & \land~\textcolor{pos}{a_{2} = v_{2}} & \land~\textcolor{pos}{a_{3} = v_{3}} & \land~\textcolor{neg}{a_{4} \neq v_{4}} & \land~\textcolor{neg}{a_{5} \neq v_{5}} & \land~\textcolor{pos}{s=1} & \\
 (88) & \textcolor{pos}{a_{1} = v_{1}} & \land~\textcolor{neg}{a_{2} \neq v_{2}} & \land~\textcolor{pos}{a_{3} = v_{3}} & \land~\textcolor{neg}{a_{4} \neq v_{4}} & \land~\textcolor{neg}{a_{5} \neq v_{5}} & \land~\textcolor{neg}{s=0} & \\
 (89) & \textcolor{pos}{a_{1} = v_{1}} & \land~\textcolor{neg}{a_{2} \neq v_{2}} & \land~\textcolor{pos}{a_{3} = v_{3}} & \land~\textcolor{pos}{a_{4} = v_{4}} & \land~\textcolor{pos}{a_{5} = v_{5}} & \land~\textcolor{neg}{s=0} & \\
 (90) & \textcolor{pos}{a_{1} = v_{1}} & \land~\textcolor{pos}{a_{2} = v_{2}} & \land~\textcolor{neg}{a_{3} \neq v_{3}} & \land~\textcolor{neg}{a_{4} \neq v_{4}} & \land~\textcolor{neg}{a_{5} \neq v_{5}} & \land~\textcolor{pos}{s=1} & \\
 (91) & \textcolor{pos}{a_{1} = v_{1}} & \land~\textcolor{pos}{a_{2} = v_{2}} & \land~\textcolor{neg}{a_{3} \neq v_{3}} & \land~\textcolor{pos}{a_{4} = v_{4}} & \land~\textcolor{pos}{a_{5} = v_{5}} & \land~\textcolor{pos}{s=1} & \mathbf{(\times 2)}\\
\end{array}}
\]
\end{framed}

\begin{framed}
\noindent\textbf{Example of solution found against SimpleQBS($\tau=4, \sigma=3$) in the AUXILIARY scenario.}
\[
{\customsize
\begin{array}{llllllll}
 \multicolumn{3}{l}{\textbf{Difference queries}} \\
 (1) &  & \land~ & \land~ & \land~\textcolor{pos}{a_{4} = v_{4}} & \land~ & \land~\textcolor{pos}{s=1} & \\
 (2) &  & \land~ & \land~\textcolor{neg}{a_{3} \neq v_{3}} & \land~\textcolor{pos}{a_{4} = v_{4}} & \land~ & \land~\textcolor{pos}{s=1} & \\
 (3) &  & \land~\textcolor{pos}{a_{2} = v_{2}} & \land~ & \land~ & \land~ & \land~\textcolor{neg}{s=0} & \mathbf{(\times 2)}\\
 (4) &  & \land~\textcolor{pos}{a_{2} = v_{2}} & \land~\textcolor{neg}{a_{3} \neq v_{3}} & \land~ & \land~ & \land~\textcolor{neg}{s=0} & \\
 (5) &  & \land~\textcolor{pos}{a_{2} = v_{2}} & \land~ & \land~ & \land~\textcolor{neg}{a_{5} \neq v_{5}} & \land~\textcolor{neg}{s=0} & \\
 (6) &  & \land~\textcolor{pos}{a_{2} = v_{2}} & \land~ & \land~ & \land~ & \land~\textcolor{pos}{s=1} & \\
 (7) & \textcolor{neg}{a_{1} \neq v_{1}} & \land~\textcolor{pos}{a_{2} = v_{2}} & \land~ & \land~ & \land~ & \land~\textcolor{pos}{s=1} & \mathbf{(\times 2)}\\
 (8) & \textcolor{pos}{a_{1} = v_{1}} & \land~ & \land~ & \land~ & \land~\textcolor{pos}{a_{5} = v_{5}} & \land~\textcolor{pos}{s=1} & \mathbf{(\times 2)}\\
 (9) & \textcolor{pos}{a_{1} = v_{1}} & \land~ & \land~\textcolor{neg}{a_{3} \neq v_{3}} & \land~ & \land~\textcolor{pos}{a_{5} = v_{5}} & \land~\textcolor{pos}{s=1} & \mathbf{(\times 2)}\\
 (10) & \textcolor{pos}{a_{1} = v_{1}} & \land~\textcolor{pos}{a_{2} = v_{2}} & \land~ & \land~ & \land~ & \land~\textcolor{pos}{s=1} & \\
 (11) & \textcolor{pos}{a_{1} = v_{1}} & \land~\textcolor{pos}{a_{2} = v_{2}} & \land~\textcolor{neg}{a_{3} \neq v_{3}} & \land~ & \land~ & \land~\textcolor{pos}{s=1} & \mathbf{(\times 2)}\\
 (12) & \textcolor{pos}{a_{1} = v_{1}} & \land~\textcolor{pos}{a_{2} = v_{2}} & \land~ & \land~ & \land~\textcolor{pos}{a_{5} = v_{5}} & \land~\textcolor{neg}{s=0} & \mathbf{(\times 3)}\\
 (13) & \textcolor{pos}{a_{1} = v_{1}} & \land~\textcolor{pos}{a_{2} = v_{2}} & \land~ & \land~\textcolor{neg}{a_{4} \neq v_{4}} & \land~\textcolor{pos}{a_{5} = v_{5}} & \land~\textcolor{neg}{s=0} & \\
 (14) & \textcolor{pos}{a_{1} = v_{1}} & \land~\textcolor{pos}{a_{2} = v_{2}} & \land~ & \land~ & \land~\textcolor{pos}{a_{5} = v_{5}} & \land~\textcolor{pos}{s=1} & \mathbf{(\times 2)}\\
 (15) & \textcolor{pos}{a_{1} = v_{1}} & \land~\textcolor{pos}{a_{2} = v_{2}} & \land~\textcolor{neg}{a_{3} \neq v_{3}} & \land~ & \land~\textcolor{pos}{a_{5} = v_{5}} & \land~\textcolor{pos}{s=1} & \mathbf{(\times 2)}\\
 (16) & \textcolor{pos}{a_{1} = v_{1}} & \land~\textcolor{pos}{a_{2} = v_{2}} & \land~ & \land~\textcolor{pos}{a_{4} = v_{4}} & \land~ & \land~\textcolor{neg}{s=0} & \mathbf{(\times 4)}\\
 (17) & \textcolor{pos}{a_{1} = v_{1}} & \land~\textcolor{pos}{a_{2} = v_{2}} & \land~\textcolor{neg}{a_{3} \neq v_{3}} & \land~\textcolor{pos}{a_{4} = v_{4}} & \land~ & \land~\textcolor{neg}{s=0} & \mathbf{(\times 4)}\\
 (18) & \textcolor{pos}{a_{1} = v_{1}} & \land~\textcolor{pos}{a_{2} = v_{2}} & \land~ & \land~\textcolor{pos}{a_{4} = v_{4}} & \land~ & \land~\textcolor{pos}{s=1} & \mathbf{(\times 2)}\\
 (19) & \textcolor{pos}{a_{1} = v_{1}} & \land~\textcolor{pos}{a_{2} = v_{2}} & \land~\textcolor{neg}{a_{3} \neq v_{3}} & \land~\textcolor{pos}{a_{4} = v_{4}} & \land~ & \land~\textcolor{pos}{s=1} & \vspace{\customspace}\\
 \multicolumn{3}{l}{\textbf{Other queries}} \\
 (20) &  & \land~ & \land~ & \land~ & \land~ & \land~\textcolor{neg}{s=0} & \\
 (21) & \textcolor{neg}{a_{1} \neq v_{1}} & \land~ & \land~ & \land~ & \land~\textcolor{pos}{a_{5} = v_{5}} & \\
 (22) &  & \land~\textcolor{neg}{a_{2} \neq v_{2}} & \land~ & \land~ & \land~\textcolor{neg}{a_{5} \neq v_{5}} & \mathbf{(\times 2)}\\
 (23) &  & \land~ & \land~\textcolor{neg}{a_{3} \neq v_{3}} & \land~\textcolor{neg}{a_{4} \neq v_{4}} & \land~ & \\
 (24) &  & \land~ & \land~ & \land~ & \land~\textcolor{pos}{a_{5} = v_{5}} & \land~\textcolor{neg}{s=0} & \\
 (25) &  & \land~ & \land~ & \land~\textcolor{pos}{a_{4} = v_{4}} & \land~\textcolor{pos}{a_{5} = v_{5}} & \\
 (26) & \textcolor{neg}{a_{1} \neq v_{1}} & \land~\textcolor{neg}{a_{2} \neq v_{2}} & \land~ & \land~ & \land~ & \land~\textcolor{neg}{s=0} & \\
 (27) & \textcolor{neg}{a_{1} \neq v_{1}} & \land~ & \land~ & \land~ & \land~\textcolor{neg}{a_{5} \neq v_{5}} & \land~\textcolor{pos}{s=1} & \\
 (28) & \textcolor{neg}{a_{1} \neq v_{1}} & \land~ & \land~ & \land~\textcolor{pos}{a_{4} = v_{4}} & \land~\textcolor{pos}{a_{5} = v_{5}} & \\
 (29) &  & \land~ & \land~\textcolor{neg}{a_{3} \neq v_{3}} & \land~ & \land~\textcolor{neg}{a_{5} \neq v_{5}} & \land~\textcolor{pos}{s=1} & \\
 (30) &  & \land~ & \land~\textcolor{neg}{a_{3} \neq v_{3}} & \land~ & \land~\textcolor{pos}{a_{5} = v_{5}} & \land~\textcolor{pos}{s=1} & \mathbf{(\times 4)}\\
 (31) &  & \land~ & \land~ & \land~\textcolor{neg}{a_{4} \neq v_{4}} & \land~\textcolor{neg}{a_{5} \neq v_{5}} & \land~\textcolor{pos}{s=1} & \mathbf{(\times 2)}\\
 (32) &  & \land~ & \land~ & \land~\textcolor{pos}{a_{4} = v_{4}} & \land~\textcolor{pos}{a_{5} = v_{5}} & \land~\textcolor{neg}{s=0} & \mathbf{(\times 2)}\\
 (33) &  & \land~ & \land~\textcolor{pos}{a_{3} = v_{3}} & \land~\textcolor{neg}{a_{4} \neq v_{4}} & \land~ & \land~\textcolor{neg}{s=0} & \\
 (34) &  & \land~\textcolor{pos}{a_{2} = v_{2}} & \land~\textcolor{neg}{a_{3} \neq v_{3}} & \land~\textcolor{pos}{a_{4} = v_{4}} & \land~ & \\
 (35) &  & \land~\textcolor{pos}{a_{2} = v_{2}} & \land~ & \land~\textcolor{pos}{a_{4} = v_{4}} & \land~ & \land~\textcolor{pos}{s=1} & \\
 (36) &  & \land~\textcolor{pos}{a_{2} = v_{2}} & \land~\textcolor{pos}{a_{3} = v_{3}} & \land~\textcolor{neg}{a_{4} \neq v_{4}} & \land~ & \\
 (37) &  & \land~\textcolor{pos}{a_{2} = v_{2}} & \land~\textcolor{pos}{a_{3} = v_{3}} & \land~ & \land~ & \land~\textcolor{neg}{s=0} & \\
 (38) & \textcolor{pos}{a_{1} = v_{1}} & \land~\textcolor{neg}{a_{2} \neq v_{2}} & \land~ & \land~\textcolor{neg}{a_{4} \neq v_{4}} & \land~ & \\
 (39) & \textcolor{pos}{a_{1} = v_{1}} & \land~\textcolor{neg}{a_{2} \neq v_{2}} & \land~ & \land~ & \land~ & \land~\textcolor{neg}{s=0} & \mathbf{(\times 2)}\\
 (40) & \textcolor{pos}{a_{1} = v_{1}} & \land~ & \land~\textcolor{neg}{a_{3} \neq v_{3}} & \land~ & \land~ & \land~\textcolor{neg}{s=0} & \\
 (41) & \textcolor{pos}{a_{1} = v_{1}} & \land~ & \land~ & \land~\textcolor{pos}{a_{4} = v_{4}} & \land~ & \land~\textcolor{pos}{s=1} & \\
 (42) & \textcolor{pos}{a_{1} = v_{1}} & \land~ & \land~\textcolor{pos}{a_{3} = v_{3}} & \land~ & \land~\textcolor{pos}{a_{5} = v_{5}} & \\
 (43) & \textcolor{neg}{a_{1} \neq v_{1}} & \land~\textcolor{neg}{a_{2} \neq v_{2}} & \land~ & \land~ & \land~\textcolor{pos}{a_{5} = v_{5}} & \land~\textcolor{neg}{s=0} & \\
 (44) & \textcolor{neg}{a_{1} \neq v_{1}} & \land~ & \land~ & \land~\textcolor{neg}{a_{4} \neq v_{4}} & \land~\textcolor{pos}{a_{5} = v_{5}} & \land~\textcolor{pos}{s=1} & \\
 (45) & \textcolor{neg}{a_{1} \neq v_{1}} & \land~\textcolor{pos}{a_{2} = v_{2}} & \land~ & \land~\textcolor{neg}{a_{4} \neq v_{4}} & \land~ & \land~\textcolor{neg}{s=0} & \\
 (46) & \textcolor{neg}{a_{1} \neq v_{1}} & \land~\textcolor{pos}{a_{2} = v_{2}} & \land~ & \land~\textcolor{pos}{a_{4} = v_{4}} & \land~\textcolor{neg}{a_{5} \neq v_{5}} & \\
 (47) & \textcolor{neg}{a_{1} \neq v_{1}} & \land~\textcolor{pos}{a_{2} = v_{2}} & \land~ & \land~\textcolor{pos}{a_{4} = v_{4}} & \land~ & \land~\textcolor{neg}{s=0} & \\
 (48) &  & \land~\textcolor{neg}{a_{2} \neq v_{2}} & \land~\textcolor{neg}{a_{3} \neq v_{3}} & \land~\textcolor{neg}{a_{4} \neq v_{4}} & \land~ & \land~\textcolor{neg}{s=0} & \\
 (49) &  & \land~\textcolor{pos}{a_{2} = v_{2}} & \land~\textcolor{neg}{a_{3} \neq v_{3}} & \land~ & \land~\textcolor{pos}{a_{5} = v_{5}} & \land~\textcolor{neg}{s=0} & \mathbf{(\times 3)}\\
 (50) &  & \land~\textcolor{pos}{a_{2} = v_{2}} & \land~\textcolor{neg}{a_{3} \neq v_{3}} & \land~\textcolor{pos}{a_{4} = v_{4}} & \land~ & \land~\textcolor{neg}{s=0} & \mathbf{(\times 2)}\\
 (51) & \textcolor{pos}{a_{1} = v_{1}} & \land~\textcolor{neg}{a_{2} \neq v_{2}} & \land~\textcolor{neg}{a_{3} \neq v_{3}} & \land~ & \land~ & \land~\textcolor{pos}{s=1} & \mathbf{(\times 2)}\\
 (52) & \textcolor{pos}{a_{1} = v_{1}} & \land~\textcolor{neg}{a_{2} \neq v_{2}} & \land~ & \land~ & \land~\textcolor{pos}{a_{5} = v_{5}} & \land~\textcolor{neg}{s=0} & \mathbf{(\times 2)}\\
 (53) & \textcolor{pos}{a_{1} = v_{1}} & \land~ & \land~\textcolor{pos}{a_{3} = v_{3}} & \land~ & \land~\textcolor{pos}{a_{5} = v_{5}} & \land~\textcolor{neg}{s=0} & \\
 (54) & \textcolor{pos}{a_{1} = v_{1}} & \land~\textcolor{pos}{a_{2} = v_{2}} & \land~ & \land~ & \land~\textcolor{neg}{a_{5} \neq v_{5}} & \land~\textcolor{neg}{s=0} & \\
 (55) & \textcolor{neg}{a_{1} \neq v_{1}} & \land~\textcolor{neg}{a_{2} \neq v_{2}} & \land~ & \land~\textcolor{neg}{a_{4} \neq v_{4}} & \land~\textcolor{neg}{a_{5} \neq v_{5}} & \land~\textcolor{pos}{s=1} & \mathbf{(\times 2)}\\
 (56) & \textcolor{neg}{a_{1} \neq v_{1}} & \land~ & \land~\textcolor{neg}{a_{3} \neq v_{3}} & \land~\textcolor{neg}{a_{4} \neq v_{4}} & \land~\textcolor{pos}{a_{5} = v_{5}} & \land~\textcolor{neg}{s=0} & \mathbf{(\times 3)}\\
 (57) & \textcolor{neg}{a_{1} \neq v_{1}} & \land~\textcolor{pos}{a_{2} = v_{2}} & \land~\textcolor{neg}{a_{3} \neq v_{3}} & \land~\textcolor{pos}{a_{4} = v_{4}} & \land~\textcolor{neg}{a_{5} \neq v_{5}} & \\
 (58) & \textcolor{pos}{a_{1} = v_{1}} & \land~\textcolor{neg}{a_{2} \neq v_{2}} & \land~\textcolor{neg}{a_{3} \neq v_{3}} & \land~ & \land~\textcolor{neg}{a_{5} \neq v_{5}} & \land~\textcolor{pos}{s=1} & \\
 (59) & \textcolor{pos}{a_{1} = v_{1}} & \land~\textcolor{neg}{a_{2} \neq v_{2}} & \land~\textcolor{neg}{a_{3} \neq v_{3}} & \land~\textcolor{pos}{a_{4} = v_{4}} & \land~ & \land~\textcolor{pos}{s=1} & \\
 (60) & \textcolor{pos}{a_{1} = v_{1}} & \land~ & \land~\textcolor{neg}{a_{3} \neq v_{3}} & \land~\textcolor{pos}{a_{4} = v_{4}} & \land~\textcolor{pos}{a_{5} = v_{5}} & \land~\textcolor{neg}{s=0} & \\
 (61) & \textcolor{pos}{a_{1} = v_{1}} & \land~\textcolor{pos}{a_{2} = v_{2}} & \land~ & \land~\textcolor{pos}{a_{4} = v_{4}} & \land~\textcolor{pos}{a_{5} = v_{5}} & \land~\textcolor{pos}{s=1} & \\
 (62) & \textcolor{pos}{a_{1} = v_{1}} & \land~\textcolor{pos}{a_{2} = v_{2}} & \land~\textcolor{pos}{a_{3} = v_{3}} & \land~ & \land~\textcolor{pos}{a_{5} = v_{5}} & \land~\textcolor{neg}{s=0} & \mathbf{(\times 2)}\\
 (63) & \textcolor{neg}{a_{1} \neq v_{1}} & \land~\textcolor{neg}{a_{2} \neq v_{2}} & \land~\textcolor{neg}{a_{3} \neq v_{3}} & \land~\textcolor{pos}{a_{4} = v_{4}} & \land~\textcolor{pos}{a_{5} = v_{5}} & \land~\textcolor{neg}{s=0} & \\
 (64) & \textcolor{neg}{a_{1} \neq v_{1}} & \land~\textcolor{neg}{a_{2} \neq v_{2}} & \land~\textcolor{pos}{a_{3} = v_{3}} & \land~\textcolor{pos}{a_{4} = v_{4}} & \land~\textcolor{neg}{a_{5} \neq v_{5}} & \land~\textcolor{neg}{s=0} & \\
 (65) & \textcolor{neg}{a_{1} \neq v_{1}} & \land~\textcolor{pos}{a_{2} = v_{2}} & \land~\textcolor{neg}{a_{3} \neq v_{3}} & \land~\textcolor{pos}{a_{4} = v_{4}} & \land~\textcolor{pos}{a_{5} = v_{5}} & \land~\textcolor{neg}{s=0} & \\
 (66) & \textcolor{pos}{a_{1} = v_{1}} & \land~\textcolor{neg}{a_{2} \neq v_{2}} & \land~\textcolor{neg}{a_{3} \neq v_{3}} & \land~\textcolor{neg}{a_{4} \neq v_{4}} & \land~\textcolor{pos}{a_{5} = v_{5}} & \land~\textcolor{neg}{s=0} & \\
 (67) & \textcolor{pos}{a_{1} = v_{1}} & \land~\textcolor{pos}{a_{2} = v_{2}} & \land~\textcolor{neg}{a_{3} \neq v_{3}} & \land~\textcolor{neg}{a_{4} \neq v_{4}} & \land~\textcolor{neg}{a_{5} \neq v_{5}} & \land~\textcolor{neg}{s=0} & \\
 (68) & \textcolor{pos}{a_{1} = v_{1}} & \land~\textcolor{pos}{a_{2} = v_{2}} & \land~\textcolor{neg}{a_{3} \neq v_{3}} & \land~\textcolor{pos}{a_{4} = v_{4}} & \land~\textcolor{neg}{a_{5} \neq v_{5}} & \land~\textcolor{pos}{s=1} & \\
\end{array}}
\]
\end{framed}

\begin{framed}
\noindent\textbf{Example of solution found against SimpleQBS($\tau=3, \sigma=4$) in the AUXILIARY scenario.}
\[
{\customsize
\begin{array}{llllllll}
 \multicolumn{3}{l}{\textbf{Difference queries}} \\
 (1) &  & \land~ & \land~ & \land~ & \land~\textcolor{pos}{a_{5} = v_{5}} & \land~\textcolor{pos}{s=1} & \mathbf{(\times 4)}\\
 (2) &  & \land~ & \land~\textcolor{neg}{a_{3} \neq v_{3}} & \land~ & \land~\textcolor{pos}{a_{5} = v_{5}} & \land~\textcolor{pos}{s=1} & \\
 (3) &  & \land~ & \land~\textcolor{pos}{a_{3} = v_{3}} & \land~ & \land~\textcolor{pos}{a_{5} = v_{5}} & \land~\textcolor{neg}{s=0} & \mathbf{(\times 2)}\\
 (4) & \textcolor{neg}{a_{1} \neq v_{1}} & \land~ & \land~\textcolor{pos}{a_{3} = v_{3}} & \land~ & \land~\textcolor{pos}{a_{5} = v_{5}} & \land~\textcolor{neg}{s=0} & \mathbf{(\times 2)}\\
 (5) &  & \land~ & \land~\textcolor{pos}{a_{3} = v_{3}} & \land~\textcolor{pos}{a_{4} = v_{4}} & \land~\textcolor{pos}{a_{5} = v_{5}} & \land~\textcolor{neg}{s=0} & \mathbf{(\times 2)}\\
 (6) & \textcolor{neg}{a_{1} \neq v_{1}} & \land~ & \land~\textcolor{pos}{a_{3} = v_{3}} & \land~\textcolor{pos}{a_{4} = v_{4}} & \land~\textcolor{pos}{a_{5} = v_{5}} & \land~\textcolor{neg}{s=0} & \mathbf{(\times 5)}\\
 (7) & \textcolor{pos}{a_{1} = v_{1}} & \land~\textcolor{pos}{a_{2} = v_{2}} & \land~ & \land~ & \land~\textcolor{pos}{a_{5} = v_{5}} & \land~\textcolor{pos}{s=1} & \\
 (8) & \textcolor{pos}{a_{1} = v_{1}} & \land~\textcolor{pos}{a_{2} = v_{2}} & \land~\textcolor{neg}{a_{3} \neq v_{3}} & \land~ & \land~\textcolor{pos}{a_{5} = v_{5}} & \land~\textcolor{pos}{s=1} & \\
 (9) & \textcolor{pos}{a_{1} = v_{1}} & \land~\textcolor{pos}{a_{2} = v_{2}} & \land~ & \land~\textcolor{neg}{a_{4} \neq v_{4}} & \land~\textcolor{pos}{a_{5} = v_{5}} & \land~\textcolor{pos}{s=1} & \\
 (10) & \textcolor{pos}{a_{1} = v_{1}} & \land~\textcolor{pos}{a_{2} = v_{2}} & \land~ & \land~\textcolor{pos}{a_{4} = v_{4}} & \land~ & \land~\textcolor{pos}{s=1} & \\
 (11) & \textcolor{pos}{a_{1} = v_{1}} & \land~\textcolor{pos}{a_{2} = v_{2}} & \land~\textcolor{neg}{a_{3} \neq v_{3}} & \land~\textcolor{pos}{a_{4} = v_{4}} & \land~ & \land~\textcolor{pos}{s=1} & \\
 (12) &  & \land~\textcolor{pos}{a_{2} = v_{2}} & \land~\textcolor{pos}{a_{3} = v_{3}} & \land~\textcolor{pos}{a_{4} = v_{4}} & \land~\textcolor{pos}{a_{5} = v_{5}} & \land~\textcolor{pos}{s=1} & \mathbf{(\times 7)}\\
 (13) & \textcolor{neg}{a_{1} \neq v_{1}} & \land~\textcolor{pos}{a_{2} = v_{2}} & \land~\textcolor{pos}{a_{3} = v_{3}} & \land~\textcolor{pos}{a_{4} = v_{4}} & \land~\textcolor{pos}{a_{5} = v_{5}} & \land~\textcolor{pos}{s=1} & \mathbf{(\times 5)}\vspace{\customspace}\\
 \multicolumn{3}{l}{\textbf{Other queries}} \\
 (14) &  & \land~\textcolor{neg}{a_{2} \neq v_{2}} & \land~ & \land~\textcolor{neg}{a_{4} \neq v_{4}} & \land~ & \\
 (15) &  & \land~\textcolor{neg}{a_{2} \neq v_{2}} & \land~ & \land~ & \land~\textcolor{neg}{a_{5} \neq v_{5}} & \\
 (16) &  & \land~\textcolor{neg}{a_{2} \neq v_{2}} & \land~ & \land~ & \land~ & \land~\textcolor{neg}{s=0} & \\
 (17) &  & \land~\textcolor{neg}{a_{2} \neq v_{2}} & \land~ & \land~ & \land~\textcolor{pos}{a_{5} = v_{5}} & \\
 (18) &  & \land~ & \land~\textcolor{neg}{a_{3} \neq v_{3}} & \land~ & \land~\textcolor{neg}{a_{5} \neq v_{5}} & \\
 (19) &  & \land~ & \land~ & \land~\textcolor{pos}{a_{4} = v_{4}} & \land~\textcolor{neg}{a_{5} \neq v_{5}} & \\
 (20) &  & \land~\textcolor{pos}{a_{2} = v_{2}} & \land~ & \land~ & \land~ & \land~\textcolor{neg}{s=0} & \\
 (21) & \textcolor{pos}{a_{1} = v_{1}} & \land~ & \land~\textcolor{pos}{a_{3} = v_{3}} & \land~ & \land~ & \\
 (22) & \textcolor{neg}{a_{1} \neq v_{1}} & \land~\textcolor{pos}{a_{2} = v_{2}} & \land~\textcolor{pos}{a_{3} = v_{3}} & \land~ & \land~ & \\
 (23) &  & \land~\textcolor{neg}{a_{2} \neq v_{2}} & \land~\textcolor{neg}{a_{3} \neq v_{3}} & \land~ & \land~ & \land~\textcolor{neg}{s=0} & \\
 (24) &  & \land~\textcolor{neg}{a_{2} \neq v_{2}} & \land~\textcolor{neg}{a_{3} \neq v_{3}} & \land~\textcolor{pos}{a_{4} = v_{4}} & \land~ & \\
 (25) &  & \land~\textcolor{neg}{a_{2} \neq v_{2}} & \land~ & \land~\textcolor{neg}{a_{4} \neq v_{4}} & \land~\textcolor{pos}{a_{5} = v_{5}} & \\
 (26) &  & \land~\textcolor{neg}{a_{2} \neq v_{2}} & \land~ & \land~\textcolor{pos}{a_{4} = v_{4}} & \land~ & \land~\textcolor{pos}{s=1} & \\
 (27) &  & \land~ & \land~\textcolor{neg}{a_{3} \neq v_{3}} & \land~\textcolor{neg}{a_{4} \neq v_{4}} & \land~\textcolor{neg}{a_{5} \neq v_{5}} & \\
 (28) &  & \land~ & \land~\textcolor{neg}{a_{3} \neq v_{3}} & \land~\textcolor{pos}{a_{4} = v_{4}} & \land~ & \land~\textcolor{pos}{s=1} & \\
 (29) &  & \land~ & \land~ & \land~\textcolor{neg}{a_{4} \neq v_{4}} & \land~\textcolor{neg}{a_{5} \neq v_{5}} & \land~\textcolor{pos}{s=1} & \\
 (30) &  & \land~ & \land~ & \land~\textcolor{pos}{a_{4} = v_{4}} & \land~\textcolor{pos}{a_{5} = v_{5}} & \land~\textcolor{pos}{s=1} & \\
 (31) & \textcolor{pos}{a_{1} = v_{1}} & \land~\textcolor{neg}{a_{2} \neq v_{2}} & \land~ & \land~ & \land~ & \land~\textcolor{pos}{s=1} & \mathbf{(\times 4)}\\
 (32) & \textcolor{pos}{a_{1} = v_{1}} & \land~\textcolor{pos}{a_{2} = v_{2}} & \land~ & \land~ & \land~ & \land~\textcolor{pos}{s=1} & \\
 (33) & \textcolor{neg}{a_{1} \neq v_{1}} & \land~\textcolor{neg}{a_{2} \neq v_{2}} & \land~\textcolor{pos}{a_{3} = v_{3}} & \land~ & \land~\textcolor{neg}{a_{5} \neq v_{5}} & \\
 (34) & \textcolor{neg}{a_{1} \neq v_{1}} & \land~ & \land~ & \land~\textcolor{neg}{a_{4} \neq v_{4}} & \land~\textcolor{pos}{a_{5} = v_{5}} & \land~\textcolor{pos}{s=1} & \\
 (35) & \textcolor{neg}{a_{1} \neq v_{1}} & \land~ & \land~ & \land~\textcolor{pos}{a_{4} = v_{4}} & \land~\textcolor{neg}{a_{5} \neq v_{5}} & \land~\textcolor{pos}{s=1} & \\
 (36) & \textcolor{neg}{a_{1} \neq v_{1}} & \land~ & \land~\textcolor{pos}{a_{3} = v_{3}} & \land~ & \land~\textcolor{pos}{a_{5} = v_{5}} & \land~\textcolor{pos}{s=1} & \\
 (37) & \textcolor{neg}{a_{1} \neq v_{1}} & \land~ & \land~\textcolor{pos}{a_{3} = v_{3}} & \land~\textcolor{pos}{a_{4} = v_{4}} & \land~ & \land~\textcolor{neg}{s=0} & \\
 (38) & \textcolor{neg}{a_{1} \neq v_{1}} & \land~\textcolor{pos}{a_{2} = v_{2}} & \land~ & \land~ & \land~\textcolor{pos}{a_{5} = v_{5}} & \land~\textcolor{neg}{s=0} & \\
 (39) &  & \land~\textcolor{neg}{a_{2} \neq v_{2}} & \land~\textcolor{neg}{a_{3} \neq v_{3}} & \land~\textcolor{pos}{a_{4} = v_{4}} & \land~\textcolor{neg}{a_{5} \neq v_{5}} & \\
 (40) &  & \land~\textcolor{neg}{a_{2} \neq v_{2}} & \land~ & \land~\textcolor{neg}{a_{4} \neq v_{4}} & \land~\textcolor{neg}{a_{5} \neq v_{5}} & \land~\textcolor{pos}{s=1} & \\
 (41) &  & \land~\textcolor{neg}{a_{2} \neq v_{2}} & \land~ & \land~\textcolor{neg}{a_{4} \neq v_{4}} & \land~\textcolor{pos}{a_{5} = v_{5}} & \land~\textcolor{pos}{s=1} & \\
 (42) &  & \land~\textcolor{neg}{a_{2} \neq v_{2}} & \land~\textcolor{pos}{a_{3} = v_{3}} & \land~ & \land~\textcolor{neg}{a_{5} \neq v_{5}} & \land~\textcolor{neg}{s=0} & \\
 (43) &  & \land~ & \land~\textcolor{neg}{a_{3} \neq v_{3}} & \land~\textcolor{neg}{a_{4} \neq v_{4}} & \land~\textcolor{neg}{a_{5} \neq v_{5}} & \land~\textcolor{pos}{s=1} & \\
 (44) &  & \land~ & \land~\textcolor{neg}{a_{3} \neq v_{3}} & \land~\textcolor{pos}{a_{4} = v_{4}} & \land~\textcolor{neg}{a_{5} \neq v_{5}} & \land~\textcolor{pos}{s=1} & \mathbf{(\times 2)}\\
 (45) &  & \land~\textcolor{pos}{a_{2} = v_{2}} & \land~\textcolor{neg}{a_{3} \neq v_{3}} & \land~ & \land~\textcolor{neg}{a_{5} \neq v_{5}} & \land~\textcolor{pos}{s=1} & \\
 (46) &  & \land~\textcolor{pos}{a_{2} = v_{2}} & \land~\textcolor{neg}{a_{3} \neq v_{3}} & \land~\textcolor{pos}{a_{4} = v_{4}} & \land~\textcolor{pos}{a_{5} = v_{5}} & \\
 (47) &  & \land~\textcolor{pos}{a_{2} = v_{2}} & \land~\textcolor{pos}{a_{3} = v_{3}} & \land~ & \land~\textcolor{pos}{a_{5} = v_{5}} & \land~\textcolor{pos}{s=1} & \\
 (48) &  & \land~\textcolor{pos}{a_{2} = v_{2}} & \land~\textcolor{pos}{a_{3} = v_{3}} & \land~\textcolor{pos}{a_{4} = v_{4}} & \land~ & \land~\textcolor{pos}{s=1} & \mathbf{(\times 2)}\\
 (49) &  & \land~\textcolor{pos}{a_{2} = v_{2}} & \land~\textcolor{pos}{a_{3} = v_{3}} & \land~\textcolor{pos}{a_{4} = v_{4}} & \land~\textcolor{pos}{a_{5} = v_{5}} & \\
 (50) & \textcolor{pos}{a_{1} = v_{1}} & \land~\textcolor{neg}{a_{2} \neq v_{2}} & \land~ & \land~ & \land~\textcolor{pos}{a_{5} = v_{5}} & \land~\textcolor{pos}{s=1} & \\
 (51) & \textcolor{pos}{a_{1} = v_{1}} & \land~\textcolor{neg}{a_{2} \neq v_{2}} & \land~\textcolor{pos}{a_{3} = v_{3}} & \land~ & \land~\textcolor{pos}{a_{5} = v_{5}} & & \mathbf{(\times 2)}\\
 (52) & \textcolor{pos}{a_{1} = v_{1}} & \land~ & \land~ & \land~\textcolor{pos}{a_{4} = v_{4}} & \land~\textcolor{pos}{a_{5} = v_{5}} & \land~\textcolor{neg}{s=0} & \mathbf{(\times 2)}\\
 (53) & \textcolor{neg}{a_{1} \neq v_{1}} & \land~ & \land~\textcolor{pos}{a_{3} = v_{3}} & \land~\textcolor{neg}{a_{4} \neq v_{4}} & \land~\textcolor{pos}{a_{5} = v_{5}} & \land~\textcolor{neg}{s=0} & \\
 (54) & \textcolor{neg}{a_{1} \neq v_{1}} & \land~\textcolor{pos}{a_{2} = v_{2}} & \land~\textcolor{pos}{a_{3} = v_{3}} & \land~ & \land~\textcolor{pos}{a_{5} = v_{5}} & \land~\textcolor{neg}{s=0} & \\
 (55) &  & \land~\textcolor{neg}{a_{2} \neq v_{2}} & \land~\textcolor{neg}{a_{3} \neq v_{3}} & \land~\textcolor{neg}{a_{4} \neq v_{4}} & \land~\textcolor{pos}{a_{5} = v_{5}} & \land~\textcolor{pos}{s=1} & \mathbf{(\times 3)}\\
 (56) &  & \land~\textcolor{neg}{a_{2} \neq v_{2}} & \land~\textcolor{neg}{a_{3} \neq v_{3}} & \land~\textcolor{pos}{a_{4} = v_{4}} & \land~\textcolor{neg}{a_{5} \neq v_{5}} & \land~\textcolor{neg}{s=0} & \\
 (57) &  & \land~\textcolor{neg}{a_{2} \neq v_{2}} & \land~\textcolor{neg}{a_{3} \neq v_{3}} & \land~\textcolor{pos}{a_{4} = v_{4}} & \land~\textcolor{neg}{a_{5} \neq v_{5}} & \land~\textcolor{pos}{s=1} & \\
 (58) &  & \land~\textcolor{pos}{a_{2} = v_{2}} & \land~\textcolor{neg}{a_{3} \neq v_{3}} & \land~\textcolor{neg}{a_{4} \neq v_{4}} & \land~\textcolor{pos}{a_{5} = v_{5}} & \land~\textcolor{neg}{s=0} & \\
 (59) &  & \land~\textcolor{pos}{a_{2} = v_{2}} & \land~\textcolor{pos}{a_{3} = v_{3}} & \land~\textcolor{pos}{a_{4} = v_{4}} & \land~\textcolor{neg}{a_{5} \neq v_{5}} & \land~\textcolor{neg}{s=0} & \\
 (60) &  & \land~\textcolor{pos}{a_{2} = v_{2}} & \land~\textcolor{pos}{a_{3} = v_{3}} & \land~\textcolor{pos}{a_{4} = v_{4}} & \land~\textcolor{pos}{a_{5} = v_{5}} & \land~\textcolor{neg}{s=0} & \mathbf{(\times 3)}\\
 (61) & \textcolor{pos}{a_{1} = v_{1}} & \land~\textcolor{neg}{a_{2} \neq v_{2}} & \land~\textcolor{pos}{a_{3} = v_{3}} & \land~\textcolor{pos}{a_{4} = v_{4}} & \land~\textcolor{pos}{a_{5} = v_{5}} & \\
 (62) & \textcolor{pos}{a_{1} = v_{1}} & \land~\textcolor{pos}{a_{2} = v_{2}} & \land~\textcolor{pos}{a_{3} = v_{3}} & \land~\textcolor{neg}{a_{4} \neq v_{4}} & \land~ & \land~\textcolor{pos}{s=1} & \\
 (63) & \textcolor{pos}{a_{1} = v_{1}} & \land~\textcolor{pos}{a_{2} = v_{2}} & \land~\textcolor{pos}{a_{3} = v_{3}} & \land~\textcolor{pos}{a_{4} = v_{4}} & \land~\textcolor{pos}{a_{5} = v_{5}} & \\
 (64) & \textcolor{pos}{a_{1} = v_{1}} & \land~\textcolor{neg}{a_{2} \neq v_{2}} & \land~\textcolor{neg}{a_{3} \neq v_{3}} & \land~\textcolor{pos}{a_{4} = v_{4}} & \land~\textcolor{neg}{a_{5} \neq v_{5}} & \land~\textcolor{neg}{s=0} & \mathbf{(\times 2)}\\
 (65) & \textcolor{pos}{a_{1} = v_{1}} & \land~\textcolor{pos}{a_{2} = v_{2}} & \land~\textcolor{neg}{a_{3} \neq v_{3}} & \land~\textcolor{pos}{a_{4} = v_{4}} & \land~\textcolor{neg}{a_{5} \neq v_{5}} & \land~\textcolor{neg}{s=0} & \\
 (66) & \textcolor{pos}{a_{1} = v_{1}} & \land~\textcolor{pos}{a_{2} = v_{2}} & \land~\textcolor{neg}{a_{3} \neq v_{3}} & \land~\textcolor{pos}{a_{4} = v_{4}} & \land~\textcolor{pos}{a_{5} = v_{5}} & \land~\textcolor{neg}{s=0} & \\
 (67) & \textcolor{pos}{a_{1} = v_{1}} & \land~\textcolor{pos}{a_{2} = v_{2}} & \land~\textcolor{pos}{a_{3} = v_{3}} & \land~\textcolor{pos}{a_{4} = v_{4}} & \land~\textcolor{neg}{a_{5} \neq v_{5}} & \land~\textcolor{neg}{s=0} & \\
 (68) & \textcolor{pos}{a_{1} = v_{1}} & \land~\textcolor{pos}{a_{2} = v_{2}} & \land~\textcolor{pos}{a_{3} = v_{3}} & \land~\textcolor{pos}{a_{4} = v_{4}} & \land~\textcolor{pos}{a_{5} = v_{5}} & \land~\textcolor{neg}{s=0} & \\
\end{array}}
\]
\end{framed}

\end{document}